\renewcommand{\@biblabel}[1]{[#1]\hfill}
\newtheorem{theorem}{Theorem}
\newtheorem{definition}{Definition}
\newtheorem{lemma}{Lemma}[theorem]
\newtheorem{corollary}{Corollary}
\definecolor{ForestGreen}{rgb}{0.1333,0.5451,0.1333}
\numberwithin{equation}{section}
\newtheorem{claim}{Claim}
\newtheorem{fact}{Fact}[theorem]
\newcommand{\eat}[1]{}
\newcommand{\hide}[1]{{\color{red}Contents hidden!}}
\newcommand{\ones}{\mathbf 1}
\newcommand{\reals}{\mathbb{R}}
\newcommand{\symm}{\mathbb{S}}  
\newcommand{\diag}{\mathop{\mathbf{diag}}}
\newcommand{\Prob}{\mathop{\mathbf{Prob}}}
\newcommand{\nnz}[1]{\mathop{\mathbf{nnz}}(#1)}
\DeclareMathOperator{\E}{\mathbf{E}}
\DeclareMathOperator{\Var}{\mathbf{Var}}
\newcommand{\N}{\mathcal{N}} 
\DeclareMathOperator*{\argmin}{\arg\!\min}
\newcommand{\argmax}{\mathop{\mathrm{ argmax}}}
\newcommand{\dom}{\mathop{\mathbf{dom}}} 
\newcommand{\intr}{\mathop{\mathbf{int}}}
\newcommand{\compresslist}{ 
\setlength{\itemsep}{1pt}
\setlength{\parskip}{0pt}
\setlength{\parsep}{0pt}
}
\newcommand{\D}{\mathcal{D}}
\newcommand{\Ord}{\mathcal{O}}
\newcommand{\X}{\mathcal{X}}
\newcommand{\opnorm}[1]{\norm{#1}_{\textrm{op}}}
\newcommand{\nnorm}[1]{\norm{#1}_{\mathrm{nuc}}}
\newcommand{\dnorm}[1]{{\vert\kern-0.25ex\vert\kern-0.25ex\vert #1 
    \vert\kern-0.25ex\vert\kern-0.25ex\vert}}
\newcommand{\Ndnorm}[1]{{\vert\kern-0.25ex\vert\kern-0.25ex\vert #1 \vert\kern-0.25ex\vert\kern-0.25ex\vert}}
\newcommand{\zeros}{\mathbf{0}}
\newcommand{\hatD}{\widehat{D}}
\def\ALG@special@indent{%
    \ifdim\ALG@thistlm=0pt\relax
        \hskip-\leftmargin
    \else
        \hskip\ALG@thistlm
    \fi
}
\newcommand{\NoIndentInAlg}[1]{\item[]\noindent\ALG@special@indent \ #1}
\newcommand{\inner}[2]{\left\langle {#1} , {#2} \right\rangle}
\providecommand{\norm}[1][]{$\left\vert\kern-0.25ex\left\vert {}\cdot{} \right\vert\kern-0.25ex\right\vert \ifx\\#1\\\else\left\vert\kern-0.25ex\left\vert#1\right\vert\kern-0.25ex\right\vert\fi$}
\providecommand{\abs}[1]{$\vert #1 \vert$}
\newcommand{\twopartdef}[4]
{
	\left\{
		\begin{array}{lll}
			#1 & \mbox{if } #2 \\
			#3 & \mbox{ } #4 
		\end{array}
	\right.
}
\newcommand\numberthis{\addtocounter{equation}{1}\tag{\theequation}}
\global\long\def\code#1{\textbf{#1}}
\global\long\def\jl{\mathrm{\code{RandProj}}}
\global\long\def\poly{\mathrm{\code{TaylorExp}}}
\global\long\def\cheby{\mathrm{\code{ChebyExp}}}
\global\long\def\polyg{\mathrm{\code{InvSqrt}}}
\newcommand{\Tin}{T_{\textrm{inner}}}
\newcommand{\Tout}{T_{\textrm{outer}}}
\global\long\def\UpdateEst{\mathbf{\code{UpdateEstimator}}}
\global\long\def\estthetaone{\mathbf{\code{Estimator1}}}
\newcommand{\lr}[1]{\left( #1 \right)}
\newcommand{\PSDSet}[0]{\mathbb{S}_{\geq 0}}
\newcommand{\tcheby}{\mathrm{T}_{\mathrm{Cheby}}}
\newcommand{\dcheby}{\delta_{\mathrm{Cheby}}}
\newcommand{\tjl}{\mathrm{T}_{\mathrm{jl}}}
\newcommand{\dexp}{\delta_{\mathrm{exp}}}
\newcommand{\btonei}{b_{1_i}}
\newcommand{\bttwoi}{b_{2_i}}
\newcommand{\testjl}{\mathrm{T}_{{\mathrm{est}}_{\mathrm{jl}}}}
\newcommand{\testinvsqrt}{\mathrm{T}_{{\mathrm{est}}_{\mathrm{isq}}}}
\newcommand\encircle[1]{%
  \tikz[baseline=(X.base)] 
    \node (X) [draw, shape=circle, inner sep=0] {\strut #1};}
\newcommand*\numcircledtikz[1]{\tikz[baseline=(char.base)]{\node[shape=circle,draw,inner sep=1.2pt] (char) {#1};}}
\global\long\def\defeq{\stackrel{\mathrm{{\scriptscriptstyle def}}}{=}}
\numberwithin{equation}{section}
\newcommand{\swati}[1]{\textcolor{black}{#1}}
\title{An $\widetilde{\Ord}(m/\varepsilon^{3.5})$-Cost Algorithm for Semidefinite Programs with Diagonal Constraints }
\author{
			Yin Tat Lee\thanks{\texttt{yintat@uw.edu}. University of Washington.}
			\and
			Swati Padmanabhan\thanks{\texttt{pswati@uw.edu}. University of Washington.}
}
\begin{document}
\maketitle
\begin{abstract}
We provide a first-order algorithm for semidefinite programs (SDPs) with diagonal constraints on the matrix variable. Our algorithm outputs an $\varepsilon$-optimal solution with a run time of $\widetilde{\Ord}(m/\varepsilon^{3.5})$, where $m$ is the number of non-zero entries in the cost matrix. This improves upon the previous best run time of $\widetilde{\Ord}(m/\varepsilon^{4.5})$ by \cite{AroraKale}. As a corollary of our result, given an instance of the Max-Cut problem with $n$ vertices and $m \gg n$ edges, our algorithm returns a $(1 - \varepsilon)\alpha_{GW}$ cut in the faster time of $\widetilde{\Ord}(m/\varepsilon^{3.5})$, where $\alpha_{GW} \approx 0.878567$ is the approximation ratio by \cite{gw}. Our key technical contribution is to combine an approximate variant of the Arora-Kale framework of mirror descent for SDPs with the idea of trading off exact computations in every iteration for variance-reduced estimations in most iterations, only periodically resetting the accumulated error with exact computations. This idea, along with the constructed estimator, are of possible independent interest for other problems that use the mirror descent framework. 
\end{abstract}

\section{Introduction} Consider the SDP maximizing $ C\bullet X \defeq \Tr (C X)$ over the set of $n \times n$ positive semidefinite matrices with every diagonal entry bounded by a constant:
\begin{equation}
\begin{aligned}
\label[optn]{prob}
\max C\bullet X \text{ subject to } X \succeq 0, X_{ii} \leq 1 \text{ for all } i \in [n]. 
\end{aligned}
\end{equation} We seek a matrix $\widetilde{X}^* \succeq 0$ with  $\widetilde{X}^*_{ii} \leq 1$ for all $i\in[n]$ satisfying $C\bullet \widetilde{X}^* \geq C\bullet X^* - \varepsilon \sum_{i, j} \abs{C_{ij}}$, where $X^*$ is an optimal solution of (\ref{prob}). \swati{This is not an $\varepsilon$-multiplicative guarantee ($C\bullet \widetilde{X}^* \geq C\bullet X^*(1-\varepsilon)$), but a slightly weaker one, since $\sum_{i, j} |C_{ij}| \geq C\bullet X^*$. A multiplicative guarantee is not always easy to provide; indeed, many classical optimization algorithms also provide a guarantee only additive in some quantity that bounds from above the difference of the function values between the initial and optimal points. For example, gradient descent on an $L$-smooth convex function $f$ over a set with diameter $D$ returns, after $k$ iterations, a point $x_{k}$ such that $f(x_{k})-f(x^{*})\leq O(LD^2 k^{-1})$, where $f(x_0) - f(x^*) \leq O(LD^2)$.}

To solve (\ref{prob}) as per the above accuracy criterion, it suffices to solve (\ref{final-prob}): 
\begin{equation}
\begin{aligned}
\label[optn]{final-prob}
\min f(X) \defeq - \widehat{C} \bullet X  + \sum_{i =1}^n (X_{ii}-\rho_i)^{+}, \text{ subject to } X\succeq 0.  
\end{aligned} 
\end{equation} This problem is derived from (\ref{prob}) by promoting the diagonal constraints to the objective and appropriately scaling $C$ to $\widehat{C} \defeq \diag(1/\sqrt{\rho}) C \diag(1/\sqrt{\rho})$, where $\rho \in \reals^n$ such that $\rho_i =  \sum_{j\in [n]} \abs{C_{ij}}$. By rescaling $C_{ij} = n C_{ij}/\sum_{i, j} |C_{ij}|$, we assume $\sum_{i\in[n]}\rho_i = n$. Lemma~\ref{lem-dual-var-range} gives a solution of (\ref{prob}) from a solution of (\ref{final-prob}). 

For (\ref{prob}), \cite{AroraKale} have the previous best run time linear in $m \defeq \nnz{C}$, the size of the input. Though there exist algorithms with better dependence on $\varepsilon$, their dependence on $n$ is superlinear, as we describe in Section~\ref{sec-related-work}. In this paper, we operate in the regime of moderate $\varepsilon$ and large $n$, focusing on first-order methods. 

\cite{AroraKale} use the matrix multiplicative weights (MMW) update, which can be interpreted as mirror descent in the nuclear norm\footnote{The nuclear norm of a matrix $X \in \reals^{m \times n}$ is   the sum of its singular values: $\norm{X}_{\textrm{nuc}} \defeq \sum_{i=1}^{\min(m, n)} \sigma_i(X)$.}, using the negative entropy function, $\Phi(X) = X\bullet \log X$, over the scaled simplex, $\D = \{X: X\succeq 0, \Tr X = n\}$, as the mirror map. Their iterates at iteration $t$ are given by  \[X^{(t)} = n \frac{\exp(Y^{(t)})}{\Tr \exp(Y^{(t)})}, \, \text{ where } Y^{(t)} = \sum_{s = 1}^{t-1} -\eta \nabla f (X^{(s)}),\numberthis\label{ak-grad} \] with step size $\eta = \Ord(\varepsilon)$ and gradient $\nabla f(M) = \diag(\ones_{M \geq \rho}) - \widehat{C}$. Computing this gradient entails only comparing the diagonal entries of the current iterate with a fixed vector. Therefore, the na\"{i}ve computational cost of this method is dominated by  $\Omega(n^\omega)$ for the matrix exponentiation \cite{PanC99}, prohibitively expensive for a large problem dimension. \cite{AroraKale} circumvent this by \textit{approximating} the diagonal entries of the matrix exponential. Therefore, their overall cost is composed of the following three parts: (1) mirror descent requiring $\Ord(1/\varepsilon^2)$ iterations to converge, (2) degree $\Ord(1/\varepsilon)$ Taylor approximation of the matrix exponential, each matrix-vector product costing $\Ord(m)$, and (3) $\Ord(1/\varepsilon^2)$ random projections \cite{jl} to estimate the diagonal entries of the matrix exponential; combined, these give a run time of $\widetilde{\Ord}(m/\varepsilon^{5})$, which, \cite{FTCL} observe, can be sped up to $\Ord(m/\varepsilon^{4.5})$ by using Chebyshev (instead of Taylor) approximation of matrix exponentials (see \cite{SachdevaVishnoi}).

\paragraph{Our contribution.} In this work, we solve (\ref{prob}) with a run time of $\widetilde{\Ord}(m/\varepsilon^{3.5})$, thus speeding up the current best run time for this problem. Our result (formally stated in Theorem~\ref{thm-main-result}) is effected by careful technical work that incorporates variance-reduced estimators and fast products of matrix exponentials with vectors into the Arora-Kale framework of mirror descent for SDPs. We use the generalized negative entropy, $\Phi(X) = X \bullet \log (X) - \Tr X$, as our mirror map, and our primary high-level idea is the following: \textit{instead of exactly computing the primal iterate in each iteration, we frequently approximate it at a low accuracy (to reduce the cost) and infrequently at a high accuracy (to ``reset'' the error resulting from approximation)}. This idea is inspired by recent variance-reduction methods \cite{Shalev-Shwartz-sdca, svrg, saga, var_red_proj_free, Schmidt-sag}. The periodic high-accuracy computations and small bias and variance of estimators in the low-accuracy computations ensure sufficient closeness, in the appropriate norm, of the estimated iterates to the true ones,  which, by the convergence guarantee of approximate mirror descent, leads to an $\varepsilon$-optimal solution. Making this variance-reduction work in the MMW setting requires several technical ideas, as follows.

We introduce the technical idea of expanding the domain of our mirror map by a polylogarithmic factor. Due to the expanded domain and our choice of the mirror map, the gradient step of mirror descent falls in the \textit{interior} of this domain. The upshot of this is that the primal iterate is related to the dual via simply a matrix exponential, with no trace normalization as in Equation~\ref{ak-grad}. Thus, the quantity for which we require an estimator is greatly simplified. Drawing on the observation of \cite{AroraKale} that the gradient uses only the diagonal entries of the primal iterate, we build an estimator, with a small bias and variance, for the change in diagonal entries of the (dual) matrix exponential. We also prove the strong convexity parameter of our mirror map on the expanded domain by confecting classical results from convex analysis in a novel way. Due to the ubiquity of the MMW framework in optimization, efficient algorithms for SDPs, balanced separators, Ramanujan sparsifiers, packing/covering, and machine learning, we anticipate that our technical contributions will be useful for problems that hinge on the MMW foundation. 

\paragraph{Applications.} When $C$ is a graph Laplacian, (\ref{prob}) is the SDP relaxation of the Max-Cut problem \cite{gw}. An NP-complete problem \cite{Karp72}, Max-Cut has seen widespread utility in circuit design \cite{chen-kajitani-chan}, statistical physics \cite{Barahona-statphy}, semi-supervised learning \cite{WangJebaraChang}, and phase recovery \cite{Waldspurger2015}. Another instance of (\ref{prob}) is max-norm regularization \cite{JaggiThesis}, a convex surrogate for rank minimization \cite{maxnorm-srebro-props} enforcing simplicity in modeling observations \cite{FHBrankminapp}. SDPs of the form of (\ref{prob}) have also found applications in community detection \cite{abbe2015exact, guedon2016community, montanari2016semidefinite} and as relaxations to the maximum-likelihood estimator in the group synchronization problem \cite{singer2011three, bandeira2014multireference}. 

\subsection{Related work}\label{sec-related-work}
We describe in this section previous work on (\ref{prob}) using first-order methods, other than that of \cite{AroraKale}. Of note is that most papers below solve problems more general than (\ref{prob}), and the run times we mention occur when specialized to (\ref{prob}). 

\paragraph{Saddle-point formulation.} Since any SDP can be instantiated as an online convex optimization problem, we  apply to our setting some notable results from this area. To do so, we first reduce (\ref{prob}) to a feasibility problem  following the approach of \cite{AroraHK05}. Recall our assumption that $\sum_{i, j} |C_{ij}| = n$. The facts $X^*\succeq 0$ and $X^*_{ii} \leq 1$ for $i\in [n]$ imply $|X^*_{ij}|^2 \leq X^*_{ii} X^*_{jj} \leq 1$, which in turn bounds the optimum from above as $OPT = \sum_{i, j} C_{ij} X^*_{ij} \leq \sum_{i, j} |C_{ij}| |X^*_{ij}| \leq n$. We can also bound the optimum from below by choosing $X$ to be the zero matrix, thus bounding $OPT$ with $\lambda \in [0, n]$. Let $A_0 = \frac{1}{\lambda}C$, $b_0 = 1$, $A_i = -e_i e_i^\top$, and $b_i = -1$ for $i \in [n]$. Therefore, solving (\ref{prob}) requires, for each guess of $\lambda$ (obtained via a binary search over its range), solving the feasibility problem:  
\begin{equation}
\begin{aligned}
\label[optn]{feas-prob}
\text{Find $Z \in \PSDSet^n$ subject to } A_i \bullet Z - b_i \geq 0, \text{ for all $i \in \{0, n\}$},  \Tr Z \leq n. 
\end{aligned}
\end{equation} 
To do so, we leverage the saddle-point problem studied by \cite{GarberH16}, \[\max_{X \in \PSDSet^n, \Tr X = 1} \min_{p\in \reals^m_{\geq 0}, \norm{p}_1 = 1} \sum_{i=1}^m p_i (A_i \bullet X - b_i).\numberthis\label[optn]{GHprob}\] If the optimum of (\ref{GHprob}) is non-negative, solving it up to an additive accuracy of $\varepsilon$ is equivalent to finding a solution in the spectrahedron that satisfies all $A_i \bullet X -b_i \geq 0$ upto an additive error of $\varepsilon$. For (\ref{feas-prob}), this means the solution for (\ref{GHprob}) satisfies $X_{ii} \approx 1/n \pm \varepsilon$. However, due to the  requirement of $X_{ii} \approx 1 \pm \varepsilon$ in (\ref{prob}), the accuracy parameter of (\ref{GHprob}) must be $\varepsilon/n$. This causes the run time of \cite{GarberH16} for (\ref{prob}) to be $\widetilde{\Ord}(m (n/\varepsilon)^{2.5})$. By the same reasoning, when solving (\ref{prob}) to $\varepsilon$ multiplicative accuracy, the work of \cite{BaesBN13}, which uses a randomized Mirror-Prox algorithm, incurs a cost of $\widetilde{\Ord}(n^5/\varepsilon^3)$, and the recent algorithms of Follow the Compressed Leader by \cite{FTCL} and rank-1 sketch by \cite{CarmonDST19} incur a cost of $\widetilde{\Ord}(m (n/\varepsilon)^{2.5})$. \swati{It must be noted that \cite{GarberH16}, \cite{FTCL}, and \cite{CarmonDST19} provide algorithms satisfying $\varepsilon$-additive accuracy. When we translate our accuracy results to their language, the costs are not quite comparable. For instance, \cite{CarmonDST19}, for $\varepsilon$-additive accuracy for (\ref{prob}), incurs a cost of $m (n \norm{C}_{\infty}/\varepsilon)^{2.5}$. Our algorithm, using this accuracy criterion, incurs a cost of $m (\sum_{i, j} |C_{ij}|/\varepsilon)^{3.5}$. Unless we assume  additional structure on the matrix $C$, the comparison between these two costs is inconclusive.}

\paragraph{Low-rank updates.} When $C$ is the graph Laplacian in (\ref{prob}), there exists an $\varepsilon$-accurate solution of rank $\Ord(1/\varepsilon)$ \cite{Raghavendra2009, Montanari2016, mei2017solving}. Many papers capitalize on this fact and perform low-rank updates, which reduces cost per iteration. For example, \cite{KleinLu} base their algorithm on the framework of \cite{pst-framework} in conjunction with the power method to achieve a run time of $\widetilde{\Ord}(mn/\varepsilon^3)$. As another example, \cite{HazanSDP}  incorporates into the Frank-Wolfe algorithm \cite{frank2006algorithm} fast computation of an approximate minimum eigenvector and provides an $\widetilde{\Ord}(m n^3/\varepsilon^3)$-algorithm. A recent noteworthy result \cite{yurtsever2019scalable} returns a rank-$R$ approximation to an $\varepsilon$-optimal solution at a cost $\widetilde{\Ord}(R/\varepsilon^2 + n/\varepsilon^3)$. Even though, as alluded to earlier, there exists a rank-$\Ord(1/\varepsilon)$ solution to the MaxCut SDP, perturbing such a solution by an appropriately small amount gives an $\varepsilon$-optimal solution that is in fact full rank. Indeed, per Theorem $6.2$ of \cite{yurtsever2019scalable}, for any $r<R$, the iterate $\widehat{X}_t$ returned by their algorithm in iteration $t$ satisfies $\limsup_{t\rightarrow \infty} \E_{\Omega} \mathrm{dist}_*(\widehat{X}_t, \Psi_*) \leq (1 + r/(R - r - 1)) \cdot \max_{X \in \Psi_*} \| X - [X]_r\|_*$, where $\Omega$ is the randomness in their algorithm, $\Psi_*$ is the solution set, $R$ is the rank of the iterate returned, and $[X]_r$ is an $r$-truncated singular value decomposition of matrix $X$. The existence of full-rank matrices in the solution set $\Psi^*$ implies a possibly large bound above, so one cannot conclude that \cite{yurtsever2019scalable} improves upon our run time. 

\paragraph{Polynomial mirror map.} One of the contributions of \cite{FTCL} is a ``polynomial-style'' mirror map such as $\Phi(X) = \frac{1}{1 + 1/2p} \Tr X^{1 + 1/2p}$. The projection step with this map is $X = (Y^+)^{2p}$, where $Y^+$ is the matrix obtained by zeroing out the negative eigenvalues of $Y$, which is as expensive as matrix exponentiation.  

\paragraph{Variance-reduction methods.} Standard variance reduction algorithms such as SVRG \cite{svrg} minimize an objective that is a sum of functions, employing an unbiased estimator of the gradient. Unfortunately, neither is (\ref{final-prob}) a sum of functions, nor is its gradient ($\diag(\ones_{X>=\rho})$) cheap to estimate. 

\subsection{Preliminaries}\label{sec-prelims}  
\paragraph{Notation.} We use $\reals^n$ to denote the subspace of $n$-dimensional real vectors, $\ones$ for the vector of all ones, and $\ones_{\{\mathcal{E}\}}$ for the all-zero vector with one at coordinates where $\mathcal{E}$ is true. We use $x^+$ to denote the non-smooth function $x$ when $x\geq 0$ and zero otherwise. Denote by $\symm^n$ the subspace of $n \times n$ symmetric matrices and by $I_n$ the $n\times n$ identity matrix. For  $u\in \reals^n$,  $\diag(u)$ is the $n\times n$ diagonal matrix with $\diag(u)_{ii} = u_i$. For $A, B\in \symm^n$, the trace inner product is $A\bullet B\defeq \Tr (AB) = \sum_{i, j} A_{ij} B_{ij}$. We define $\dnorm{A} = \sum_i \abs{A_{ii}}$. Given a scalar function $f$ and a vector $u$, we use $f(u)$ to mean that entrywise, and similarly, for a symmetric matrix $A = U \Lambda U^\top$, $f(A) = U f(\Lambda) U^\top$. Given $A\in \reals^{n \times n}$ and $p \in \reals^n$,  $A \geq p$ means $A_{ii} \geq p_i$ for all $i \in [n]$. For $u \in \reals^n$,  $N\in \mathbb{N}$, and  vectors $\zeta_k \stackrel{{\mathrm{\small{i.i.d.}}}}{\sim} \N (0, I_n)$ for $k \in [N]$, the scalar $v = \jl(u, N)  \defeq \tfrac{1}{N}\sum_{k = 1}^N (u^T \zeta_k)^2$. This implies $\E v = \|u\|_2^2$. We extend the definition to $A \in \symm^n$ with each row of $A$ as the vector $u$. Then the diagonal matrix $B= \jl(A, N)$ satisfies $\E B = \diag A^2$. We use $\widetilde{\Ord}$ to denote polylogarithmic factors. The superscript $^*$ denotes optimality for variables and Fenchel conjugate for functions.
\begin{fact}[\cite{Allen-ZhuLO16}]\label{fact-extendedLiebThirring}
Given $A\succeq 0$, $B\in \symm^n$, and $\alpha \in [0, 1]$, the inequality $\Tr (B A^{\alpha} B A^{1-\alpha}) \leq A\bullet B^2$ holds.
\end{fact}
\begin{fact}[\cite{Wilcox67}]\label{fact-derivativeMatExp}
For a symmetric matrix-valued function $X(t)$ with argument scalar $t$, we have $\frac{d}{dt} \exp(X(t)) = \int_{\alpha = 0}^1 \exp(\alpha X(t)) \frac{d }{dt} X(t) \exp((1-\alpha) X(t)) d\alpha$. 
\end{fact}

\paragraph{Setup.} Our underlying algorithm to solve (\ref{final-prob}) is a slight variant of lazy mirror descent (also called Nesterov's Dual Averaging \cite{NesterovDualAvg}), which we term \emph{approximate lazy mirror descent}. To solve $\min_{ x \in \X} f(x)$ using this algorithm, select a mirror map $\Phi: \D \rightarrow \reals$ and a norm; the associated Bregman Divergence is $\D_{\Phi}(x, y) \defeq \Phi(x) - \Phi(y) - \inner{\nabla \Phi(y)}{x -y}$; set  $x^{(1)} \in \argmin_{\X \cap \D} \Phi(x)$ and $z^{(1)} \in \nabla^{-1} \Phi(0)$. We repeat, in succession, the gradient update, $\nabla \Phi(z^{(t+1)}) = \nabla \Phi(z^{(t)}) - \eta \nabla f(x^{(t)})$, and the \textit{approximate} projection, finding $\widetilde{x}^{(t+1)}$ satisfying $\E \|\widetilde{x}^{(t+1)} - x^{(t+1)}\|\leq \delta$, where $x^{(t+1)} \in \argmin_{x \in \X \cap \D} \D_{\Phi}(x, z^{(t+1)})$. 

\begin{restatable}[Convergence of Lazy Mirror Descent]{theorem}{thmalmd}\label{thm-almd} Fix a norm $\|{}\cdot{}\|$. Given an $\alpha$-strongly convex mirror map $\Phi: \D \rightarrow \reals$ and a convex, $G$-Lipschitz  objective $f: \X \rightarrow \reals$, run Algorithm~\ref{alg-almd} with step size $\eta$ and $\E\|x^{(t)} - \widetilde{x}^{(t)}\| \leq \delta$. Let  $D \defeq \sup_{x \in \mathcal{X} \cap \mathcal{D}} \Phi\lr{x} - \inf_{x \in \X \cap \D}\Phi\lr{ x }$. Then, Algorithm~\ref{alg-almd} after $T$ iterations returns $\widetilde{x}^{t^*}$, satisfying  \[\E f(\widetilde{x}^{\lr{t^*}} )-f\lr{x^{*}}\leq\frac{D}{T\eta}+\frac{2\eta G^{2}}{\alpha}+\delta G. \numberthis\label[ineq]{errbnd-almd}\]
\end{restatable}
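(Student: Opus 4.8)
The plan is to piggyback on the classical convergence analysis of lazy mirror descent (Nesterov's dual averaging) and to show that replacing the true projection $x^{\lr{t}}$ by the approximate iterate $\widetilde{x}^{\lr{t}}$ costs only the extra additive term $\delta G$. Abbreviate $g_t \defeq \nabla f\lr{\widetilde{x}^{\lr{t}}}$, the (sub)gradient that \cref{alg-almd} actually queries at step $t$, and let $x^{\lr{t}} \in \argmin_{x \in \X \cap \D} \D_{\Phi}\lr{x, z^{\lr{t}}}$ denote the exact projections. Starting from convexity of $f$ at $\widetilde{x}^{\lr{t}}$ and inserting $\pm\, x^{\lr{t}}$,
\[
f\lr{\widetilde{x}^{\lr{t}}} - f\lr{x^*} \;\leq\; \inner{g_t}{\widetilde{x}^{\lr{t}} - x^*} \;=\; \inner{g_t}{x^{\lr{t}} - x^*} \;+\; \inner{g_t}{\widetilde{x}^{\lr{t}} - x^{\lr{t}}}.
\]
The second inner product is at most $\norm{g_t}_* \,\norm{\widetilde{x}^{\lr{t}} - x^{\lr{t}}} \leq G\,\norm{\widetilde{x}^{\lr{t}} - x^{\lr{t}}}$ by H\"older's inequality and $G$-Lipschitzness of $f$, so in expectation it contributes at most $G\delta$ per step by the hypothesis $\E\norm{x^{\lr{t}} - \widetilde{x}^{\lr{t}}} \leq \delta$. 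Hence, after summing over $t$ and taking expectations, everything reduces to the deterministic regret inequality $\sum_{t=1}^{T}\inner{g_t}{x^{\lr{t}} - x^*} \leq \tfrac{D}{\eta} + \tfrac{2\eta G^2 T}{\alpha}$.

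The second step is to recognize that $\lr{x^{\lr{t}}}_{t\geq 1}$ is precisely the lazy-mirror-descent trajectory driven by the gradient stream $\lr{g_t}$: unrolling the lazy update from $\nabla\Phi\lr{z^{\lr{1}}} = 0$ gives $\nabla\Phi\lr{z^{\lr{t+1}}} = -\eta\sum_{s=1}^{t} g_s$, so that minimizing $\D_{\Phi}\lr{\cdot\,, z^{\lr{t}}}$ over $\X\cap\D$ is the same as minimizing $\Phi\lr{x} + \eta\sum_{s=1}^{t-1}\inner{g_s}{x}$; that is, $x^{\lr{t}}$ is the follow-the-regularized-leader point with regularizer $\Phi$ and step $\eta$. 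The only difference from the setting of Theorem~4.3 of \cite{Bubeck} is that our gradients are evaluated at $\widetilde{x}^{\lr{t}}$ rather than at $x^{\lr{t}}$ --- but that analysis uses nothing about the gradients except the uniform bound $\norm{g_t}_* \leq G$, so it carries over verbatim. I would then reproduce that argument: use the first-order optimality of the Bregman projection together with $\alpha$-strong convexity of $\Phi$ to obtain $\D_{\Phi}\lr{x^{\lr{t+1}}, z^{\lr{t+1}}} \leq \D_{\Phi}\lr{x^{\lr{t}}, z^{\lr{t+1}}} - \tfrac{\alpha}{2}\norm{x^{\lr{t}} - x^{\lr{t+1}}}^2$, rewrite $\D_{\Phi}\lr{x^{\lr{t}}, z^{\lr{t+1}}}$ via the lazy update in terms of $\D_{\Phi}\lr{x^{\lr{t}}, z^{\lr{t}}}$ and the step $\eta g_t$, telescope the Bregman terms into the diameter $D$, and bound the leftover cross terms by Young's inequality; this yields exactly $\sum_{t=1}^T \inner{g_t}{x^{\lr{t}} - x^*} \leq \tfrac{D}{\eta} + \tfrac{2\eta G^2 T}{\alpha}$.

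Finally I would assemble the pieces: combining the two bounds, dividing by $T$, and using that $t^*$ is drawn uniformly and independently of the run (so $\E f\lr{\widetilde{x}^{\lr{t^*}}} = \tfrac1T\sum_{t=1}^T \E f\lr{\widetilde{x}^{\lr{t}}}$) gives \cref{errbnd-almd}. The only bookkeeping point is that $x^{\lr{t}}$ and $g_t$ are themselves random (they depend on the earlier approximate iterates), so expectations must be taken with the tower property in mind; since the regret inequality holds pathwise this is harmless. I do not expect any single estimate to be the real obstacle --- the routine parts are all inherited from \cite{Bubeck}; the one thing that genuinely needs checking is that the \emph{exact} iterates $x^{\lr{t}}$ still form a legitimate dual-averaging sequence for gradients taken at the \emph{perturbed} points $\widetilde{x}^{\lr{t}}$, which is precisely what makes the $\delta G$ error additive rather than compounding over iterations.
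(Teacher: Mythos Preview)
Your proposal is correct and follows essentially the same approach as the paper: split $\widetilde{x}^{\lr{t}} - x^*$ through $\pm\, x^{\lr{t}}$, bound the $\inner{g_t}{\widetilde{x}^{\lr{t}} - x^{\lr{t}}}$ piece by $G\delta$ in expectation via H\"older and the $G$-Lipschitz hypothesis, and treat $\sum_t \inner{g_t}{x^{\lr{t}} - x^*}$ by the verbatim analysis of Theorem~4.3 in \cite{Bubeck} (which, as you correctly observe, uses only $\norm{g_t}_* \leq G$). The paper carries out the latter step via the FTRL potential $\widetilde{\Psi}_t(x) = \Phi(x) + \eta\sum_{s\leq t}\inner{g_s}{x}$ and an induction-style ``be-the-leader'' claim rather than the Bregman three-point/telescoping language you sketch, but this is a cosmetic difference in presenting the same dual-averaging argument.
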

\begin{restatable}{lemmma}{dualvarrange}\label{lem-dual-var-range}Given  $C\in \reals^{n\times n}$ and $ 0 \preceq X$, let $\rho\in \reals^n$ with $\rho_i =  \sum_{j = 1}^n \abs{C_{ij}}$; diagonal matrix $S$ with $S_{ii} = \min(\sfrac{1}{\sqrt{\rho_i}}, \sfrac{1}{\sqrt{X_{ii}}})$ for $i \in [n]$;  $\widehat{X} = S  X  S$;  $\widehat{C} = \diag\left({\sfrac{1}{\sqrt{\rho}}}\right) C  \diag(\sfrac{1}{\sqrt{\rho}})$. Then, $\widehat{X}\succeq 0$, ${\widehat{X}}_{ii} \leq 1$ for all $i\in [n]$, and $\widehat{C} \bullet X - \sum_{i = 1}^n\left( X_{ii} - \rho_i\right)^{+} \leq C\bullet \widehat{X}$.
\end{restatable} 
\section{Our approach}\label{proposed-approach}  
We present our algorithm below, parameters in Table~\ref{TableNewParams}, and main result in Theorem~\ref{thm-main-result}.

\begin{algorithm}[!ht]
\DontPrintSemicolon
\caption{\textbf{Our Algorithm}}\label{alg-new}
\KwIn{Cost matrix $C\in \reals^{n \times n}$, accuracy $\varepsilon$}
{\bf{Parameters}}: Displayed in Table~\ref{TableNewParams} \;

 Initialize $t \gets 0$, $Y^{(1)} \gets \zeros$. Set $\widehat{C}$ and $\rho$ from Lemma~\ref{lem-dual-var-range} and $\nabla f(X) = \diag(\ones_{X \geq \rho}) - \widehat{C}$ \;
 
\For{$\Tout$ \textrm{iterations}}{
	 $t \gets t+1$\;
	 
     $\widetilde{\exp}(\tfrac{1}{2}Y^{(t)}) \gets  \cheby( \tfrac{1}{2} Y^{(t)}, \tcheby, \dcheby )$ \Comment{Defined in  Corollary~\ref{chebyexp-def}}\;
     
	 $\widetilde{X}^{(t)} \gets \jl(\widetilde{\exp}(\tfrac{1}{2}Y^{(t)}), \tjl)$ \Comment{High-accuracy projection}\;
	 
	 $Y^{\lr{t+1}} \gets Y^{\lr{t}} - \eta \nabla f(\widetilde{X}^{\lr{t}})$ \Comment{Gradient update}\;
	 
	\For {$t_i = 1 \to \Tin$}{
		 $t \gets t + 1$\;
		 
		 $\widehat{\theta}^{\lr{t_i}} \gets \UpdateEst(\widetilde{X}^{\lr{t-1}}, Y^{\lr{t-1}}, \varepsilon, \eta)$ \Comment{See Algorithm~\ref{alg-est-theta}}\;
		 
		 $\widetilde{X}^{(t)}_{jj} \gets (\sqrt{\widetilde{X}^{(t-1)}_{jj} + 1} + \widehat{\theta}_j^{(t_i)})^2 - 1 $ for $j\in [n]$  \Comment{Constant-accuracy projection}\;
		 
		 $Y^{(t+1)} \gets Y^{(t)} - \eta \nabla f(\widetilde{X}^{(t)})$\Comment{Gradient update}\;
	}
}	
For $t^* \stackrel{\text{unif.}}{\sim} \left\{1, 2, \dotsc, t\right\}$, return $Y^{\lr{t^*}}$ and $S$,  where  $S$ is from Lemma~\ref{lem-dual-var-range}. \;
\end{algorithm}	 

\begin{table}[htbp]
	\centering
	    \resizebox{\linewidth}{!}{
	\begin{tabular}{c c c}
	\toprule
	\multicolumn{1}{c}{\textbf{Parameter}}
	& \multicolumn{1}{c}{\textbf{Value}}
	& \multicolumn{1}{c}{\textbf{Proof}}\\
	\midrule
	Diameter $D$ & $K\log K$ & Lemma~\ref{lem_new_set_diam}\\
	Strong convexity $\alpha$ & $1/(4K)$ & Lemma~\ref{lem-new-alg-strong-conv}\\
	Step size $\eta$ & $\tfrac{1}{8 \times 10^4 (\log (n/\varepsilon))^{11}} \varepsilon^2$ & Lemma~\ref{pf-algnew-distbndinalg}\\ 
	Inner iteration count $\Tin$ & $\varepsilon^{-2}$ & Section~\ref{new-num-inner-iters}\\
    Outer iteration count $\Tout$ & $\tfrac{1}{\varepsilon}\cdot 24 \times 10^5 (\log (n/\varepsilon))^{11} \log n$ & Lemma~\ref{lem-newalg-distbnd-opt}\\
	JL projection count $\tjl$ &  $(2 \times 10^5)\cdot (\log n)^{21} \cdot \varepsilon^{-2} $ & Lemma~\ref{pf-algnew-distbndinalg} \\
	 Chebyshev approximation degree $\tcheby$ & $150 \log(n/\varepsilon) \cdot \varepsilon^{-1/2}$ & Lemma~\ref{algcurr-cheby-cost-alg} \\
     Chebyshev approximation accuracy $\dcheby$ & $(\varepsilon/n)^{401}$ & Lemma~\ref{algcurr-cheby-cost-alg}\\
	\bottomrule
	\end{tabular}}
	\caption{All Algorithm~\ref{alg-new} parameters and where their values are set. $K =  40 n (\log n)^{10} $.}
	\label[table]{TableNewParams}
	\vspace{-4mm}
	\end{table} 
	
\begin{theorem}[Main Result]\label{thm-main-result} Given $C \in \reals^{n \times n}$ with $m \geq n$ non-zero entries and  $0<\varepsilon\leq\tfrac{1}{2}$, we can find, in time $\tilde{\Ord}(m/\varepsilon^{3.5})$ and with high probability, a  matrix $Y \in \symm^{n}$ with $\Ord\lr{m}$ non-zero entries and a diagonal matrix $S \in \reals^{n \times n}$ so that\footnote{Since $\widetilde{X}^*$ can be dense, we represent it implicitly by only returning the matrices $Y$ and $S$.}  $\widetilde{X}^* \defeq S \cdot \exp Y   \cdot S$ satisfies  $\widetilde{X}^* \succeq 0$, $\widetilde{X}^*_{ii}\leq 1$ for $i \in [n]$, and $C\bullet {\widetilde{X}}^* \geq C\bullet X^* -\varepsilon  \sum_{i, j} |C_{ij}|.$
\end{theorem} 
As a corollary, for the Max-Cut problem on a graph with $n$ nodes and $m$ edges, our algorithm gives a cut that is $ (1-\varepsilon) \alpha_{GW}$ optimal\footnote{Assuming the Unique Games Conjecture, this is the best we can  hope for Max-Cut \cite{Khot2007}.}, in time $\widetilde{\Ord}(m/\varepsilon^{3.5})$, where $\alpha_{GW} \approx 0.878567$. Before proceeding to the proof sketch of Theorem~\ref{thm-main-result}, we call attention to a technical concept crucial to our analysis: \textit{we add to (\ref{final-prob}) the constraint $\Tr X \leq K$, where $K = 40 n (\log n)^{10}$}. The optimal $X^*$ remains valid under this constraint because $\Tr X^* = n$. Throughout our algorithm, this inequality remains inactive (Lemma~\ref{lem-newalg-distbnd-opt}). Coupled with the Legendre dual of our mirror map $\Phi(X) = X\bullet \log X - \Tr X$, this results in the primal and the dual being related by $X = \exp(Y)$ (Lemma~\ref{proj-new}). Since the gradient requires only diagonal entries of the primal iterate, we need estimators only for the diagonal entries of $\exp(Y)$. \smallskip
 
\begin{proof}[Proof Sketch of Theorem~\ref{thm-main-result}]\label{pf-sketch-main-result-main-body}
In this proof sketch, we compute the run time of Algorithm~\ref{alg-new}, proving the claims in Theorem~\ref{thm-main-result}. In doing so, we  provide intuition for the choice of parameters in Table~\ref{TableNewParams}. This sketch assumes that we are in iteration $t$ and drops all superscripts, and aside from that, follows the notation of Algorithm~\ref{alg-new}.
\begin{enumerate}[leftmargin=*]
\item To compute $\exp(Y)_{ii}$, we first approximate $\widetilde{\exp}(Y/2)$ to $\varepsilon$-accuracy using Chebyshev polynomials. We show in Lemma~\ref{spectrum-Y} that the spectrum of $Y$ lies in the range $[-\Ord(1/\varepsilon), \widetilde{\Ord}(1)]$, which allows for Chebyshev approximation with  $\widetilde{\Ord}(1/\sqrt{\varepsilon})$ terms, thus giving the cost of each projection to be $\widetilde{\Ord}(m/\sqrt{\varepsilon})$. The upper bound of $\widetilde{\Ord}(1)$ on the spectrum is critical to getting this cost, for in case of a symmetric range of $[-\Ord(1/\varepsilon), \Ord(1/\varepsilon)]$, the cost would be $\widetilde{\Ord}(1/\varepsilon)$. The $\widetilde{\Ord}(1/\sqrt{\varepsilon})$ terms is in contrast with the  $\Ord(1/\varepsilon)$ required for Taylor approximation. We then estimate each $\exp(Y)_{ii}$ with $\widetilde{\Ord}(1/\varepsilon^2)$  projections via the JL sketch in the high-accuracy steps, and   $\widetilde{\Ord}(1)$ randomized projections in the $\Tin$ low-accuracy steps. Therefore the total cost of the algorithm over $\Tout$ iterations is roughly $\Tout \cdot (m/\sqrt{\varepsilon})\cdot (1/\varepsilon^2 + \Tin)$. From this expression, the optimal choice of $\Tin$ (up to polylogarithmic factors) is $\Tin = 1/\varepsilon^2$. 
\item Due to the small bias and variance of our estimator, after $\Tin$ inner iterations, the estimated iterate is roughly within $\varepsilon K$ distance of the true iterate. Thus, the condition in Theorem~\ref{thm-almd} is satisfied, and its the error bound applies at the end of our algorithm: $\E f(\widetilde{X}^{*} )-f(X^{*})\leq D/(T\eta)+ 2\eta G^{2}/\alpha+\delta G$. Using $D$, $G$, and $\alpha$ from Table~\ref{TableNewParams} and $T_{\mathrm{inner}}$ from Step 1 and bounding by $\varepsilon K$, this inequality simplifies to $\varepsilon^2/(\eta T_{\mathrm{outer}}) + \eta  \leq \varepsilon$.
\item The step size $\eta$ is chosen by studying the error generated in each estimation step versus the error our framework can tolerate. Estimating $(\exp (Y + \Delta))_{ii}$ from $(\exp Y )_{ii}$ via a first-order approximation accrues an error of $\Tr ( \Delta \exp Y )$. Applying H\"older's inequality, the value of $G$, and the trace bound enforced by Lemma~\ref{lem-newalg-distbnd-opt} yields  $ \Tr (\Delta \exp Y) \leq \eta K$. Therefore, after $\Tin$ iterations, the variance of the error is $\Tin \eta^2 K^2$. Equivalently, the overall error after $\Tin$ iterations is $\sqrt{\Tin}\eta K$. For this to be bounded by $\varepsilon K$, we must have $\eta \leq \varepsilon/\sqrt{\Tin}$. Plugging in $\Tin$ from Step 1 gives the step size: $\eta \approx \varepsilon^2$. 
\item The value of $\eta$ from Step 3 and the inequality from Step 2 give $\Tout \approx 1/\varepsilon$. Plugging this value of $\Tout$ above gives the overall algorithm cost $\widetilde{\Ord}(m/\varepsilon^{3.5})$.
\end{enumerate}
We boost our result to the high probability statement of Theorem~\ref{thm-main-result} over multiple runs of the algorithm. We sidestep the issue of storage cost of $\widetilde{X}^*$ and cost of matrix-matrix products by dimension reduction techniques. This finishes the proof of our error guarantee. Lemma~\ref{lem-dual-var-range} implies that $\widetilde{X}^*\succeq 0$ and satisfies the diagonal constraints.
\end{proof}

\begin{table}[htbp]
\centering
    \resizebox{\linewidth}{!}{
\begin{tabular}{c c c c c}
\toprule
& \multicolumn{1}{c}{\textbf{\cite{AroraKale}}} &
\multicolumn{3}{c}{\textbf{Algorithm~\ref{alg-new} (This Paper)}} \\
\cmidrule(l){3-5}
& \thead{(Previous Best)} & \thead{\begin{tabular}{@{}c@{}}Low accuracy \\ steps \end{tabular}} & $+$ &  \thead{\begin{tabular}{@{}c@{}}High accuracy \\ steps \end{tabular}}\\
\midrule
Number of iterations & $\widetilde{\Ord}(\varepsilon^{-2})$ & $\widetilde{\Ord}(\varepsilon^{-3})$ & $+$ & $\widetilde{\Ord}(\varepsilon^{-1})$ \\
Number of projections per iteration & $\widetilde{\Ord}(\varepsilon^{-2})$ & $\widetilde{\Ord}(1)$ & $+$ & $\widetilde{\Ord}(\varepsilon^{-2})$\\
Cost per projection & $\Ord(m\varepsilon^{-1})$ & $\widetilde{\Ord}(m\varepsilon^{-1/2})$& $+$ & $\widetilde{\Ord}(m\varepsilon^{-1/2})$ \\
\textbf{Total Cost} & $\widetilde{\Ord}(m\varepsilon^{-5})$ & $\widetilde{\Ord}(m\varepsilon^{-3.5})$ & $+$ & $\widetilde{\Ord}(m\varepsilon^{-3.5})$\\
\bottomrule
\end{tabular}}
\caption{Comparing \cite{AroraKale} to our algorithm.}
\label[table]{TableCosts}
\vspace{-4mm}
\end{table}
\subsection{The estimator}\label{sec-outline-est} In this section, we consider the $t_i$'th iteration in the inner loop of Algorithm~\ref{alg-new}; suppose this is the $t$'th overall iteration. For now, we drop all superscripts and fix the notation below. 
\begin{definition}\label{defs-estimator} Let $\Delta = - \eta \nabla f(X)$,  $Y_s = Y +s\Delta$ for $ s\in [0, 1]$, $\bar{\tau} = 1-\tau$, $\dexp = \frac{4800 \varepsilon^{401}}{n^{390}}$, $\theta_{1_i} = (\exp(Y_s)_{ii} + 1)^{-1/2}$, $\theta_{2_i} = \frac{1}{2}(\exp(\bar{\tau} Y_s) \Delta \exp((\tau - 1/2) Y_s) \exp((1/2) Y_s))_{ii}$, $\btonei = \theta_{1_i} (2\dexp + \sqrt{2} (1+ 2\dexp) (\varepsilon/n)^{400})$, and $\bttwoi = 15\dexp \eta K$. 
\end{definition}
  To construct an estimator for the update from $\exp(Y)$ to $\exp(Y + \Delta)$, we estimate the update in $\sqrt{(\exp Y)_{ii} + 1}$. The motivation for this choice of function is two-fold: (1) because of the square root, the variance is controlled by the trace of the matrix exponential, bounded by Lemma~\ref{lem-newalg-distbnd-opt}; (2) since the derivative of square root is the inverse square root, we need $\sqrt{\exp(Y)_{ii} + 1}$ instead of $\sqrt{\exp(Y)_{ii}}$ to prevent the update term from becoming unbounded. By chain rule, Fact~\ref{fact-derivativeMatExp}, and the fundamental theorem of Calculus, 
\begin{align*} \sqrt{(\exp (Y + \Delta) )_{jj} + 1} &= \sqrt{(\exp(Y))_{jj} + 1} \\ &+  \underbrace{ \int_{s = 0}^1   \underbrace{ ((\exp Y_s)_{jj} + 1)^{-1/2}}_{\text{$\defeq \theta_{1_j}$; estimated using $\widehat{\theta}_{1_j}$}}  \underbrace{\tfrac{1}{2}( \int_{\tau = 0}^1 \exp (\tau Y_s) \Delta  \exp (\bar{\tau} Y_s) d\tau )_{jj}}_{\text{$\defeq \theta_{2_j}$; estimated using $ \widehat{\theta}_{2_j}$}} ds}_{\text{$\defeq \theta_{j}$; estimated using $\widehat{\theta}_j$}}. \numberthis\label{estFTC}
\end{align*} As indicated in Equation~\ref{estFTC}, we split the quantity to be estimated into two parts, separately estimating each. Estimating the first part, $\widehat{\theta}_{1_j}$, requires first estimating $\exp(Y_s)_{jj} + 1$ using a JL sketch and then passing through the following Taylor approximation for the function $g(u) = u^{-1/2}$, where $g^{(k)}(x)$ is the $k$'th derivative of $g$ at $x$, \[\polyg (\widetilde{X}, N) \defeq \sum_{k = 0}^{N-1}\frac{1}{k!} g^{(k)}(x_0) \prod_{j = 1}^k ( x_{k, j} - x_0 ), \, \mbox{where } x_0, x_{k, j} \stackrel{\text{i.i.d.}}{\sim} \widetilde{X}. \numberthis\label{def-polyg}\] Since $\widehat{\theta}_{1_j}$ must be \textit{unbiased}, it is essential to do the Taylor approximation instead of simply evaluating $g(u) = u^{-1/2}$ at the estimator of $\exp(Y_s)_{jj} + 1$. Indeed, for a general $f$ and a random variable $\widetilde{x}$ that is an unbiased estimator of $x$, $\E f(\widetilde{x}) = f(\E \widetilde{x})$ does not hold, as evidenced by Jensen's inequality; on the other hand, the intuition for the quantity from Equation~\ref{def-polyg} to be unbiased is that each term in the sum is a product of independent, unbiased random variables. Estimating $\theta_{2_j}$ is done by splitting it into carefully chosen parts and applying the JL sketch. Algorithm~\ref{alg-est-theta} is the complete subroutine for the estimator. 
\begin{algorithm}[!ht]
\caption{$\UpdateEst(\text{Primal }X, \text{dual }Y, \text{accuracy  }\varepsilon, \text{step size }\eta)$}\label{alg-est-theta}
\begin{algorithmic}[1]
\State Parameters $\testjl = 2^{22} 10^4 (\log (n/\varepsilon))^2$ and $\testinvsqrt = 1600 \log (n/\varepsilon)$ (set in Lemma~\ref{lem-est-term1})
\State Sample $s$ and $\tau$ uniformly from $[0, 1]$. Compute $\Delta$ and $Y_s$ as per Definition~\ref{defs-estimator}. Let $\widetilde{X}_s = \jl(\widetilde{\exp}(Y_s/2), \testjl)$. Sample $\zeta \sim \N(0, I_n)$. 
\State Compute $\widehat{\theta}_{1_j} = \polyg({\widetilde{X}}_{s_{jj}} + 1, \testinvsqrt)$ for $j \in [n]$. 
\State Compute $\widehat{\theta}_{2_j} = \tfrac{1}{2} (\widetilde{\exp}((\tau - \tfrac{1}{2}) Y_s) \Delta \widetilde{\exp}(\bar{\tau} Y_s) \zeta)_j \lr{\widetilde{\exp}(Y_s/2)  \zeta}_{j}$ for $j\in [n]$. 
\State Return the overall estimator, $\widehat{\theta}_j =  \widehat{\theta}_{1_j}  \widehat{\theta}_{2_j}$, for $j \in [n]$. \Comment{Coordinate-wise product}
\end{algorithmic}
\end{algorithm} 

\textbf{Properties of the estimator.} The bounds on bias and variance of the estimator, as required by Theorem~\ref{thm-main-result}, are stated in Lemma~\ref{lem-var-bound-on-xupdate}. Since $\widehat{\theta}$ is constructed from $\widehat{\theta}_1$ and $\widehat{\theta}_2$, we first state their properties and use them to sketch a proof of Lemma~\ref{lem-var-bound-on-xupdate}. 
\begin{restatable}{lemmma}{biasandvarbndest}\label{lem-var-bound-on-xupdate} 
The estimator $\widehat{\theta}^{(t)}$ has the following bounds on its first and second moments. 
\begin{description}
\compresslist{
\item [{(1)}] $|\E\widehat{\theta}_i - \int_{s=0}^1 \int_{\tau = 0}^1 \theta_{1_i} \theta_{2_i}  ds d\tau| \leq \btonei \theta_{2_i} + \bttwoi \theta_{1_i} + \btonei \bttwoi$ for $i\in[n]$. 
\item  [{(2)}]$\E \|\widehat{\theta}\|_2^2 \leq  19600 \log(n/\varepsilon) K \eta^2 + 147000 K^2 \eta^2 \dexp$. 
}
\end{description}
\end{restatable} 

\begin{restatable}{lemmma}{termone}\label{lem-est-term1} Given $\testinvsqrt = 1600 \log(n/\varepsilon)$, $\testjl = 2^{14} \testinvsqrt^2$, $Z \in \symm^n$, and $\varepsilon\in \lr{0, \sfrac{1}{2}}$, let $\widetilde{Z^2} = \jl(Z, \testjl)$ and $\widehat{\theta}_{1_i} \sim \polyg((\widetilde{Z^2})_{ii} + 1, \testinvsqrt)$ for $i \in [n]$. Then, 
\begin{description}
\compresslist{
\item [{(1)}] The first moment satisfies $\abs{ \E \widehat{\theta}_{1_i} - \frac{1}{\sqrt{(Z^2)_{ii} + 1}} } \leq \frac{\sqrt{2}(\varepsilon/n)^{400}}{\sqrt{(Z^2)_{ii} + 1}}$. 
\item [{(2)}] The second moment satisfies $\E |\widehat{\theta}_{1_i}|^2\leq \tfrac{1}{(Z^2)_{ii}} 1630 \log (n/\varepsilon)$.
}
\end{description}
\end{restatable}
\begin{restatable}{lemmma}{termtwo}\label{lem-est-term2}
Consider $Z_1, Z_2, Z$, and $\Delta$ all in $\symm^n$. Sample $\zeta\sim \mathcal{N}(\zeros,I_n)$, and define  $\widehat{\theta}_2 \in \reals^n$ as $\widehat{\theta}_{2_i} = (Z_1 \Delta Z_2 \zeta)_i \lr{Z \zeta}_i$. Define $\theta_{2_i} \defeq  (Z_1 \Delta Z_2 Z)_{ii}$.  Then for $i \in [n]$:
\begin{description}
\compresslist{
\item [{(1)}] The first moment satisfies $\E \widehat{\theta}_{2_i} =  \theta_{2_i}$
\item [{(2)}] The second moment satisfies $\E | \widehat{\theta}_{2_i}|^{2} \leq 3\lr{Z_1 \Delta Z_2^2 \Delta Z_1}_{ii}\lr{Z^2}_{ii} $. 
}\end{description}\end{restatable}

 \begin{proof}[Proof sketch for Lemma~\ref{lem-var-bound-on-xupdate}]
   By construction, \[\E_{s, \tau, \zeta_1, \zeta_2} \|\widehat{\theta}\|_2^2 = \int_{s=0}^1  \int_{\tau = 0}^1 \sum_{i=1}^n \E_{\zeta_1} |\widehat{\theta}_{1_i}|^2 \E_{\zeta_2} |\widehat{\theta}_{2_i}|^2 ds d\tau.\] Plugging in the second moment bounds from Lemma~\ref{lem-est-term1} and Lemma~\ref{lem-est-term2} gives
\begin{align*}
\E_{s, \tau, \zeta_1, \zeta_2}\|\widehat{\theta}\|_2^2 &= 4890  \log(n/\varepsilon) \int_{s=0}^1 \int_{\tau=0}^1  \Tr (\widetilde{\exp}(2\bar{\tau} Y_s) \Delta \widetilde{\exp}((2\tau - 1)Y_s) \Delta) ds d\tau. 
	\end{align*} This step is made possible by the careful choice of split in $\widehat{\theta}_2$ that enable cancellations of $\frac{1}{(\widetilde{\exp} Y_s)_{ii}}$ and $(\widetilde{\exp} Y_s)_{ii}$. Applying Fact~\ref{fact-extendedLiebThirring} and the fact that $\widetilde{\exp} Y_s$ is close to the true $\exp Y_s$, the above trace term is bounded by $\Tr (\exp(Y + s\Delta) \Delta^2)$ (plus a small error term).  Applying H\"older's Inequality, Lemma~\ref{lem-newalg-distbnd-opt}, and values of $\eta$ and $G$ completes the proof. 
\end{proof}

To provide proof sketches of Lemma~\ref{lem-est-term1} and Lemma~\ref{lem-est-term2}, we need two technical lemmas about $\jl$ and $\polyg$, the main workhorses for our estimators. These lemmas follow from properties of Gaussian and the scaled chi-squared distribution.  
\begin{restatable}{lemmma}{prefirstpartfirstterm}\label{lem-est-pre-term1}
Consider a positive random variable $x$ sampled from a distribution $X$ with mean $\mu$ and variance $\sigma^2$. For some integer $k>0$, construct the distribution $\mathcal{G}(X) = \polyg\lr{X, k}$ defined in Equation~\ref{def-polyg}. Then the random variable $g \sim \mathcal{G}(X)$ satisfies  
\begin{description}
\compresslist{
\item [{(1)}] $|\E g -{\mu}^{-{1/2}}| \leq \E \lr{ \frac{|x-\mu|^{k}}{\min\lr{\mu, x}^{k + {1/2}}}}$
\item [{(2)}] $\E|g|^{2}\leq k \sum_{j=0}^{k-1}\E\lr{\frac{\left(\sigma^{2}+\lr{\mu-x}^{2}\right)^j}{x^{2j+1}}}$. 
}
\end{description}
\end{restatable}
\begin{restatable}{lemmma}{presecondpartfirstterm}\label{lem-est-pre2-term1}
Given $u \in \reals^n$ such that $\mu \defeq \norm{u}_2^2 \neq 0$, and positive integers $k > 1$ and $N \geq 4k + 6$, the following are true for $x$ sampled from   $X = \jl\lr{u, N}$. 
\begin{description}
\compresslist{
\item [{(1)}] $\E x = \mu$
\item [{(2)}] $\sigma^2 \defeq \E \lr{x - \mu}^2 = \frac{2 \mu^2}{N}$ 
\item [{(3)}] $\E \left( \frac{\lr{\sigma^2 + \lr{ x - \mu }^2 }^k}{\min\lr{x, \mu}^{2k + 1}} \right) \leq \frac{1}{\mu} \lr{ \frac{e^{{N/2}}}{2^{N - 17k}}  + \frac{ 2^{13k} k^{2k}}{N^k} }$ 
}
\end{description}
\end{restatable}
\begin{proof}[Proof sketches of Lemmas~\ref{lem-est-term1} and \ref{lem-est-term2}]
 Consider $x \sim \widetilde{Z^2}_{ii}$.  By Lemma~\ref{lem-est-pre2-term1}, $\E x =  Z^2_{ii}$. This satisfies the bias requirement of Lemma~\ref{lem-est-pre-term1}, and therefore 
	\begin{align*}
		\abs{ \E \widehat{\theta}_{1_i} - \frac{1}{\sqrt{1 + (Z^2)_{ii}}}  } &\leq \E \lr{ \frac{\abs{x - (Z^2)_{ii}}^{\testinvsqrt}}{\min(x + 1, (Z^2)_{ii}+1)^{\testinvsqrt + \tfrac{1}{2}}}}\\
																&\leq \sqrt{\E \frac{\lr{x - (Z^2)_{ii}}^{2\testinvsqrt}}{\min\lr{x + 1, (Z^2)_{ii}+1}^{2\testinvsqrt + 1}} }\\
																&\leq \sqrt{ \frac{1}{(Z^2)_{ii} + 1} \left(\frac{e^{\sfrac{\testinvsqrt}{2}}}{2^{\testjl - 17\testinvsqrt}}  + \frac{2^{13\testinvsqrt} {\testinvsqrt}^{2\testinvsqrt}}{{\testjl}^{\testinvsqrt}} \right) }. 
    \end{align*} where the first step is by Lemma~\ref{lem-est-pre-term1}, the second is by Jensen's inequality, and the third step is by a slight modification of $(3)$ in Lemma~\ref{lem-est-pre2-term1}. The values of $\testinvsqrt$ and $\testjl$ from Algorithm~\ref{alg-est-theta} give the final bias bound. The second moment bound follows similarly, and the properties of $\widehat{\theta}_2$ follow from simple properties of the Gaussian distribution.
    \end{proof}
   
\subsection{Technical Concepts: Domain Expansion and Strong Convexity}\label{sec-technicalideasmainbody}
In this section we state and sketch the proofs of two key technical concepts: (1) the addition of the trace constraint as described before the proof of Theorem~\ref{thm-main-result}, and (2) the value of the strong convexity parameter of our mirror map over this new domain. 
\begin{restatable}{lemmma}{newdistbndfromopt}\label{lem-newalg-distbnd-opt} 
With the choice of parameters in Algorithm~\ref{alg-new}, the iterate $\widetilde{X}^{(t)}$ at any iteration $t$ satisfies $\Tr \widetilde{X}^{(t)} < K$ for $K = 40 n (\log n)^{10}$. 
\end{restatable}
\begin{proof}[Proof sketch] We assume that for any iteration $t$, the primal iterate is close to the optimal point and satisfies $\dnorm{\widetilde{X}^{\lr{t}} - X^*} \leq 38 n \lr{\log n}^{10}$. In Algorithm~\ref{alg-new}, $Y^{(1)} = 0$ implies $\widetilde{X}^{\lr{1}} = I$. We also know that the optimal point satisfies $\Tr X^* = n$. Therefore, in the base case, $\dnorm{\widetilde{X}^{\lr{1}} - X^*} \leq 2n \leq 38 n \lr{\log n}^{10}$.  Suppose that the hypothesis is true for some $ t = t^{\prime}$. We complete the proof  by first proving a weak bound for $\dnorm{\widetilde{X}^{(t)} - X^*}$ using the triangle inequality of norms and then boosting our bound (thereby obtaining the stronger guarantee of the induction hypothesis) by invoking the strong convexity of the Bregman divergence. The full proof is presented in Section~\ref{pf-proj-in-exp-int}. 
\end{proof}
We now sketch the proof of the strong convexity parameter of our mirror map, the \emph{generalized} negative entropy function. This mirror map is different from the negative entropy function and has recently appeared in \cite{AllenOrecchia2015-parallel}. 
\begin{restatable}{lemmma}{newalgstrongconvexity}\label{lem-new-alg-strong-conv}
The function $\Phi(X) =  X\bullet \log X - \Tr X$ is $\frac{1}{4K}$-strongly convex with respect to the nuclear norm over the domain $\D = \{X: X \succeq 0, \Tr X \leq K\}$. 
\end{restatable}
\begin{proof}[Proof sketch] We invoke the duality between strong convexity and smoothness by \cite{KakadeDuality}, the characterization of matrix smooth functions by \cite{Judit-Nemirov}, and the generalization of convexity of a permutation-invariant function on vectors to a spectral function on matrices by \cite{lewis95}. Our proof requires the following definition. 
\begin{definition}\label{defs-sc} Define the vector functions $\psi_1(y) = \sum_{i=1}^n \exp y_i$, $\psi_2(y) = 2K \log \psi_1(y) - 2K \log(2K) + 2K$, $\psi(y) = \psi_1(y)$ if $\psi_1(y) \leq 2K$ and $\psi_2(y)$ otherwise; $\Psi(Y) = \Psi_1(Y)$ if $\Psi_1(Y) \leq 2K$ and $\Psi_2(Y)$ otherwise; and $\phi(x) = \sum_{i=1}^n x_i \log x_i - \sum_{i=1}^n x_i$.  Define the corresponding matrix functions $\Psi_1(Y) = \Tr \exp Y$, $\Psi_2(Y) = 2K \log \Psi_1(Y) - 2K \log (2K) + 2K$, and $\Phi(X) = X\bullet \log X - \Tr X$. 
\end{definition}
Our first step is to show that $\Psi$, the matrix version of $\psi$, satisfies the property $\Psi^*(Y) = \Phi(Y)$ over $\{Y: Y\succeq 0, \Tr Y \leq K\}$. To prove this, we first prove that $\psi$ and its matrix version, $\Psi$, are both continuously differentiable at the boundary of definition of their respective two parts. We then show that $\psi_1$ and $\psi_2$ are convex; combining this with the claim about continuous differentiability implies convexity of $\psi$, which immediately extends to $\Psi$ by a result of \cite{lewis95}. We then show that $\psi$ and $\phi$ satisfy $\psi_1^*(x) = \phi(x)$ for $x\in \reals^n_{+}$,
and given an input $x \in \{x:x_i\geq 0, \sum_{i=1}^n x_i \leq K\}$, the point $y$ attaining the optimum in computing $\psi_1^*(x)$ lies in the \emph{interior} of the set $\{y: \psi_1(y)\leq 2K \}$. Therefore, given an input $x \in \{x:x_i \geq 0, \sum_{i=1}^n x_i \leq K\}$, we invoke the preceeding facts to conclude that the point at which the value of $\psi^*(x)$ is attained must be the same as that for $\psi_1^*(x)$. This implies $\psi^*(x) = \psi_1^*(x)$ for $x\in \{x: x_i\geq 0, \sum_{i=1}^n x_i\leq K\}$. By a result of \cite{lewis95}, this extends to $\Psi^* = \Phi$ on $\{X: X\succeq 0, \Tr X \leq K\}$. We then use \cite{Judit-Nemirov} and continuous differentiability at the boundary to show that $\Psi$ is $4K$-smooth in the operator norm which in turn implies, by \cite{KakadeDuality}, that $\Psi^*$ is $1/(4K)$-strongly convex in the nuclear norm, finishing the proof. Our full proof is in Section~\ref{sec-new-params}. 
\end{proof}

\section*{Acknowledgment} We are very grateful to the anonymous reviewers of SODA $2019$, SODA $2020$, and COLT $2020$ for their careful reading and constructive suggestions on improving our presentation, Kevin Tian (Stanford) for helpful explanations of his paper \cite{CarmonDST19}, and Sidhanth Mohanty (UC Berkeley) for his clear explanation of bounds on the rank of the MaxCut SDP from \cite{Raghavendra2009} and \cite{Montanari2016}. 

\newpage
\bibliography{MaxCutPaper.bib} 
\newpage
\begin{appendices}
We organize the appendix into four parts: Section~\ref{app-curr-alg}, analysis common to \cite{AroraKale} and us; Section~\ref{pf-currparams} and Section~\ref{pf-proposed-approach}, analysis of \cite{AroraKale} and our algorithm, respectively; Section~\ref{lin-al-results}, general technical results. 
\section{Analysis Common to Both Algorithms}\label{app-curr-alg}
In this section we provide proofs for two results: the first is that a solution to the reformulated problem (\ref{final-prob}) is indeed $\varepsilon$ close to that of the original; the second is the convergence guarantee of approximate lazy mirror descent, the framework for both the Arora-Kale algorithm as well as ours. 
 \begin{algorithm}[!ht]
 \DontPrintSemicolon
	\caption{Approximate lazy mirror descent}\label{alg-almd}
	\KwIn{ Objective function $f:\X \rightarrow \reals$, accuracy parameter $\varepsilon$.}
	{\bf{Parameters}}: Mirror map $\Phi:\D \rightarrow \reals$, norm $\norm{{}\cdot{}}$, step size $\eta$, iteration $T$, error bound $\delta$.\;
	
	Initialize: $x^{\lr{1}} \in \argmin_{x\in \X \cap \D} \Phi(x)$, $\widetilde{x}^{\lr{1}} = x^{\lr{1}}$, and $z^{\lr{1}}$ satisfying $\nabla \Phi(z^{\lr{1}}) = 0$.\;
	
	\For{$t = 1 \to T$}{	
	    $\nabla \Phi(z^{(t+1)}) \gets \nabla \Phi(z^{(t)}) - \eta \nabla f(\widetilde{x}^{(t)})$ \Comment{Lazy gradient update}\;
	    
	    Find $\widetilde{x}^{\lr{t+1}}$ such that $\E \|\widetilde{x}^{\lr{t+1}} - x^{\lr{t+1}}\| \leq \delta$, where $x^{\lr{t+1}} \in \argmin_{x \in \X \cap \D}  \D_{\Phi}(x, z^{\lr{t+1}})$ \Comment{Approximate projection}
	}
	For $t^*\stackrel{\text{unif.}}{\sim} \left\{1, 2, \dotsc, T\right\}$, return $\widetilde{x}^{\lr{t^*}}$. \;
\end{algorithm}
\subsection{From the Reformulated to the Original SDP}\label{pf-dual-var-range}
Our claim of reformulating (\ref{prob}) as (\ref{final-prob}) works because once we have a solution $X$ for the latter, we can apply the following result to obtain a matrix $\widehat{X}$ which satisfies all the required constraints of (\ref{prob}), and at which the objective value in (\ref{prob}) is better than that at $X$ in (\ref{final-prob}). 
\dualvarrange*
\begin{proof}
We first prove the positive semidefiniteness. Observe that since $\widehat{X}$ and $X$ are similar matrices, $X \succeq 0$ implies $\widehat{X}\succeq 0$ as well. Next, we define a matrix $Y$ as $Y_{ij} = \frac{X_{ij}}{\sqrt{\rho_i} \sqrt{\rho_j}}$. Without loss of generality, assume $Y_{11} \geq Y_{22} \geq \dotsc \geq Y_{nn}$. We also define a diagonal matrix, $\hatD$ as ${\hatD}_{ii} = \min(1, 1/\sqrt{Y_{ii}})$. If $Y_{ii} \geq 1$, then $\widehat{X}_{ii} = \frac{\rho_iY_{ii}}{\sqrt{\rho_i Y_{ii}}\sqrt{\rho_i Y_{ii}}} = 1$; otherwise, $\widehat{X}_{ii} = {Y_{ii}}$. This proves that $\widehat{X}_{ii} \leq 1$ for all $1\leq i \leq n$, which is precisely the claim bounding every diagonal entry. We now prove the claim about the objective value. By definition of $\widehat{D}$, $\widehat{X}$ and $Y$, we have $\widehat{X} = \widehat{D}\cdot Y\cdot \widehat{D}$. Therefore we get
\begin{align*}
	{C}\bullet (\widehat{X} - Y) - \sum_{i = 1}^n C_{ii} Y_{ii} ({\hatD}_{ii}^2 - 1) &=  \sum_{ i = 1}^n \sum_{j \neq i} C_{ij} Y_{ij} ( {\hatD}_{ii}{\hatD}_{jj} - 1)\\
				  &=  2\sum_{i = 1}^n\sum_{i < j} C_{ij} Y_{ij} ( {\hatD}_{ii}{\hatD}_{jj} - 1). 
\end{align*} The definition of $\hatD$ and the ordering assumption on $\{Y_{ii}\}$ imply $0 < {\hatD}_{11} \leq {\hatD}_{22} \leq \dotsc \leq {\hatD}_{nn} \leq 1$, which in turn means ${\hatD}_{ii} {\hatD}_{jj} \geq {\hatD}_{ii}^2$. Further, since $X\succeq 0$ and $Y = \diag (\sfrac{1}{\sqrt{\rho}})\cdot X\cdot \diag (\sfrac{1}{\sqrt{\rho}})$, we have $Y\succeq 0$. Therefore $Y_{ii} Y_{jj} \geq Y_{ij} Y_{ji}$. By symmetry of $Y$ and the assumed ordering of $\left\{Y_{ii}\right\}_1^n$, this can be simplified to $Y_{ii} \geq \abs{Y_{ij}}$ for $i < j$. These two facts simplify the above to 
\begin{align*}
{C}\bullet (\widehat{X} - Y) - \sum_{i = 1}^n C_{ii} Y_{ii} ({\hatD}_{ii}^2 - 1) &\geq 2\sum_{i = 1}^n \sum_{i < j} \abs{ C_{ij}} \abs{Y_{ij}}({\hatD}_{ii}^2 - 1)\\
	   &\geq 2\sum_{i = 1}^n\sum_{i < j } \abs{C_{ij}}Y_{ii}({\hatD}_{ii}^2 - 1)
\end{align*} Finally, since $\widehat{D}_{ii} \leq 1$, we have $\widehat{D}_{ii}^2 - 1 \leq 0$. Rearranging the terms in the last inequality, we get
\begin{align*}
{C}\bullet (\widehat{X} - Y) &\geq  \sum_{i = 1}^n C_{ii} Y_{ii}( {\hatD}_{ii}^2 - 1) + \sum_{i = 1}^n Y_{ii} ({\hatD}_{ii}^2 - 1) (\sum_{j > i} \abs{C_{ij}} + \sum_{j<i} \abs{C_{ij}})\\
					  &= \sum_{ i= 1}^n Y_{ii} ({\hatD}_{ii}^2 - 1)\underbrace{ \lr{C_{ii} + \sum_{i > j} \abs{ C_{ij} } + \sum_{i < j} \abs{C_{ij}}}}_{\text{$ \leq \rho_i$}} \\
					  &\geq \sum_{i = 1}^n Y_{ii} \rho_i ({\hatD}_{ii}^2 - 1) \\
					  &= -\sum_{i = 1}^n \rho_i \left( Y_{ii} - 1\right)^{+}
\end{align*} where we used $\hatD_{ii} = \min(1, 1/\sqrt{Y_{ii}})$ in the last step. Rearranging the terms in the last inequality gives \[{C}\bullet {\widehat{X}} \geq {C}\bullet {Y} - \sum_{i = 1}^n \rho_i \lr{Y_{ii} - 1}^{+} =  \widehat{C}\bullet {X} - \sum_{i = 1}^n(X_{ii} - \rho_i)^{+}, \] where the last step is  by definition of matrix $Y$. 

\end{proof}

\subsection{Analysis of Approximate Lazy Mirror Descent}\label{pf-almd}
 We now derive the convergence bound of approximate lazy mirror descent. The proof closely follows that of Theorem $4.3$ in Bubeck's monograph \cite{Bubeck}. 
\thmalmd*
\begin{proof} By convexity of $f$,
	\begin{align*}
		\sum\limits_{t=1}^T (  f(\widetilde{x}^{\lr{t}}) - f(x)) &\leq \sum\limits_{t=1}^T \inner{\nabla f(\widetilde{x}^{\lr{t}})}{\widetilde{x}^{\lr{t}} - x}=  \underbrace{ \sum\limits_{t=1}^T \inner{\nabla f(\widetilde{x}^{\lr{t}})}{\widetilde{x}^{\lr{t}}- x^{\lr{t}}}}_{\text{\encircle{A}}} + \underbrace{\sum_{t=1}^T \inner{\nabla f(\widetilde{x}^{\lr{t}})}{x^{\lr{t}} - x}}_{\text{\encircle{B}}}.\numberthis\label[ineq]{almd-ineq1}
	\end{align*} The term $\encircle{A}$ can be bounded by Cauchy-Schwarz inequality and the invariant $\E \norm{x^{\lr{t}} - \widetilde{x}^{\lr{t}}} \leq \delta$:
\[\encircle{A} \leq \sum\limits_{t=1}^T  \norm{\Delta^{\lr{t}}} \norm{\nabla f\left(\widetilde{x}^{\lr{t}}\right)}_{*} \leq \delta G T. \numberthis\label[ineq]{almd-circ2-bnd} \]Next, recall that Algorithm~\ref{alg-almd} initializes $x^{\lr{1}} \in \argmin_{\X \cap \D} \Phi(x)$ and $z^{\lr{1}}$ satisfying $\nabla \Phi(z^{\lr{1}}) = 0$, and repeats the following two steps: 
	\begin{align*}
		\nabla \Phi(z^{\lr{t}}) &= \nabla \Phi(z^{\lr{t-1}}) - \eta \nabla f(x^{\lr{t}})\\
		x^{\lr{t}} &= \argmin_{\X \cap \D} D_{\Phi}(x, z^{\lr{t}}).
	\end{align*} Now consider the potential function $\widetilde{\Psi}_t(x) \defeq \Phi(x) + \eta \inner{x}{\sum_{s  = 1}^t \nabla f(\widetilde{x}^{\lr{s}})}$. Applying the recursive definition of the gradient step, we can express $x^{\lr{t+1}} = \argmin\limits_{x \in \X \cap \D} \widetilde{\Psi}_{t}\lr{x}$. Since $\Phi$ is $\alpha$-strongly convex, so is the potential function $\Psi_t$. We can express these two statements as follows: 
	\begin{align*}
		\widetilde{\Psi}_t(x^{\lr{t+1}}) - \widetilde{\Psi}_t(x^{\lr{t}}) &\leq \underbrace{\inner{\nabla \widetilde{\Psi}_t(x^{\lr{t+1}})}{ x^{\lr{t+1}} - x^{\lr{t}} }}_{\text{$\leq 0$, by optimality of $x^{\lr{t+1}}$}} - \tfrac{\alpha}{2}\norm{x^{\lr{t+1}} - x^{\lr{t}}}^2 \\
														&\leq -\tfrac{\alpha}{2}\norm{x^{\lr{t+1}} - x^{\lr{t}}}^2. \numberthis\label[ineq]{potential-sc}
	\end{align*} We can also write a lower bound for the left hand side of Inequality~\ref{potential-sc} by evaluating the potential function $\widetilde{\Psi}_t$ at points $x^{\lr{t+1}}$ and $x^{\lr{t}}$:
	\begin{align*}
		\widetilde{\Psi}_t(x^{\lr{t+1}}) - \widetilde{\Psi}_t(x^{\lr{t}}) &= \Phi\lr{x^{\lr{t+1}}} + \eta \sum_{s = 1}^t \inner{\nabla f(\widetilde{x}^{\lr{s}}) }{ x^{\lr{t+1}} } - \Phi(x^{\lr{t}}) - \eta \sum_{s = 1}^t \inner{\nabla f(\widetilde{x}^{\lr{s}})}{ x^{\lr{t}} }\\
															&=  \underbrace{\widetilde{\Psi}_{t-1}(x^{\lr{t+1}}) - \widetilde{\Psi}_{t-1}(x^{\lr{t}})}_{\text{$ \geq 0 $, since $x^{\lr{t}}$ minimizes $\widetilde{\Psi}_{t-1}\lr{x}$}} + \eta \inner{\nabla f(\widetilde{x}^{\lr{t}})}{x^{\lr{t+1}} - x^{\lr{t}}}  \\
															&\geq \eta \inner{\nabla f(\widetilde{x}^{\lr{t}})}{x^{\lr{t+1}} - x^{\lr{t}}}. \numberthis\label[ineq]{potential-fo-opt}
	\end{align*} Reverse and chain Inequalities~\ref{potential-sc} and \ref{potential-fo-opt}, and apply Cauchy-Schwarz inequality to get \[ \frac{\alpha}{2}\norm{x^{\lr{t+1}} - x^{\lr{t}}}^2 \leq \eta \inner{\nabla f(\widetilde{x}^{\lr{t}})}{x^{\lr{t}} - x^{\lr{t+1}}} \leq \eta G \norm{x^{\lr{t}} - x^{\lr{t+1}}}.\numberthis\label[ineq]{almd-chain} \] This shows that \[ \norm{x^{\lr{t}} - x^{\lr{t+1}}} \leq\frac{2\eta G}{\alpha}, \numberthis\label{almd-chain-conc} \] and applying this to the second part of Inequality~\ref{almd-chain} gives \[ \inner{\nabla f(\widetilde{x}^{\lr{t}})}{x^{\lr{t}} - x^{\lr{t+1}}} \leq \frac{2\eta G^2}{\alpha}. \numberthis\label[ineq]{almd-inner-prod-bound} \] We now claim \[\sum_{t= 1}^{T} \inner{\nabla f(\widetilde{x}^{\lr{t}})}{x^{\lr{t}} - x} \leq \sum_{t = 1}^{T} \inner{ \nabla f(\widetilde{x}^{\lr{t}})}{x^{\lr{t}} - x^{\lr{t+1}}} + \tfrac{1}{\eta} ( \Phi(x) - \Phi(x^{\lr{1}}) ). \numberthis\label[ineq]{almd-claim}\] Note that this claim immediately gives the desired error bound; this can be seen as follows: the left-hand side is exactly the term $\mbox{\numcircledtikz{2}}$ in Inequality~\ref{almd-ineq1}; the first term of the right-hand side is bounded in Inequality~\ref{almd-inner-prod-bound}, and the second one is bounded by the definition of set size $D$. Therefore Inequality~\ref{almd-claim} simplifies to 
	\begin{align*}
		\encircle{B} &\leq \frac{2\eta G^2 T}{\alpha} + \frac{D}{\eta}. \numberthis\label[ineq]{almd-circ1-bnd}
	\end{align*}  Combine Inequalities~\ref{almd-circ1-bnd} and \ref{almd-circ2-bnd} with \ref{almd-ineq1}, apply Jensen's inequality, and the fact that $t^*$ is picked uniformly at random from $\{1, 2, \dotsc, T\}$, to get the desired error bound. We now prove Inequality~\ref{almd-claim}. First, we rewrite it as \[\sum_{t = 1}^T \inner{\nabla f(\widetilde{x}^{\lr{t}})}{x^{\lr{t+1}}} + \frac{\Phi(x^{\lr{1}})}{\eta} \leq \sum_{t = 1}^T \inner{\nabla f(\widetilde{x}^{\lr{t}})}{x} + \frac{\Phi(x)}{\eta} .\] The claim is true for $T = 0$ for all $x \in \X$, by the choice of $x^{\lr{1}}$. Assume it holds for all $x \in \X$ at time $ T = t^{\prime} - 1$. Therefore in particular, it holds at the point $x = x^{\lr{t^{\prime}+1}}$. This implies 
    \begin{align*}
\sum_{t= 1}^{t^{\prime}} \inner{ \nabla f(\widetilde{x}^{\lr{t}})}{x^{\lr{t+1}}} + \frac{\Phi(x^{\lr{1}})}{\eta} &= \inner{\nabla f(\widetilde{x}^{\lr{t^{\prime}}})}{x^{\lr{t^{\prime} + 1}}} + \underbrace{ \sum_{t = 1}^{t^{\prime}-1} \inner{\nabla f(\widetilde{x}^{\lr{t}})}{x^{\lr{t+1}}} + \frac{\Phi(x^{\lr{1}})}{\eta}}_{\text{Apply induction hypothesis at $x^{\lr{{t^{\prime}+1}}}$}} \\
				&\leq \inner{\nabla f(\widetilde{x}^{\lr{t^{\prime}}})}{x^{\lr{t^{\prime} + 1}}} + \sum_{t = 1}^{t^{\prime}  - 1} \inner{\nabla f(\widetilde{x}^{\lr{t}})}{x^{\lr{t^{\prime}+1}}} + \frac{\Phi(x^{\lr{t^{\prime}+1}})}{\eta}\\
				&= \sum_{t = 1}^{t^{\prime}} \inner{\nabla f(\widetilde{x}^{\lr{t}})}{x^{\lr{t^{\prime} + 1}}} + \frac{\Phi\lr{x^{\lr{t^{\prime}+1}}}}{\eta}\\
				&= \frac{1}{\eta}\widetilde{\Psi}_{t^{\prime}}\lr{x^{\lr{{t^{\prime}+1}}}} \\
				&\leq \frac{1}{\eta}\widetilde{\Psi}_{t^{\prime}}(x) \\
    		&= \sum_{t = 1}^{t^{\prime}} \inner{\nabla f\lr{\widetilde{x}^{\lr{t}}}}{x} + \frac{\Phi(x)}{\eta},
	\end{align*} where the last inequality is by optimality of $x^{\lr{t^{\prime}+1}}$ in minimizing $\widetilde{\Psi}_{t^{\prime}}$. This completes the induction, and therefore proves Inequality~\ref{almd-claim}, thus completing the proof of the error bound. 
\end{proof}

\section{Analysis of the Arora-Kale Algorithm}\label{pf-currparams} In this section, we display Algorithm~\ref{alg-curr} in the approximate mirror descent framework and provide its analysis. In Section~\ref{sec-ak-params}, we derive the values of all parameters; in Section~\ref{pf-ana-algcurr}, we derive the computational costs of the main steps. We then conclude with the correctness and cost of their algorithm. The main export of this section is the following theorem. 
\begin{theorem}[Run Time \cite{AroraKale}]\label{curr-alg-cost}
Given $C \in \reals^{n \times n}$ with $m \geq n$ non-zero entries and  $0<\varepsilon\leq\tfrac{1}{2}$, we can find, in time $\tilde{\Ord}\left(m/\varepsilon^{5}\right)$, a matrix $Y \in \symm^n$ with $\Ord(m)$ non-zero entries and a diagonal matrix $S\in \reals^{n\times n}$ such that $\widetilde{X}^*=   S  \cdot \tfrac{K\exp \lr{Y}}{\Tr \exp\lr{Y}} \cdot S$ satisfies $\widetilde{X}^*\succeq 0$, $\widetilde{X}^*_{ii}\leq 1$ for all $i \in [n]$, and $\E (C\bullet \widetilde{X}^*)\geq  C\bullet X^* -\varepsilon \cdot \sum_{i,j}\abs{C}_{ij}$.
\end{theorem}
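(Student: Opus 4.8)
The plan is to prove this by instantiating the approximate lazy mirror descent guarantee of \cref{thm-almd} for the reformulated objective $\widehat f$ of \cref{final-prob}, with the von Neumann entropy mirror map $\Phi(X) = X\bullet\log X$ over $\{X\succeq 0 : \Tr X = n\}$, and then transporting the resulting bound back to \cref{prob} via \cref{lem-dual-var-range}. Conclusions (1) and (2) are immediate from \cref{lem-dual-var-range} applied to $X = \tfrac{n\exp Y}{\Tr\exp Y}\succeq 0$, since $\widetilde X^* = SXS$; the content is conclusion (3) and the cost count. Throughout I would use the normalization $\sum_i\rho_i = n$, so $\sum_{i,j}|C_{ij}| = n$, and work with the nuclear norm $\|\cdot\|_{\mathrm{nuc}}$ and its dual, the operator norm.

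First I would pin down the three quantities appearing in \cref{errbnd-almd}. The diameter is $D = n\log n$, since $\Phi(X) = \sum_i\lambda_i\log\lambda_i$ over $\sum_i\lambda_i = n$ runs from $0$ (at $X = I$) to $n\log n$ (at a rank-one point). For strong convexity, writing $\Phi(X) = n\log n + n\,\psi(X/n)$ with $\psi$ the unit-trace entropy — which is $\Theta(1)$-strongly convex in the nuclear norm, the matrix analog of Pinsker's inequality — shows $\Phi$ is $\alpha = \Theta(1/n)$-strongly convex in the nuclear norm. A subgradient of $\widehat f$ at any $X$ is $\diag(\ones_{\{X\ge\rho\}}) - \widehat C$, with $\|\diag(\ones_{\{X\ge\rho\}})\|_{\mathrm{op}}\le 1$, while $\|\widehat C\|_{\mathrm{op}}\le 1$: for a unit vector $u$, the substitution $v_i = u_i/\sqrt{\rho_i}$ together with $|v_iv_j|\le\tfrac{1}{2}(v_i^2 + v_j^2)$ gives $|u^T\widehat C u|\le\sum_i v_i^2\rho_i = \|u\|_2^2$; so all subgradients of $\widehat f$ have operator norm $\Ord(1)$, i.e. $G = \Ord(1)$. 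Substituting $\alpha = \Theta(1/n)$, $D = n\log n$, $G = \Ord(1)$ and the algorithm's $T = \Theta(\log n/\varepsilon^2)$, $\eta = \Theta(\varepsilon)$ into \cref{errbnd-almd} makes its first two terms $\Ord(\varepsilon n)$.

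For the third term I must certify $\delta = \Ord(\varepsilon n)$. I would model the ALMD iterate $\widetilde X^{(t)}$ as the true projection $X^{(t)} = \tfrac{n\exp Y^{(t)}}{\Tr\exp Y^{(t)}}$ with only its diagonal overwritten by the value the algorithm actually computes; this is exactly the content of \cref{rem-onlydiag}, that the gradient step reads only the diagonal, so the off-diagonal block never needs to be formed. Since $\widetilde X^{(t)} - X^{(t)}$ is then a diagonal matrix, $\|\widetilde X^{(t)} - X^{(t)}\|_{\mathrm{nuc}} = \dnorm{\widetilde X^{(t)} - X^{(t)}}$, i.e. just the total error on the diagonal, which has two sources. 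For the Taylor truncation $\poly(\tfrac{1}{2}Y^{(t)}, T'')$ of \cref{def-poly}: because $G$ does not depend on the iterates, $\|Y^{(t)}\|_{\mathrm{op}}\le\eta T G = \Ord(\log n/\varepsilon)$ for every $t$, and the standard tail bound for the exponential series then shows degree $T'' = \Theta(\log n/\varepsilon)$ suffices to push the renormalized diagonal truncation error below $\varepsilon n$. For the Johnson--Lindenstrauss step, $\mathrm{Var}((u^T\zeta)^2) = 2\|u\|_2^4$ gives $\E|\jl(u, T') - \|u\|_2^2|\le\sqrt{2/T'}\,\|u\|_2^2$; applying this to each row $u$ of $\widetilde{\exp}(\tfrac{1}{2}Y^{(t)})$, summing over the $n$ rows, and renormalizing by $n/\Tr$, gives expected diagonal error $\Ord(n/\sqrt{T'}) = \Ord(\varepsilon n)$ for $T' = \Theta(\log n/\varepsilon^2)$. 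Hence $\delta = \Ord(\varepsilon n)$, \cref{thm-almd} yields $\E\widehat f(\widetilde X^{(t^*)}) - \min_{X\succeq 0,\,\Tr X = n}\widehat f(X)\le\Ord(\varepsilon n) = \Ord(\varepsilon\sum_{i,j}|C_{ij}|)$, and combining with \cref{lem-dual-var-range}(3), the equivalence of \cref{prob} and \cref{final-prob}, and a constant rescaling of $\varepsilon$ gives conclusion (3) for $\widetilde X^* = S\cdot\tfrac{K\exp Y}{\Tr\exp Y}\cdot S$ (with $K = n$). Sparsity follows from $Y^{(t^*)} = \eta(t^*-1)\widehat C - \eta\sum_{s<t^*}\diag(\ones_{\{\widetilde X^{(s)}\ge\rho\}})$, which has $\Ord(m+n) = \Ord(m)$ nonzeros.

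For the running time, each iteration applies $\poly(\tfrac{1}{2}Y^{(t)}, T'')$ to $T'$ independent Gaussian vectors by Horner's rule — each application costing $T''$ sparse matrix--vector products of $\Ord(m)$ each — plus $\Ord(nT')$ to form the sample means; over $T$ iterations this is $\Ord(T\,T'\,T''\,m) = \widetilde{\Ord}(m\cdot\varepsilon^{-2}\cdot\varepsilon^{-2}\cdot\varepsilon^{-1}) = \widetilde{\Ord}(m/\varepsilon^5)$. I expect the delicate point to be the error-propagation accounting behind $\delta$: the subgradient $\diag(\ones_{\{X\ge\rho\}}) - \widehat C$ is a discontinuous function of $X$, so one cannot argue that the gradient at $\widetilde X^{(t)}$ is close to the gradient at $X^{(t)}$. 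The argument must instead go entirely through \cref{thm-almd}, whose error term $\delta G$ is controlled by the Lipschitz constant of $\widehat f$ itself (not of its gradient) times the iterate distance; and that distance is tame only because the iterate is read as the true projection with a diagonal overwrite, so the perturbation is diagonal and its nuclear norm collapses to precisely the $\dnorm{\cdot}$-error that the Johnson--Lindenstrauss and Taylor bounds control.
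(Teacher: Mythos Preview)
Your proposal is correct and follows essentially the same route as the paper's own analysis in \cref{pf-currparams}: instantiate \cref{thm-almd} with the entropy mirror map over $\{X\succeq 0:\Tr X=n\}$, pin down $D=n\log n$, $\alpha=\Theta(1/n)$, $G=\Ord(1)$, control $\delta$ by splitting into Taylor-truncation and Johnson--Lindenstrauss errors, transport back via \cref{lem-dual-var-range}, and multiply $T\cdot T'\cdot T''\cdot m$ for the cost. The only point you gloss over is that ``renormalizing by $n/\Tr$'' hides a ratio-of-estimates argument (the denominator $\Tr\widehat{\exp}\,Y^{(t)}$ is itself random), which the paper handles explicitly via the relative-error bounds of \cref{algcurr-taylor-cost-general} and \cref{algcurr-jl-cost}; this is a routine detail rather than a gap.
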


\begin{algorithm}[!ht]
\caption{Reinterpreting \cite{AroraKale}}\label{alg-curr}
\KwIn{Cost matrix $C \in \reals^{n \times n}$, accuracy parameter $\varepsilon$.}
{\bf{Parameters}}: $T = 256 \log n/\varepsilon^2$,  $T^{\prime} = 10240 \log n/\varepsilon^2$, $T^{\prime\prime} = (1/\varepsilon)\cdot 64 \log n$,  $\eta = \varepsilon/64$. Set $\widehat{C}$ and $\rho$ as defined in Lemma~\ref{lem-dual-var-range}. \;

Initialize $Y^{(1)} \gets \zeros$.\;  

Define $\nabla f(M) \defeq \diag \ones_{M\geq \rho} - \widehat{C}$.\;

\For{$t = 1 \KwTo T$}{
    $\widetilde{\exp}\lr{\tfrac{1}{2}Y^{\lr{t}}} \gets  \poly\left( \tfrac{1}{2} Y^{\lr{t}}, T^{\prime\prime} \right)$. \Comment{Approximate matrix exponential}\;
    
    $\widehat{\exp} Y^{\lr{t}} \gets \jl\left(\widetilde{\exp}\left(\tfrac{1}{2}Y^{\lr{t}}\right), T^{\prime} \right)$. \Comment{Approximate projection}\;
    
    $\widetilde{X}^{\lr{t}}\gets n \tfrac{\widehat{\exp}\left(Y^{\lr{t}}\right)}{\Tr \widehat{\exp} Y^{\lr{t}} }$ \Comment{Scaling due to the trace constraint}\;
    
	$Y^{\lr{t+1}}\gets Y^{\lr{t}} - \eta \nabla f(\widetilde{X}^{\lr{t}})$.	 \Comment{Gradient update.}\;
}
For $t^* \stackrel{\text{unif.}}{\sim} \left\{1, 2, \dotsc, T\right\}$, return $Y^{\lr{t^*}}$ and $S$, where  $S$ is from Lemma~\ref{lem-dual-var-range}.\;
\end{algorithm} 

\subsection{Parameters}\label{sec-ak-params}
As can be seen in Algorithm~\ref{alg-almd}, approximate lazy mirror descent requires five parameters: the set diameter, Lipschitz constant of the objective, strong convexity of the mirror map, step size, and number of iterations. The first three depend on our choice of mirror map $\Phi$ and objective $f$. The last two can be chosen based on these parameters and Inequality~\ref{errbnd-almd}. 
\begin{restatable}[Set Diameter]{lemma}{currsetdiam}\label{lem_curr_set_diam}
Given $\Phi\lr{X} =  X\bullet \log X$ and the domain $\{X: X\succeq 0, \Tr X = n\}$, the set diameter measured by $\Phi$ is given by $D =  n \log n$. 
\end{restatable}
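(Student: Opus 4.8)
The plan is to reduce the computation of $D = \sup_{X \in \D \cap \X} \Phi(X) - \inf_{X \in \D \cap \X} \Phi(X)$ over the set $\D \cap \X = \{X : X \succeq 0,\ \Tr X = n\}$ to a scalar optimization in the eigenvalues of $X$. Diagonalizing $X = U \diag(\lambda) U^T$ and using the functional-calculus definition of $\log X$, we have $\Phi(X) = \Tr(X \log X) = \sum_{i=1}^n \lambda_i \log \lambda_i$, where $\lambda = (\lambda_1, \dots, \lambda_n)$ ranges over the scaled simplex $\{\lambda \in \reals^n : \lambda_i \ge 0,\ \sum_i \lambda_i = n\}$; here I adopt the standard convention $0 \log 0 = 0$, under which $t \mapsto t \log t$ is continuous on $[0, \infty)$. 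Since this simplex is compact and the objective is continuous on it, both the supremum and the infimum are attained, so it remains only to evaluate them.

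For the infimum I would use the tangent-line (Bregman) inequality $t \log t \ge t - 1$, valid for all $t \ge 0$ because $t \log t$ is convex with the line $t-1$ as its tangent at $t = 1$; summing over $i$ gives $\Phi(X) \ge \sum_i (\lambda_i - 1) = n - n = 0$, with equality exactly when every $\lambda_i = 1$, i.e. at $X = I_n$. For the supremum I would note that each $\lambda_i \in [0, n]$, so coordinate-wise $\lambda_i \log \lambda_i \le \lambda_i \log n$ (trivially when $\lambda_i \le 1$, since then the left side is $\le 0$; and from $\log \lambda_i \le \log n$ when $\lambda_i \ge 1$); summing gives $\Phi(X) \le (\log n)\sum_i \lambda_i = n \log n$, with equality when $\lambda$ is a vertex of the simplex, i.e. $X = n v v^T$ for a unit vector $v$. (Alternatively, convexity of $\sum_i \lambda_i \log \lambda_i$ on the polytope forces its maximum at a vertex, yielding the same value.) Combining the two bounds gives $D = n \log n - 0 = n \log n$.

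I do not expect a real obstacle here; the only points that need care are the $0 \log 0 = 0$ convention — so that $\Phi$ is continuous on the closed feasible set and the extrema are genuinely attained rather than merely approached on the boundary — and recording that the supremum is achieved precisely on rank-one matrices, a fact that will be convenient when bounding the algorithm's iterates later.
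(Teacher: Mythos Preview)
Your argument is correct. The paper actually states this lemma without proof, treating it as a standard fact about the matrix entropy on the trace-$n$ spectraplex; your eigenvalue reduction together with the tangent-line bound $t\log t \ge t-1$ for the infimum and the coordinate bound $\lambda_i\log\lambda_i \le \lambda_i\log n$ for the supremum is exactly the expected justification, and nothing more is needed.
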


\begin{restatable}[Lipschitz constant]{lemma}{currfunclips}\label{lem_curr_alg_lips}
The problem objective $\widehat{f}(X) = -\widehat{C}\bullet X + \sum_{i=1}^n (X_{ii} - \rho_i)^{+}$ (recall that $\rho_i = \sum_{j = 1}^n \abs{C_{ij}}$) is $2$-Lipschitz in the nuclear norm. Recall that nuclear norm of a matrix is the sum of its singular values. 
\end{restatable}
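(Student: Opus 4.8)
\textbf{Proof proposal for \cref{lem_curr_alg_lips}.}
The plan is to use the standard duality characterization of the Lipschitz constant: a convex function $f$ is $G$-Lipschitz with respect to a norm $\norm{\cdot}$ if and only if every subgradient has dual norm at most $G$. Since the dual of the nuclear norm is the operator (spectral) norm, it suffices to show that every subgradient of $\widehat f$ has spectral norm at most $2$. The subdifferential of $\widehat f$ at any $X$ is $\partial\widehat f(X) = -\widehat C + \diag(v)$, where $v\in\reals^n$ has $v_i\in[0,1]$ for every $i$ (with $v_i\in\{0,1\}$ at points of differentiability, arising from the derivative of $t\mapsto t^+$ applied to $X_{ii}-\rho_i$). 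Hence it is enough to bound $\norm{\diag(v)}_{\mathrm{op}}$ and $\norm{\widehat C}_{\mathrm{op}}$ separately and apply the triangle inequality.

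The first term is immediate: $\norm{\diag(v)}_{\mathrm{op}} = \max_i |v_i| \le 1$. The heart of the argument is the bound $\norm{\widehat C}_{\mathrm{op}} \le 1$, which I would prove directly from the quadratic form. For any $x\in\reals^n$, substituting $y_i = x_i/\sqrt{\rho_i}$ gives $x^T\widehat C x = \sum_{i,j} \widehat C_{ij} x_i x_j = \sum_{i,j} C_{ij} y_i y_j$. Now bound termwise, $\bigl|\sum_{i,j} C_{ij} y_i y_j\bigr| \le \sum_{i,j} |C_{ij}|\,|y_i|\,|y_j| \le \sum_{i,j} |C_{ij}|\cdot \tfrac{1}{2}(y_i^2 + y_j^2) = \sum_i y_i^2 \sum_j |C_{ij}| = \sum_i \rho_i y_i^2 = \sum_i x_i^2 = \norm{x}_2^2$, where we used the AM–GM inequality $|y_i y_j|\le\tfrac12(y_i^2+y_j^2)$, symmetry of $|C_{ij}|$, and the definition $\rho_i = \sum_j |C_{ij}|$. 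Since $\widehat C$ is symmetric, $\norm{\widehat C}_{\mathrm{op}} = \sup_{\norm{x}_2=1}|x^T\widehat C x| \le 1$.

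Combining, $\norm{\partial\widehat f(X)}_{\mathrm{op}} \le \norm{\widehat C}_{\mathrm{op}} + \norm{\diag(v)}_{\mathrm{op}} \le 1 + 1 = 2$ for every $X$ and every choice of subgradient, which yields the claimed $2$-Lipschitzness of $\widehat f$ in the nuclear norm. The only point requiring slight care is that $\widehat f$ is nonsmooth because of the $(X_{ii}-\rho_i)^+$ terms; I expect the main (mild) obstacle to be phrasing the subgradient computation cleanly — noting that $\widehat f$ is a sum of a linear function and a composition of the coordinatewise positive-part function with the diagonal map, so its subgradients are exactly $-\widehat C + \diag(v)$ with $v_i\in[0,1]$ — after which the spectral-norm bounds above go through verbatim for the full subdifferential.
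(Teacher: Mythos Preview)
Your argument is correct and follows the same overall strategy as the paper: bound the dual (operator) norm of any subgradient by $2$ via the triangle inequality, splitting into the diagonal term and $\widehat C$. The only substantive difference is in how you establish $\norm{\widehat C}_{\mathrm{op}}\le 1$. The paper applies Gershgorin to the non-symmetric matrix $\diag(1/\rho)\,C$ to bound its spectral radius by $1$, then notes this matrix is similar to $\widehat C$ (conjugate by $\diag(\sqrt\rho)$), and uses symmetry of $\widehat C$ to equate spectral radius with operator norm. Your direct quadratic-form bound via AM--GM is a cleaner and more self-contained route to the same conclusion, and as a bonus you handle the nonsmoothness explicitly (the paper simply writes the ``gradient'' at a generic point). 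Both arguments rely on $C$ being symmetric, which is implicit throughout the paper.
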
 
\begin{proof}\label{pf_curr_alg_lips} 
The gradient of the objective at point $X$ is $\nabla \widehat{f}(X) = \diag(\ones_{\left\{X \geq \rho \right\} })  - \widehat{C}$. By the Gershgorin Disk Theorem, we have \[ \opnorm{\diag\lr{\tfrac{1}{\rho}} C} \leq \max\limits_{i \in [n]} \left( \frac{1}{\rho_i}\cdot \abs{C_{ii}} + \frac{1}{\rho_i} \cdot \sum_{j\neq i} \abs{C_{ij}}  \right) = \max\limits_{i \in [n]} \left( \frac{1}{\rho_i} \cdot \sum\limits_{j = 1}^n \abs{C_{ij}}\right) = 1,\numberthis\label{lips-opnorm-chat}\] where in the last equality we use the choice of $\rho_i = \sum_{j = 1}^n \abs{C_{ij}}$. Since the matrices $\diag(\sfrac{1}{\rho})\cdot C$ and $\widehat{C} = \diag(\sfrac{1}{\sqrt{\rho}})\cdot C\cdot \diag(\sfrac{1}{\sqrt{\rho}})$ are similar, they have the same set of eigenvalues (and therefore, the same operator norm). Therefore \[ \opnorm{\diag(\ones_{\left\{X \geq \rho\right\}}) - \widehat{C}} \leq  \opnorm{\diag(\ones_{\left\{X \geq \rho\right\}}) } + \opnorm{\widehat{C}} = 1 + 1 = 2.\] When we have $\norm{\nabla \widehat{f}} \leq G$ for some $G$, it implies $f$ is $G$-Lipschitz in $\norm{{}\cdot{}}_*$ (the dual norm). Therefore, in our case, we have that $\widehat{f}$ is $2$-Lipschitz in the nuclear norm (dual of the operator norm).
\end{proof}

\begin{restatable}[Strong Convexity]{lemma}{currfunsc}(\cite{KakadeDuality})\label{neg-entropy-sc}
The mirror map $\Phi\lr{X} = X\bullet \log X$ is $1/(2n)$-strongly convex with respect to the nuclear norm on the domain $\left\{X \in \symm^n : X \succeq 0, \Tr\lr{X} = n\right\}$. 
\end{restatable}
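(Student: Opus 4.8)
The plan is to reduce to the case of unit-trace matrices (density matrices) by a scaling argument and then invoke the quantum analogue of Pinsker's inequality, which is the content attributed to \cite{KakadeDuality}. Since $\Phi(X)=X\bullet\log X$ is twice continuously differentiable on the positive-definite cone with $\nabla\Phi(X)=\log X+I$, and the domain $\{X\succeq 0,\ \Tr X=n\}$ is convex, $\tfrac{1}{2n}$-strong convexity with respect to the nuclear norm is equivalent to the Bregman bound $D_\Phi(X_1,X_2)\geq\tfrac{1}{4n}\|X_1-X_2\|_{\mathrm{nuc}}^2$ for all feasible $X_1,X_2$, where $D_\Phi(X_1,X_2)\defeq\Phi(X_1)-\Phi(X_2)-\inner{\nabla\Phi(X_2)}{X_1-X_2}$. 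Writing $X_i=nY_i$ with $Y_i\succeq 0$ and $\Tr Y_i=1$, the identity $\log(nY)=(\log n)I+\log Y$ together with $\Tr(X_1-X_2)=0$ gives, after a short computation, $D_\Phi(X_1,X_2)=n\,D_\psi(Y_1,Y_2)$, where $\psi(Y)\defeq Y\bullet\log Y$ is the von~Neumann negative entropy on density matrices; moreover $D_\psi(Y_1,Y_2)=\Tr\bigl(Y_1(\log Y_1-\log Y_2)\bigr)$, i.e. the quantum relative entropy $S(Y_1\|Y_2)$ (the $\Tr(Y_1-Y_2)$ contribution vanishes and the $\Tr(Y_2\log Y_2)$ terms cancel).

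It therefore suffices to show that $\psi$ is $\tfrac12$-strongly convex with respect to the trace norm on density matrices, i.e. $S(Y_1\|Y_2)\geq\tfrac14\|Y_1-Y_2\|_{\mathrm{nuc}}^2$ — a quantum Pinsker inequality. One clean route: let $P$ be the projector onto the positive eigenspace of $Y_1-Y_2$ and apply the two-outcome projective measurement $\{P,I-P\}$. Because $\Tr(Y_1-Y_2)=0$, this measurement preserves the trace distance, so the total variation distance of the induced classical distributions $p,q$ equals $\|Y_1-Y_2\|_{\mathrm{nuc}}$, while monotonicity of relative entropy under measurements gives $S(Y_1\|Y_2)\geq D_{\mathrm{KL}}(p\|q)$; the classical Pinsker inequality then closes the bound (in fact with the sharp constant $\tfrac12$, which is more than enough). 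Since $\|X_1-X_2\|_{\mathrm{nuc}}=n\|Y_1-Y_2\|_{\mathrm{nuc}}$, combining this with $D_\Phi(X_1,X_2)=n\,D_\psi(Y_1,Y_2)$ yields $D_\Phi(X_1,X_2)\geq\tfrac{1}{4n}\|X_1-X_2\|_{\mathrm{nuc}}^2$, which is the claimed $\tfrac{1}{2n}$-strong convexity.

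The only substantive step is the Pinsker-type bound, and specifically the fact that it holds with a dimension-free constant. A purely second-order attempt falls short: by the Daleckii--Krein formula one has $\nabla^2\Phi(X)[H,H]=\sum_{i,j}\tfrac{\log\lambda_i-\log\lambda_j}{\lambda_i-\lambda_j}\,\widetilde H_{ij}^2$ (where $\widetilde H=U^THU$ in the eigenbasis $X=U\Lambda U^T$, with the $i=j$ term read as $1/\lambda_i$), and using the logarithmic-mean/arithmetic-mean inequality $\tfrac{\log a-\log b}{a-b}\geq\tfrac{2}{a+b}$ together with $\lambda_i+\lambda_j\leq\Tr X=n$ only gives $\nabla^2\Phi(X)[H,H]\geq\tfrac1n\|H\|_F^2$; passing from the Frobenius to the nuclear norm then loses a factor of $n$, leaving the far-too-weak $1/n^2$. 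The missing factor of order $n$ is exactly what the data-processing/Pinsker argument supplies, which is why we rely on \cite{KakadeDuality} for that step rather than reprove it; the scaling computations before and after are routine.
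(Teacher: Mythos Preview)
Your argument is correct. The paper itself does not prove this lemma at all: it is stated with a citation to \cite{KakadeDuality} and used as a black box, so there is no ``paper's own proof'' to compare against. What you have written is essentially a self-contained proof of the cited result: scale to density matrices, identify the Bregman divergence with quantum relative entropy, and apply a quantum Pinsker inequality obtained via monotonicity of relative entropy under the two-outcome measurement $\{P,I-P\}$ together with classical Pinsker. The scaling computation $D_\Phi(nY_1,nY_2)=n\,S(Y_1\|Y_2)$ and the norm relation $\|X_1-X_2\|_{\mathrm{nuc}}=n\|Y_1-Y_2\|_{\mathrm{nuc}}$ are routine and correctly turn $S(Y_1\|Y_2)\geq\tfrac14\|Y_1-Y_2\|_{\mathrm{nuc}}^2$ into the claimed $\tfrac{1}{2n}$ constant.

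Two minor remarks. First, your sentence ``the total variation distance of the induced classical distributions $p,q$ equals $\|Y_1-Y_2\|_{\mathrm{nuc}}$'' is slightly loose in conventions: what you actually get is $\|p-q\|_1=\|Y_1-Y_2\|_{\mathrm{nuc}}$, and if total variation is taken as $\tfrac12\|p-q\|_1$ this is off by a factor of two; since you only claim the Pinsker bound with constant $\tfrac14$ (rather than the sharp $\tfrac12$), this does not affect the conclusion. Second, your closing paragraph explaining why the naive second-order route via Daleckii--Krein only yields $1/n^2$ is a nice pedagogical addition that the paper does not contain.
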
 

\begin{lemma}\label{lem-eta-algcurr}
	Choosing $\eta = \varepsilon/64$ and $T = 256 \log n/\varepsilon^2$ in Algorithm~\ref{alg-curr} gives an accuracy of $\varepsilon n$. 
\end{lemma}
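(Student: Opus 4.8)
The plan is to read off the accuracy directly from the convergence guarantee of approximate lazy mirror descent, \cref{thm-almd}, since \cref{alg-curr} is an instance of it with mirror map $\Phi(X) = X\bullet\log X$, the nuclear norm, and objective $\widehat{f}$. \cref{thm-almd} gives, after $T$ iterations,
\[ \E \widehat{f}(\widetilde{X}^{(t^*)}) - \widehat{f}(X^*) \leq \frac{D}{T\eta} + \frac{2\eta G^2}{\alpha} + \delta G, \]
where \cref{lem_curr_set_diam}, \cref{lem_curr_alg_lips}, and \cref{neg-entropy-sc} already supply $D = n\log n$, $G = 2$, and $\alpha = 1/(2n)$, and $\delta$ is the expected nuclear-norm gap $\E\|X^{(t)} - \widetilde{X}^{(t)}\|$ between the exact projection (the trace-rescaled $\exp Y^{(t)}$) and the iterate $\widetilde{X}^{(t)}$ produced by the $\poly$ and $\jl$ subroutines.

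First I would substitute $\eta = \varepsilon/64$ and $T = 256\log n/\varepsilon^2$ into the first two terms: the $\log n$ factors cancel in $D/(T\eta)$, leaving $D/(T\eta) = n\varepsilon/4$, and a one-line computation gives $2\eta G^2/\alpha = 2\cdot\tfrac{\varepsilon}{64}\cdot 4\cdot 2n = \varepsilon n/4$. So the ``mirror-descent part'' of the bound contributes exactly $\varepsilon n/2$, independent of the approximation quality.

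Second, I would bound $\delta G = 2\delta$ by showing the high-accuracy projection keeps $\delta \leq \varepsilon n/4$; this is the only place the approximations enter. The truncated-Taylor subroutine $\poly$ with $T'' = \widetilde{\Ord}(1/\varepsilon)$ terms approximates $\exp(Y^{(t)}/2)$ to inverse-polynomial accuracy over the spectral range of $Y^{(t)}$, which one checks lies in $[-\Ord(1/\varepsilon), \widetilde{\Ord}(1)]$; then $T' = \widetilde{\Ord}(1/\varepsilon^2)$ Johnson--Lindenstrauss projections estimate the squared row norms, i.e. the diagonal entries of $\exp Y^{(t)}$, to a $(1\pm\Ord(\varepsilon))$ multiplicative factor with high probability. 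Since $\Tr\exp Y^{(t)} = \Ord(n)$ along the run, rescaling by the trace turns this into an additive nuclear-norm error $\E\|X^{(t)} - \widetilde{X}^{(t)}\| = \Ord(\varepsilon n)$, and the explicit constants in $T'$ and $T''$ in \cref{alg-curr} are exactly what force this to be at most $\varepsilon n/4$. (The formal accounting — the bound on $\Tr\exp Y^{(t)}$ and the union bound over the $T$ iterations — is deferred to \cref{pf-ana-algcurr} and underlies \cref{curr-alg-cost}.)

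Combining the two parts, $\E\widehat{f}(\widetilde{X}^{(t^*)}) - \widehat{f}(X^*) \leq \varepsilon n/4 + \varepsilon n/4 + \varepsilon n/2 = \varepsilon n$, which is the claimed accuracy. The main obstacle is not the mirror-descent arithmetic, which is routine, but establishing the distance invariant $\E\|X^{(t)} - \widetilde{X}^{(t)}\| \leq \delta$ with $\delta$ small enough: this needs the spectral bound on $Y^{(t)}$, a bound on $\Tr\exp Y^{(t)}$ so the JL errors do not accumulate, and control of the failure probability across all $T$ iterations — all of which I would cite from \cref{pf-ana-algcurr} rather than reprove here.
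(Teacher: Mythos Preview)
Your proposal is correct and follows essentially the same route as the paper: invoke \cref{thm-almd}, plug in $D=n\log n$, $G=2$, $\alpha=1/(2n)$ from \cref{lem_curr_set_diam}, \cref{lem_curr_alg_lips}, \cref{neg-entropy-sc}, cite the distance invariant $\delta\leq\varepsilon n/4$ from \cref{pf-ana-algcurr}, and collect the three terms to get $\varepsilon n$. The only cosmetic difference is that the paper presents it as deriving $\eta$ and $T$ from the target accuracy (setting the first two terms equal), whereas you substitute the given values and verify; both are fine.

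One caution on your informal discussion of the distance invariant: the claim ``$\Tr\exp Y^{(t)}=\Ord(n)$ along the run'' is false---from \cref{opnormY} the spectrum of $Y^{(t)}$ can reach $\Ord((\log n)/\varepsilon)$, so the trace can be exponentially large in $1/\varepsilon$. The actual mechanism (see \cref{algcurr-jl-cost} and \cref{algcurr-taylor-cost-alg}) is that the errors are controlled \emph{after} dividing by the trace, i.e.\ one bounds $\bigl|\tfrac{(\exp Y)_{ii}}{\Tr\exp Y}-\tfrac{(\widehat{\exp}Y)_{ii}}{\Tr\widehat{\exp}Y}\bigr|$ directly, not the unnormalized diagonal entries. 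Since you defer the formal argument to \cref{pf-ana-algcurr} anyway this does not break your proof, but you should drop or correct that sentence.
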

\begin{proof} We show in Lemma~\ref{lem-curr-dist-after-jl-poly} that Algorithm~\ref{alg-curr} maintains the invariant $\E\dnorm{X^{\lr{t}} - \widetilde{X}^{\lr{t}}}\leq \delta = \varepsilon n/4$. Therefore we are in the framework of approximate lazy mirror descent and can use its error bound from Inequality~\ref{errbnd-almd} and bound it by $\varepsilon K$. We plug in the parameters from Lemmas~\ref{lem_curr_set_diam}, \ref{lem_curr_alg_lips}, and \ref{neg-entropy-sc} in the bound and get
	\begin{align*}
		\E f(\widetilde{x}^{\lr{t^*}} )-f(x^{*})\leq \frac{n \log n}{T \eta} + \frac{2 \eta \cdot 2^2}{1/2n} + \lr{ \frac{\varepsilon n}{4}}\cdot 2. 
	\end{align*} We optimize for $\eta$ by setting the first two terms equal, and get \[ \eta = \tfrac{1}{4} \sqrt{\frac{\log n}{T}}.\numberthis\label{eta}\] With this expression for $\eta$, setting the bound for the right-hand side above to be $\varepsilon n$ gives $T \geq 256 \log n/\varepsilon^2$; plug this back in Equation~\ref{eta} to get $\eta = \varepsilon/64$. 
\end{proof}

\subsection{Computational Cost}\label{pf-ana-algcurr}
From Algorithm~\ref{alg-curr}, we see that there are three main parts to be computed to get the overall cost of the Arora-Kale algorithm: the number of iterations, the number of JL projections per iteration, and the cost of approximating a matrix exponential and multiplying it with a vector. We derive these values in this section.  
\subsubsection{Taylor Approximation for Matrix Exponential}\label{algcurr-cost-polyexp}
In Algorithm~\ref{alg-curr}, before we do the randomized projection to get the diagonal entries, we approximate the matrix exponential $\widetilde{\exp}\left(Y^{\lr{t}}/2\right) = \poly\lr{Y^{\lr{t}}/2, T^{\prime\prime}}$. Here we show a bound on $\abs{  \frac{  \exp\lr{Y^{\lr{t}}}_{ii}}{\Tr \exp\lr{Y^{\lr{t}}}}  -  \frac{ \widetilde{\exp}\lr{Y^{\lr{t}}}_{ii}}{\Tr \widetilde{\exp}\lr{Y^{\lr{t}}}} }$ for any $1 \leq i \leq n$. We do so by first proving a bound on $\abs{\frac{A_{ii}}{\Tr A} -\frac{B_{ii}}{\Tr B}}$ for a matrix $B$ approximating the general matrix $A$; then we prove a general result on the number of terms required to approximate a matrix exponential using Taylor series; finally, we combine these results to get an appropriate choice of $T_{\textrm{poly}}$ for approximating $\exp \lr{Y^{\lr{t}}/2}$.
\begin{restatable}{lemma}{currcostpolyexp}\label{algcurr-taylor-cost-general}
Given positive definite matrices $A$ and $B$ such that $\opnorm{A - B} \leq \varepsilon$, where $\varepsilon\leq \frac{1}{2n} \Tr A$, we have $\abs{ \frac{A_{ii}}{\Tr A} - \frac{B_{ii}}{\Tr B} } \leq 2\frac{\varepsilon \lr{ \Tr A +  n A_{ii}}}{\lr{\Tr A}^2}$.
\end{restatable}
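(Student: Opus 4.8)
The plan is to write the target quantity $\frac{A_{ii}}{\Tr A} - \frac{B_{ii}}{\Tr B}$ over a common denominator $\Tr A \cdot \Tr B$ and estimate the numerator $A_{ii}\Tr B - B_{ii}\Tr A$. First I would add and subtract the cross term $A_{ii}\Tr A$, obtaining
\[
A_{ii}\Tr B - B_{ii}\Tr A = A_{ii}(\Tr B - \Tr A) - (B_{ii} - A_{ii})\Tr A,
\]
so that $\bigl| A_{ii}\Tr B - B_{ii}\Tr A\bigr| \le A_{ii}\,|\Tr B - \Tr A| + |B_{ii} - A_{ii}|\cdot \Tr A$. Next I would control the two perturbation quantities using $\opnorm{A-B}\le\varepsilon$: since $A$ and $B$ are positive definite, $|A_{ii}-B_{ii}| = |\ones_i^T(A-B)\ones_i| \le \opnorm{A-B}\le\varepsilon$, and $|\Tr A - \Tr B| = |\Tr(A-B)| \le n\opnorm{A-B}\le n\varepsilon$ (the eigenvalues of $A-B$ are each at most $\varepsilon$ in absolute value, and there are $n$ of them). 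Combining, the numerator is at most $n\varepsilon A_{ii} + \varepsilon\,\Tr A = \varepsilon(\Tr A + n A_{ii})$.

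It remains to handle the denominator: I want to replace $\Tr A\cdot\Tr B$ in the denominator by $(\Tr A)^2$, which costs the factor of $2$ in the statement. Here is where the hypothesis $\varepsilon \le \frac{1}{2n}\Tr A$ enters: from $|\Tr A - \Tr B|\le n\varepsilon \le \tfrac12\Tr A$ we get $\Tr B \ge \tfrac12\Tr A > 0$, hence $\frac{1}{\Tr A \cdot \Tr B} \le \frac{2}{(\Tr A)^2}$. Assembling the numerator bound with this denominator bound yields
\[
\left| \frac{A_{ii}}{\Tr A} - \frac{B_{ii}}{\Tr B}\right| = \frac{\bigl|A_{ii}\Tr B - B_{ii}\Tr A\bigr|}{\Tr A\cdot \Tr B} \le \frac{2\varepsilon(\Tr A + n A_{ii})}{(\Tr A)^2},
\]
which is exactly the claimed bound.

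The argument is entirely routine; the only place requiring any care is the bookkeeping in the numerator split and making sure the hypothesis $\varepsilon\le\frac{1}{2n}\Tr A$ is invoked precisely to guarantee $\Tr B$ stays bounded away from zero (so that the bound does not degenerate). I expect no genuine obstacle — the main thing to get right is the direction of the triangle-inequality estimates and that $B$ being positive definite (so $\Tr B>0$) makes the division legitimate.
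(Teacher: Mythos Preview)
Your proposal is correct and follows essentially the same approach as the paper. The paper plugs in the worst-case values $B_{ii}\leftarrow A_{ii}+\varepsilon$ and $\Tr B\leftarrow\Tr A - n\varepsilon$ directly and simplifies, while you write the difference over a common denominator and bound the numerator via the same two estimates $|A_{ii}-B_{ii}|\le\varepsilon$ and $|\Tr A-\Tr B|\le n\varepsilon$; both arguments then invoke $\varepsilon\le\tfrac{1}{2n}\Tr A$ to replace $\Tr B$ by $\tfrac12\Tr A$ in the denominator, yielding the same bound.
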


\begin{proof}\label{pf-detailed-currcost-explem} We have the following chain of inequalities.
\[\abs{ \frac{B_{ii}}{\Tr B} - \frac{A_{ii}}{\Tr A} } \stackrel{\mbox{\numcircledtikz{1}}}{\leq} \abs{ \frac{A_{ii} + {\varepsilon}}{\Tr A - n {\varepsilon}} - \frac{A_{ii}}{\Tr A}  } = \frac{\varepsilon\lr{ \Tr A +  n A_{ii}}}{\lr{\Tr A} \lr{\Tr A - \varepsilon n}} \stackrel{\mbox{\numcircledtikz{2}}}{\leq} 2 \frac{\varepsilon\lr{ \Tr A +  n A_{ii}}}{\lr{\Tr A}^2},\] where $\mbox{\numcircledtikz{1}}$ is by the worst case values for $B_{ii}$ from the operator norm bound, and $\mbox{\numcircledtikz{2}}$ is by the bound on $\varepsilon$. 
\end{proof}

\begin{lemma}\label{num-taylor-exp-terms}
For $T \geq e^2 \opnorm{Y}$, we have $\opnorm{\exp\lr{Y} - \sum\limits_{j = 0}^T \frac{Y^j}{j!}} \leq \exp\lr{-T}$. 
\end{lemma}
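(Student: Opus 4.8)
The plan is to control the operator norm of the Taylor remainder by the corresponding scalar series. Write $R \defeq \exp\lr{Y} - \sum_{j=0}^T \frac{Y^j}{j!} = \sum_{j=T+1}^\infty \frac{Y^j}{j!}$. By the triangle inequality and submultiplicativity of the operator norm, $\opnorm{R} \leq \sum_{j=T+1}^\infty \frac{\opnorm{Y}^j}{j!}$, so it suffices to bound the scalar tail $\sum_{j=T+1}^\infty \frac{r^j}{j!}$, where $r \defeq \opnorm{Y}$ (the case $r = 0$ being trivial, since then $R = 0$).

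For the scalar estimate I would use the elementary inequality $j! \geq \lr{j/e}^j$, which follows immediately from $e^j = \sum_{k \geq 0} j^k/k! \geq j^j/j!$. This gives $\frac{r^j}{j!} \leq \lr{\tfrac{er}{j}}^j$. Since every index in the tail satisfies $j \geq T+1 > T \geq e^2 r$, we have $\tfrac{er}{j} \leq \tfrac{er}{e^2 r} = \tfrac1e$, hence $\frac{r^j}{j!} \leq e^{-j}$ for all $j \geq T+1$. Summing the resulting geometric series then yields $\opnorm{R} \leq \sum_{j=T+1}^\infty e^{-j} = \frac{e^{-(T+1)}}{1-e^{-1}} = \frac{e^{-T}}{e-1} \leq e^{-T}$, where the last step uses $e - 1 > 1$.

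There is no real obstacle here; the only point that needs a moment's care is verifying that $\tfrac{er}{j} \leq \tfrac1e$ holds \emph{uniformly} over all tail indices $j \geq T+1$, which is exactly what the hypothesis $T \geq e^2 \opnorm{Y}$ is designed to give. (When $Y$ is symmetric, as in the application, one could equivalently diagonalize $Y$ and apply the same scalar Taylor-remainder bound to each eigenvalue $\lambda$ with $\abs{\lambda} \leq r$; this amounts to the same computation.)
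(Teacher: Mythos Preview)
Your proof is correct and essentially identical to the paper's: both bound the tail by $\sum_{j\ge T+1}\opnorm{Y}^j/j!$, invoke $j!\ge (j/e)^j$, use $j\ge T\ge e^2\opnorm{Y}$ to get $(e\opnorm{Y}/j)^j\le e^{-j}$, and sum the geometric series. The only cosmetic difference is that you explicitly justify $j!\ge (j/e)^j$ via $e^j\ge j^j/j!$, whereas the paper just cites ``Stirling's approximation.''
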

\begin{proof}
We have the following chain: \[\opnorm{\exp \lr{Y} - \sum_{j = 0}^{T} \tfrac{1}{j!} Y^j} \stackrel{\mbox{\numcircledtikz{1}}}{=} \opnorm{\sum_{j = T+1}^{\infty} \tfrac{1}{j!} Y^j} \stackrel{\mbox{\numcircledtikz{2}}}{\leq} \sum_{j = T+1}^{\infty} \opnorm{\tfrac{1}{j!}Y^j}= \sum_{j = T+1}^{\infty} \tfrac{1}{j!}\opnorm{Y}^j \stackrel{\mbox{\numcircledtikz{3}}}{\leq} \sum_{j = T+1}^{\infty} \tfrac{e^j}{j^j}\opnorm{Y}^j,\numberthis\label[ineq]{generic-exp-taylor-terms-1}\] where $\mbox{\numcircledtikz{1}}$ is by the Taylor series expansion of the matrix exponential, $\mbox{\numcircledtikz{2}}$ is by triangle inequality of norms, and $\mbox{\numcircledtikz{3}}$ is by Stirling's approximation, $j! \geq \lr{ j/e }^j$. Since the right hand side of the above inequality is indexed over $j \geq T\geq e^2 \opnorm{Y}$,  we can bound it further to get \[	\opnorm{\exp Y - \sum_{j = 0}^{T} \tfrac{1}{j!}Y^j}  \leq \sum_{j = T+1}^{\infty} e^{-j} = \frac{(e^{-1})^{T+1}}{1 - e^{-1}} \leq e^{-T}.\] 
\end{proof}

\begin{restatable}{lemma}{currcostTaylorexpmain}\label{algcurr-taylor-cost-alg}
In Algorithm~\ref{alg-curr}, for $n \geq 2$ and $\varepsilon \leq \tfrac{1}{2}$, set $T_{\textrm{poly}} = \frac{64 \log n}{\varepsilon}$, and let $\widetilde{\exp}\left(Y^{\lr{t}}/2\right) := \poly\left( Y^{\lr{t}}/2, T_{\textrm{poly}} \right)$. Then for each coordinate $i$, we have $\abs{\frac{\exp\left(Y^{\lr{t}}\right)_{ii}}{\Tr \exp Y^{\lr{t}}} - \frac{\left(\widetilde{\exp} Y^{\lr{t}}\right)_{ii}}{\Tr  \widetilde{\exp} Y^{\lr{t}}}} \leq \frac{\varepsilon}{8n}$. 
\end{restatable}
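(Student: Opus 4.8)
The plan is to invoke \cref{algcurr-taylor-cost-general} with $A = \exp Y^{\lr{t}}$ and $B = \widetilde{\exp}\, Y^{\lr{t}} := \poly\lr{Y^{\lr{t}}/2, T_{\textrm{poly}}}^2$ — the square of the Taylor truncation, since that is the matrix whose squared rows $\jl$ sketches in \cref{alg-curr}, and $\exp Y^{\lr{t}} = \exp\lr{Y^{\lr{t}}/2}^2$. That lemma reduces the claim to three things: (a) an operator-norm bound $\opnorm{A-B}\le \varepsilon'$; (b) the admissibility condition $\varepsilon' \le \tfrac{1}{2n}\Tr A$; and (c) checking that the resulting estimate $2\varepsilon'\lr{\Tr A + nA_{ii}}/\lr{\Tr A}^2$ is at most $\varepsilon/(8n)$. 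Everything after that is bookkeeping in powers of $n^{1/\varepsilon}$.

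First I would pin down the spectrum of $Y^{\lr{t}}$. From the lazy update in \cref{alg-curr}, $Y^{\lr{t}} = -\eta\sum_{s<t}\lr{\diag\lr{\ones_{\{\widetilde X^{\lr{s}}\ge\rho\}}} - \widehat C}$, and each summand has operator norm at most $2$ since $\opnorm{\diag(\ones_{\{\cdot\}})}\le 1$ and $\opnorm{\widehat C}\le 1$ (the latter from the proof of \cref{lem_curr_alg_lips}); hence $\opnorm{Y^{\lr{t}}} \le 2\eta T = 8\log n/\varepsilon$, using $\eta = \varepsilon/64$ and $T = 256\log n/\varepsilon^2$. Then $\opnorm{Y^{\lr{t}}/2}\le 4\log n/\varepsilon$, and since $T_{\textrm{poly}} = 64\log n/\varepsilon \ge e^2\cdot(4\log n/\varepsilon) = e^2\opnorm{Y^{\lr{t}}/2}$ (because $4e^2 < 64$), \cref{num-taylor-exp-terms} gives $\opnorm{\exp\lr{Y^{\lr{t}}/2} - P}\le e^{-T_{\textrm{poly}}} = n^{-64/\varepsilon}$ with $P := \poly\lr{Y^{\lr{t}}/2, T_{\textrm{poly}}}$. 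Writing $A - B = \exp\lr{Y^{\lr{t}}/2}\lr{\exp\lr{Y^{\lr{t}}/2} - P} + \lr{\exp\lr{Y^{\lr{t}}/2} - P}P$ and using $\opnorm{\exp\lr{Y^{\lr{t}}/2}} = e^{\lambda_{\max}(Y^{\lr{t}}/2)}\le n^{4/\varepsilon}$ together with $\opnorm{P}\le n^{4/\varepsilon} + n^{-64/\varepsilon}\le 2n^{4/\varepsilon}$, I obtain $\opnorm{A-B}\le 3\,n^{4/\varepsilon}\,n^{-64/\varepsilon} = 3\,n^{-60/\varepsilon} =: \varepsilon'$.

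For (b) and (c) I would use $-8\log n/\varepsilon \le \lambda_i(Y^{\lr{t}}) \le 8\log n/\varepsilon$, which gives $n^{1-8/\varepsilon}\le \Tr A \le n^{1+8/\varepsilon}$ and $A_{ii}\le \opnorm{A}\le n^{8/\varepsilon}$. Admissibility holds since $\varepsilon' = 3n^{-60/\varepsilon} \le \tfrac12 n^{-8/\varepsilon} \le \tfrac1{2n}\Tr A$, which reduces to $6\le n^{52/\varepsilon}$, true for $n\ge 2$, $\varepsilon\le\tfrac12$. Then \cref{algcurr-taylor-cost-general} yields
\[ \abs{\frac{\exp\lr{Y^{\lr{t}}}_{ii}}{\Tr\exp Y^{\lr{t}}} - \frac{\lr{\widetilde{\exp}\, Y^{\lr{t}}}_{ii}}{\Tr\widetilde{\exp}\, Y^{\lr{t}}}} \;\le\; \frac{2\varepsilon'\lr{\Tr A + nA_{ii}}}{\lr{\Tr A}^2} \;\le\; \frac{2\cdot 3n^{-60/\varepsilon}\cdot 2n^{1+8/\varepsilon}}{n^{2-16/\varepsilon}} \;=\; 12\,n^{-1-36/\varepsilon}, \]
and $12\,n^{-1-36/\varepsilon}\le \varepsilon/(8n)$ is equivalent to $96 \le \varepsilon\,n^{36/\varepsilon}$, which again holds with enormous margin for $n\ge 2$ and $\varepsilon\le\tfrac12$ since $n^{36/\varepsilon}$ dwarfs $1/\varepsilon$.

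I do not expect a genuine obstacle here: the only step carrying real content is the operator-norm bound $\opnorm{Y^{\lr{t}}}\le 8\log n/\varepsilon$ (which could alternatively be imported from the parameters section if it is recorded there), and the one thing to watch is that the generous constant $64$ in $T_{\textrm{poly}}$ is precisely what makes the three slack inequalities $4e^2<64$, $6\le n^{52/\varepsilon}$, and $96\le\varepsilon n^{36/\varepsilon}$ go through — a tighter choice of $T_{\textrm{poly}}$ would force one to recheck these.
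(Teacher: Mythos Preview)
Your proof is correct and follows essentially the same route as the paper: bound $\opnorm{Y^{\lr{t}}}$ from $\eta,T$ and the Lipschitz constant, invoke \cref{num-taylor-exp-terms} to control $\opnorm{\exp\lr{Y^{\lr{t}}/2}-P}$, pass to an operator-norm bound on the difference of squares, and feed that into \cref{algcurr-taylor-cost-general}. The only cosmetic difference is the factorization of the difference of squares --- the paper writes it as $\Delta^2 + \Delta\exp\lr{Y^{\lr{t}}/2} + \exp\lr{Y^{\lr{t}}/2}\Delta$ and lands at $4n^{-60/\varepsilon}$ where you get $3n^{-60/\varepsilon}$ --- and the final arithmetic, which in both cases has enormous slack.
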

\begin{proof}Let  $\widetilde{\exp}\lr{Y^{\lr{t}}/2} = \exp\lr{Y^{\lr{t}}/2} + \Delta$, and $\opnorm{\Delta} = \varepsilon_1$.  Then 
	\begin{align*}
		\opnorm{\exp Y^{\lr{t}} - \widetilde{\exp} Y^{\lr{t}}} &= \opnorm{(\exp(Y^{\lr{t}}/2))^2 - (\widetilde{\exp}(Y^{\lr{t}}/2))^2}\\
								             &= \opnorm{\Delta^2 + \Delta \exp(Y^{\lr{t}}/2)+ \exp(Y^{\lr{t}}/2) \Delta}\\
												&\leq \varepsilon_1^2 + 2\varepsilon_1 \opnorm{\exp(Y^{\lr{t}}/2)}.\numberthis\label[ineq]{AK-diff-opnorm-exp}
	\end{align*} 
	
Observe that in each iteration of Algorithm~\ref{alg-curr}, we add $-\eta \nabla f(\widetilde{X}^{\lr{t}})$ to the current $Y^{\lr{t}}$ in the gradient step; therefore at the end of all the $T$ iterations, $\opnorm{Y^{\lr{t}}} \leq \abs{\eta T} \| \nabla f(\widetilde{X}^{\lr{t}})\|_{\mathrm{op}}$. From the values of $\eta$ and $T$ as set in Algorithm~\ref{alg-curr} (and explained in Section~\ref{pf-currparams}), the worst-case value is \[\opnorm{Y^{\lr{t}}/2} \leq \frac{1}{2}\cdot \frac{\varepsilon}{64}\cdot \frac{256 \log n}{\varepsilon^2}\cdot 2 = \frac{ 4 \log n}{\varepsilon}.\numberthis\label[ineq]{opnormY}\] Next, from Lemma~\ref{num-taylor-exp-terms}, we require the first $\max\left\{e^2 \opnorm{Y^{\lr{t}}/2}, \log \left( 1/\varepsilon_1 \right) \right\}$ terms of  the Taylor series of $\exp\lr{Y^{\lr{t}}/2}$ to get an $\varepsilon_1$ accuracy in approximation. Since $T_{\mathrm{poly}} = 64 \log n/\varepsilon  \geq e^2 \opnorm{Y^{\lr{t}}/2}$ (from Inequality~\ref{opnormY}), this choice of number of Taylor series terms corresponds to an accuracy of $\varepsilon_1 = n^{-64/\varepsilon}$. From Inequality~\ref{opnormY}, we get that \[\|\exp(Y^{\lr{t}}/2)\|_{\mathrm{op}} \leq e^{4 \log n/\varepsilon} = n^{4/\varepsilon}. \numberthis\label[ineq]{AK-opnormexpY}\] Then we have
\begin{align*}
	\varepsilon_1^2 + 2\varepsilon_1 \|\exp (Y^{\lr{t}}/2)\|_{\mathrm{op}} &\leq n^{-128/\varepsilon} + 2n^{-64/\varepsilon}n^{4/\varepsilon} \\
					&\leq 4n^{-60/\varepsilon}\\					
					&\leq \frac{n^{-4/\varepsilon}}{2} \leq \frac{1}{2n} \Tr \exp(Y^{\lr{t}}/2),\numberthis\label[ineq]{AK-diff-opnorm-exp-2}
\end{align*} where the last inequality is by Inequality~\ref{AK-opnormexpY}. Chaining Inequalities~\ref{AK-diff-opnorm-exp} and \ref{AK-diff-opnorm-exp-2}, the condition in Lemma~\ref{algcurr-taylor-cost-general} is satisfied. Applying the result of Lemma~\ref{algcurr-taylor-cost-general},
\begin{align*}
	\abs{\frac{\left(\exp \lr{Y^{\lr{t}}}\right)_{ii}}{\Tr \exp \lr{Y^{\lr{t}}}} - \frac{\left( \widetilde{\exp} \lr{Y^{\lr{t}}} \right)_{ii}}{\Tr \widetilde{\exp} \lr{Y^{\lr{t}}}}} &\leq 2 \lr{\varepsilon_1^2 + 2\varepsilon_1 \opnorm{\exp\lr{Y}}} \frac{\Tr \exp\lr{Y^{\lr{t}}} + n \exp \lr{Y^{\lr{t}}}_{ii}}{\lr{\Tr \exp\lr{Y^{\lr{t}}}}^2}. \\
																					&\leq 2\frac{\lr{\varepsilon_1^2 + 2\varepsilon_1 n^{4/\varepsilon}} \lr{2n^{1 + 8/\varepsilon}}}{\lr{n^{-8/\varepsilon}}^2}\\								&\leq 4\lr{\frac{\varepsilon^2 }{10000 n^{41/\varepsilon}} + \frac{\varepsilon}{50 n^{4/\varepsilon}} }\\							&\leq \frac{\varepsilon}{8n}
\end{align*} 
\end{proof}

\subsubsection{Randomized Projections}\label{algcurr-cost-jl}
Suppose we approximate each entry of a vector using randomized projections. Then we can state the following result about the accuracy of the function $g(x) = x_i/\norm{x}_1$.
\begin{lemma}\label{algcurr-jl-cost}
For $0 \neq X \in \symm^n$, let $\widetilde{X} = \jl(X, 10240 \log n/\varepsilon^2 )$. Then $\abs{ \frac{ \widetilde{X}_{ii}}{\Tr \widetilde{X}} - \frac{ X^2_{ii}}{\Tr X^2}  } \leq \frac{\varepsilon }{8}$. 
\end{lemma}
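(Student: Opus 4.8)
The plan is a direct Johnson--Lindenstrauss argument, exploiting that a \emph{multiplicative} control of every diagonal entry of $\widetilde X$ automatically gives a matching multiplicative control of $\Tr\widetilde X$. Recall from the notation section that $\widetilde X = \jl(X,N)$ with $N = 10240\log n/\varepsilon^2$ is the diagonal matrix with entries $\widetilde X_{ii} = \tfrac1N\sum_{k=1}^N (X\zeta_k)_i^2$, $\zeta_k \stackrel{\mathrm{i.i.d.}}{\sim}\N(0,I_n)$, and that $\E\widetilde X_{ii} = \|X_i\|_2^2 = (X^2)_{ii}$, where $X_i$ is the $i$-th row of $X$ (using $X=X^{T}$, so $(X\zeta_k)_i = X_i^{T}\zeta_k$).

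First I would invoke the standard Gaussian concentration inequality underlying the distributional Johnson--Lindenstrauss lemma: for any fixed vector $v$ and any $0<\varepsilon'<1$,
\[ \Pr\!\left[\,\abs{\jl(v,N) - \|v\|_2^2} > \varepsilon'\|v\|_2^2\,\right] \le 2\exp\!\left(-N\varepsilon'^2/8\right), \]
which is immediate since $N\jl(v,N)/\|v\|_2^2$ is a $\chi^2_N$ random variable (Laurent--Massart tail bounds). Taking $\varepsilon' := \varepsilon/20$ and a union bound over the (at most $n$) nonzero rows of $X$ --- zero rows satisfying the bound trivially --- the event
\[ (1-\varepsilon')(X^2)_{ii} \;\le\; \widetilde X_{ii} \;\le\; (1+\varepsilon')(X^2)_{ii}\qquad\text{for all }1\le i\le n \]
holds with probability at least $1 - 2n\exp(-N\varepsilon'^2/8) = 1 - 2n^{-2.2}$ for the stated value of $N$ (in particular, with high probability). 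On this event, summing the $n$ inequalities and using $(X^2)_{ii}\ge 0$ transfers the multiplicative bound to the trace: $(1-\varepsilon')\Tr X^2 \le \Tr\widetilde X \le (1+\varepsilon')\Tr X^2$, with $\Tr\widetilde X > 0$ since $X\ne 0$.

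I would then close with a one-line ratio estimate. On the above event,
\[ \frac{1-\varepsilon'}{1+\varepsilon'}\cdot\frac{(X^2)_{ii}}{\Tr X^2} \;\le\; \frac{\widetilde X_{ii}}{\Tr\widetilde X} \;\le\; \frac{1+\varepsilon'}{1-\varepsilon'}\cdot\frac{(X^2)_{ii}}{\Tr X^2}, \]
so $\abs{\tfrac{\widetilde X_{ii}}{\Tr\widetilde X} - \tfrac{(X^2)_{ii}}{\Tr X^2}} \le \tfrac{2\varepsilon'}{1-\varepsilon'}\cdot\tfrac{(X^2)_{ii}}{\Tr X^2}$. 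Since $X^2\succeq 0$ we have $0\le (X^2)_{ii}\le\Tr X^2$, hence this is at most $\tfrac{2\varepsilon'}{1-\varepsilon'}\le \varepsilon/8$, using $\varepsilon'=\varepsilon/20$ and $\varepsilon\le\tfrac12$, which is the claimed bound (holding with high probability).

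The argument is essentially routine; the one point requiring care is that the quantity of interest is a \emph{ratio}, so the numerator and denominator must be controlled \emph{simultaneously}. This is precisely why one uses the per-coordinate \emph{multiplicative} JL guarantee rather than an additive one: a multiplicative bound on each $\widetilde X_{ii}$ transfers to $\Tr\widetilde X$ for free by nonnegativity of the $(X^2)_{ii}$'s, whereas an additive bound would not. The remaining work is just tracking the JL constant so that $N = 10240\log n/\varepsilon^2$ yields the target error $\varepsilon/8$.
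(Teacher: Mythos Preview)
Your proof is correct and follows essentially the same route as the paper: apply the Johnson--Lindenstrauss lemma to get a multiplicative bound on each $\widetilde X_{ii}$, sum to obtain the same multiplicative control on $\Tr\widetilde X$, and then bound the ratio using $0\le (X^2)_{ii}/\Tr X^2\le 1$. The only cosmetic differences are your choice of $\varepsilon'=\varepsilon/20$ versus the paper's $\varepsilon/32$, and that you spell out the underlying $\chi^2$ tail bound and union bound explicitly rather than invoking the JL lemma as a black box.
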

To prove this result, we need the Johnson-Lindenstrauss lemma. 
\begin{lemma}[\cite{jl}]\label{thm-jl-statement}
For any $0 < \varepsilon < 1$, and any integer $n$, let $k$ be a positive integer such that $k \geq 20 \log n/\varepsilon^2$. Then for any set $V$ of $n$ points in $\reals^d$ and random matrix $A \in \reals^{k \times d}$, with high probability, for all $x \in V$, \[ ( 1- \varepsilon) \norm{x}_2^2 \leq \norm{(1/\sqrt{k})Ax}_2^2 \leq ( 1 + \varepsilon) \norm{x}_2^2.\] 
\end{lemma}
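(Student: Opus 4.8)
I will prove the Gaussian instantiation matching the paper's definition of $\jl$: the matrix $A\in\reals^{k\times d}$ has i.i.d.\ $\N(0,1)$ entries and the embedding is $x\mapsto(1/\sqrt k)Ax$. Since both sides of the desired two-sided inequality are homogeneous of degree two in $x$, for a nonzero $x$ the bound $(1-\varepsilon)\|x\|_2^2\le\|(1/\sqrt k)Ax\|_2^2\le(1+\varepsilon)\|x\|_2^2$ is equivalent to $\|(1/\sqrt k)A\hat x\|_2^2\in[1-\varepsilon,1+\varepsilon]$ for the unit vector $\hat x=x/\|x\|_2$. Hence it suffices to bound, for an \emph{arbitrary fixed} unit vector, the probability that this two-sided bound fails, and then union-bound over the at most $n$ unit directions determined by the points of $V$; as long as each individual failure probability is $o(n^{-1})$, the union bound gives overall failure probability $o(1)$, which is the meaning of ``with high probability.''

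First I would identify the distribution of $\|(1/\sqrt k)A\hat x\|_2^2$ for a fixed unit vector $\hat x$. Each coordinate $(A\hat x)_j=\sum_{i=1}^d A_{ji}\hat x_i$ is a linear combination of independent standard Gaussians with coefficient vector $\hat x$, hence $(A\hat x)_j\sim\N(0,\|\hat x\|_2^2)=\N(0,1)$; moreover $(A\hat x)_1,\dots,(A\hat x)_k$ are mutually independent because the rows of $A$ are independent. Therefore $Q:=k\,\|(1/\sqrt k)A\hat x\|_2^2=\sum_{j=1}^k(A\hat x)_j^2$ is chi-squared with $k$ degrees of freedom, so $\E Q=k$, and the target event is exactly $(1-\varepsilon)k\le Q\le(1+\varepsilon)k$.

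The technical core is a tail bound for $Q$ about its mean, which I would obtain by the Chernoff (exponential-moment) method from the closed form $\E e^{tQ}=(1-2t)^{-k/2}$, valid for $t<\tfrac12$. For the upper tail, Markov's inequality gives $\Pr[Q\ge(1+\varepsilon)k]\le\inf_{0\le t<1/2}(1-2t)^{-k/2}e^{-t(1+\varepsilon)k}$, and the optimal $t=\varepsilon/(2(1+\varepsilon))$ yields $\Pr[Q\ge(1+\varepsilon)k]\le\big((1+\varepsilon)e^{-\varepsilon}\big)^{k/2}\le\exp\!\big(-\tfrac{k}{4}(\varepsilon^2-\varepsilon^3)\big)$, using the elementary inequality $\ln(1+\varepsilon)-\varepsilon\le-\tfrac{\varepsilon^2}{2}+\tfrac{\varepsilon^3}{3}$ for $0\le\varepsilon\le1$ (the Taylor series of $\ln(1+\varepsilon)$ is alternating with decreasing terms in this range). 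The mirror computation with $t<0$ gives $\Pr[Q\le(1-\varepsilon)k]\le\big((1-\varepsilon)e^{\varepsilon}\big)^{k/2}\le\exp(-\tfrac{k\varepsilon^2}{4})$, using $\ln(1-\varepsilon)+\varepsilon\le-\tfrac{\varepsilon^2}{2}$. For $0<\varepsilon\le\tfrac12$ both bounds are at most $\exp(-\tfrac{k\varepsilon^2}{8})$, so the two-sided failure probability for one fixed $\hat x$ is at most $2\exp(-\tfrac{k\varepsilon^2}{8})$.

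Finally, substituting the hypothesis $k\ge20\log n/\varepsilon^2$ makes this at most $2\exp(-\tfrac{20}{8}\log n)=2n^{-5/2}$, and a union bound over the $\le n$ unit directions from $V$ leaves total failure probability at most $2n^{-3/2}=o(1)$; on the complementary event the two-sided norm-preservation bound holds simultaneously for every $x\in V$, as claimed. I expect the chi-squared tail estimate to be the main obstacle: one needs the moment-generating-function optimization together with sufficiently sharp logarithmic inequalities so that the $\varepsilon^3$ term is absorbed and $k=\Theta(\log n/\varepsilon^2)$ — rather than a larger power of $1/\varepsilon$ — already suffices. By contrast, the reduction to a fixed unit vector and the union bound are routine.
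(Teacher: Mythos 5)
The paper never proves this lemma---it is used as a black box and credited to \cite{jl}; the only text following the statement is a proof of a different result (\cref{algcurr-jl-cost}). So there is no in-paper proof to compare against. That said, your argument is the standard and correct derivation of the Gaussian Johnson--Lindenstrauss lemma for the instantiation the paper actually uses (its $\jl$ primitive is exactly i.i.d.\ Gaussian projections). The reduction to a fixed unit vector, the identification of $\sum_j (A\hat x)_j^2$ as $\chi^2_k$, the Chernoff bound via $\E e^{tQ}=(1-2t)^{-k/2}$ with the optimal $t=\varepsilon/(2(1\pm\varepsilon))$, and the union bound with $k\ge 20\log n/\varepsilon^2$ giving failure $O(n^{-3/2})$ are all correct.

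One small note: you restrict the final per-point bound $2\exp(-k\varepsilon^2/8)$ to $\varepsilon\le 1/2$, whereas the lemma is stated for $0<\varepsilon<1$. This is cosmetic: from your own estimate $\ln(1+\varepsilon)-\varepsilon\le-\varepsilon^2/2+\varepsilon^3/3$ you get $\Pr[Q\ge(1+\varepsilon)k]\le\exp\bigl(-\tfrac{k\varepsilon^2}{4}(1-\tfrac{2\varepsilon}{3})\bigr)\le\exp(-k\varepsilon^2/12)$ for all $\varepsilon\in(0,1)$, and with $k\ge 20\log n/\varepsilon^2$ the union bound over $n$ points still gives $o(1)$ failure. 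In the paper the lemma is only ever invoked with small accuracy parameters (e.g.\ $\varepsilon/32$ or much smaller), so the restriction is harmless in context, but you should state the uniform bound if you want the lemma exactly as written.
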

\begin{proof}[Proof of Lemma~\ref{algcurr-jl-cost}]
Applying Lemma~\ref{thm-jl-statement} to $\widetilde{X} = \jl\lr{X, \frac{10240\log n}{\varepsilon^2}}$, we have that with high probability, $\abs{X^2_{ii} - \widetilde{X}_{ii}} \leq \frac{\varepsilon}{32} \abs{X^2_{ii}}$. Therefore, $ \Tr X^2 \lr{ 1 - \tfrac{\varepsilon}{32}} \leq \Tr \widetilde{X}^2 \leq  \Tr X^2 \lr{ 1 + \tfrac{\varepsilon}{32}}$. Therefore $\frac{X^2_{ii}\lr{1 - \varepsilon/32}}{\Tr X^2\lr{1 +\varepsilon/32}} \leq \frac{\widetilde{X}_{ii}}{\Tr \widetilde{X}} \leq \frac{X^2_{ii}\lr{1 + \varepsilon/32}}{\Tr X^2\lr{1 - \varepsilon/32}}$ which can  be simplified to $\frac{X^2_{ii}}{\Tr X^2}\lr{1 - \varepsilon/16} \leq \frac{\widetilde{X}_{ii}}{\Tr \widetilde{X}} \leq \frac{X^2_{ii}}{\Tr X^2}\lr{1 + \varepsilon/8}$, where the last simplification is by the inequalities $\frac{1-x}{1+x} \geq 1 - 2x$ and $\frac{1+x}{1-x} \leq 1 + 4x$ for $x \in \lr{0, \tfrac{1}{2}}$. Therefore we have that $\abs{\frac{\widetilde{X}_{ii}}{\Tr \widetilde{X}} - \frac{X^2_{ii}}{\Tr X^2} } \leq \lr{\varepsilon/8} \frac{X^2_{ii}}{\Tr X^2}\leq \varepsilon/8$.  
\end{proof}

\subsubsection{Number of Iterations}\label{algcurr-app-true-dist} From  Lemmas~\ref{algcurr-taylor-cost-alg} and \ref{algcurr-jl-cost} proved above, we can infer that the choice of $T^{\prime\prime}$ and $T^{\prime}$ in Algorithm~\ref{alg-curr} gives us the following overall error in approximating the true primal iterate. 
\begin{restatable}{lemma}{lemcurrdistafterjlpoly}\label{lem-curr-dist-after-jl-poly}
In Algorithm~\ref{alg-curr}, we have that $\dnorm{\widetilde{X}^{\lr{t}} - X^{\lr{t}}} \leq \frac{\varepsilon n}{4}$. 
\end{restatable}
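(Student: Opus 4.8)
The plan is to bound $\dnorm{\widetilde X^{\lr{t}} - X^{\lr{t}}}$ by the triangle inequality, attributing half of the allowed error $\tfrac{\varepsilon n}{4}$ to the Taylor truncation and the other half to the randomized projection, and to invoke \cref{algcurr-taylor-cost-alg} and \cref{algcurr-jl-cost} respectively for these two pieces.

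First I would unpack the definitions in \cref{alg-curr}: the true iterate is $X^{\lr{t}} = n\exp(Y^{\lr{t}})/\Tr\exp(Y^{\lr{t}})$, whereas $\widetilde X^{\lr{t}} = n\,\widehat{\exp}(Y^{\lr{t}})/\Tr\widehat{\exp}(Y^{\lr{t}})$, where $\widehat{\exp}(Y^{\lr{t}}) = \jl(\widetilde{\exp}(Y^{\lr{t}}/2),T')$ and $\widetilde{\exp}(Y^{\lr{t}}/2)=\poly(Y^{\lr{t}}/2,T'')$. Write $\widetilde{\exp}(Y^{\lr{t}})\defeq(\widetilde{\exp}(Y^{\lr{t}}/2))^2$; since $\widetilde{\exp}(Y^{\lr{t}}/2)$ is symmetric, the $i$-th diagonal entry output by $\jl$ estimates $\|\ones_i^T\widetilde{\exp}(Y^{\lr{t}}/2)\|_2^2=(\widetilde{\exp} Y^{\lr{t}})_{ii}$, and $\widetilde{\exp}(Y^{\lr{t}}/2)\neq 0$ because (by the bound $\opnorm{\Delta}=\varepsilon_1=n^{-64/\varepsilon}$ from the proof of \cref{algcurr-taylor-cost-alg}) truncation perturbs the positive definite matrix $\exp(Y^{\lr{t}}/2)$ negligibly. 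Since $\dnorm{A}=\sum_{i=1}^n\abs{A_{ii}}$ by definition and each of $X^{\lr{t}},\widetilde X^{\lr{t}}$ is $n$ times a matrix normalized by its own trace,
\[\dnorm{\widetilde X^{\lr{t}} - X^{\lr{t}}} = n\sum_{i=1}^n\abs{\frac{(\widehat{\exp} Y^{\lr{t}})_{ii}}{\Tr\widehat{\exp} Y^{\lr{t}}} - \frac{(\exp Y^{\lr{t}})_{ii}}{\Tr\exp Y^{\lr{t}}}}.\]

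Next I would insert $\pm(\widetilde{\exp} Y^{\lr{t}})_{ii}/\Tr\widetilde{\exp} Y^{\lr{t}}$ inside each summand and apply the triangle inequality. For the randomized-projection piece I would apply \cref{algcurr-jl-cost} with $X=\widetilde{\exp}(Y^{\lr{t}}/2)$ --- legitimate since this matrix is nonzero and symmetric and the number of projections $T'=10240\log n/\varepsilon^2$ matches the lemma --- in the \emph{relative}-error form derived in its proof, namely $\abs{(\widehat{\exp} Y^{\lr{t}})_{ii}/\Tr\widehat{\exp} Y^{\lr{t}} - (\widetilde{\exp} Y^{\lr{t}})_{ii}/\Tr\widetilde{\exp} Y^{\lr{t}}}\le\tfrac{\varepsilon}{8}\,(\widetilde{\exp} Y^{\lr{t}})_{ii}/\Tr\widetilde{\exp} Y^{\lr{t}}$; summing over $i$ and using $\sum_i(\widetilde{\exp} Y^{\lr{t}})_{ii}=\Tr\widetilde{\exp} Y^{\lr{t}}$ bounds this piece by $\varepsilon/8$. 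For the Taylor piece, \cref{algcurr-taylor-cost-alg} (whose $T_{\mathrm{poly}}=64\log n/\varepsilon$ equals $T''$) gives the per-coordinate bound $\varepsilon/(8n)$, which summed over the $n$ coordinates is $\varepsilon/8$. Multiplying the total $\varepsilon/8+\varepsilon/8$ by the leading factor $n$ gives $\dnorm{\widetilde X^{\lr{t}} - X^{\lr{t}}}\le\varepsilon n/4$.

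The argument is mechanical; the only delicate point is that \cref{algcurr-jl-cost} must be used in its relative-error form, so that summing the coordinate errors telescopes to $\varepsilon/8$ rather than $n\varepsilon/8$, whereas the Taylor bound is already small enough per coordinate that the naive sum suffices. Finally, the Johnson--Lindenstrauss guarantee behind \cref{algcurr-jl-cost} holds with high probability, so the displayed bound --- and hence the value $\delta=\varepsilon n/4$ used in \cref{lem-eta-algcurr} --- holds with high probability; since the iterates are bounded, the negligible failure event does not affect the corresponding statement in expectation.
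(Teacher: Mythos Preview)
Your proposal is correct and follows essentially the same triangle-inequality decomposition as the paper's proof: split the per-coordinate error into a $\poly$ piece (handled by \cref{algcurr-taylor-cost-alg}) and a $\jl$ piece (handled by \cref{algcurr-jl-cost}), then sum. You are in fact more explicit than the paper on the one genuinely delicate point: the stated conclusion of \cref{algcurr-jl-cost} is the absolute bound $\varepsilon/8$, which summed over $n$ coordinates and multiplied by $n$ would be too large; one must use the \emph{relative} bound $(\varepsilon/8)\cdot(\widetilde{\exp}Y^{\lr{t}})_{ii}/\Tr\widetilde{\exp}Y^{\lr{t}}$ established inside that lemma's proof so that the sum telescopes to $\varepsilon/8$. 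Your remark about the high-probability nature of the Johnson--Lindenstrauss guarantee is also a valid caveat that the paper leaves implicit.
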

\begin{proof}
The quantity we want to bound is $\dnorm{\frac{n\exp(Y^{\lr{t}})}{\Tr \exp(Y^{\lr{t}})} - \frac{ \widetilde{X}^{\lr{t}}}{\Tr \widetilde{X}^{\lr{t}}}}$. Each term is bounded as: \[\abs{\frac{n \exp(Y^{\lr{t}})_{ii}}{\Tr \exp\lr{Y^{\lr{t}}}} - \frac{ \widetilde{X}^{\lr{t}}_{ii}}{\Tr \widetilde{X}^{\lr{t}}}} \leq n \underbrace{\abs{\frac{\exp\lr{Y^{\lr{t}}}_{ii}}{\Tr \exp\lr{Y^{\lr{t}}}} - \frac{\widetilde{\exp}\lr{Y^{\lr{t}}}_{ii}}{\Tr \widetilde{\exp}\lr{Y^{t}}}}}_{\text{$\poly$ error}} + \underbrace{\abs{ \frac{ n \widetilde{\exp}\lr{Y^{\lr{t}}}_{ii}}{\Tr \widetilde{\exp}\lr{Y^{\lr{t}}}} - \frac{n \widehat{\exp}\lr{Y^{\lr{t}}}_{ii}}{\Tr \widehat{\exp}\lr{Y^{\lr{t}}}}  }}_{\text{$\jl$ error}}. \] Apply the results of Lemmas~\ref{algcurr-taylor-cost-alg} and \ref{algcurr-jl-cost} to the right hand side terms. 
\end{proof}

\begin{corollary}
The number of iterations for convergence of the Arora-Kale algorithm is $\Ord(1/\varepsilon^2)$.
\end{corollary}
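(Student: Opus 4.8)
The plan is to observe that this corollary is an immediate consequence of the error bound of approximate lazy mirror descent (\cref{thm-almd}), once the structural parameters of \cref{alg-curr} have been pinned down; indeed it is essentially a restatement of \cref{lem-eta-algcurr}. The first step is to invoke \cref{lem-curr-dist-after-jl-poly}, which certifies that \cref{alg-curr} maintains the invariant $\E\dnorm{\widetilde{X}^{\lr{t}} - X^{\lr{t}}} \leq \varepsilon n/4$ at every iteration $t \leq T$. This is precisely the hypothesis $\E\norm{x^{\lr{t}} - \widetilde{x}^{\lr{t}}} \leq \delta$ needed to regard \cref{alg-curr} as a genuine instance of \cref{alg-almd} with $\delta = \varepsilon n/4$, and hence to apply \cref{thm-almd}.

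The second step is to substitute into \cref{errbnd-almd} the three quantities established earlier in this section: the set diameter $D = n\log n$ (\cref{lem_curr_set_diam}), the Lipschitz constant $G = 2$ (\cref{lem_curr_alg_lips}), and the strong-convexity modulus $\alpha = 1/(2n)$ (\cref{neg-entropy-sc}). This yields
\[ \E \widehat{f}(\widetilde{X}^{\lr{t^*}}) - \widehat{f}(X^*) \;\leq\; \frac{n\log n}{T\eta} \;+\; 16\, n\eta \;+\; \frac{\varepsilon n}{2}. \]
Balancing the first two terms forces $\eta = \tfrac{1}{4}\sqrt{(\log n)/T}$; then requiring the right-hand side to be at most $\varepsilon n$ --- so that the reformulated problem \cref{final-prob} is solved to accuracy $\varepsilon n$ --- forces $T \geq 256\log n/\varepsilon^2$, i.e.\ $T = \widetilde{\Ord}(1/\varepsilon^2)$. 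Suppressing the $\log n$ factor (per the $\widetilde{\Ord}$ convention) gives the claimed iteration count.

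For completeness I would append one sentence recording that an $\varepsilon n$-accurate solution of \cref{final-prob} transfers, via \cref{lem-dual-var-range} together with the normalizations $\sum_i \rho_i = n$ and $\sum_{i,j}\abs{C_{ij}} = \sum_i \rho_i$, into a feasible point of the original SDP \cref{prob} whose objective is within $\varepsilon\cdot\sum_{i,j}\abs{C_{ij}}$ of the optimum, so these $\widetilde{\Ord}(1/\varepsilon^2)$ iterations really do give convergence. I do not anticipate a genuine obstacle here: all of the analytic work --- bounding the Taylor-truncation and Johnson-Lindenstrauss errors, and hence certifying the distance invariant --- was already carried out in \cref{algcurr-taylor-cost-alg}, \cref{algcurr-jl-cost}, and \cref{lem-curr-dist-after-jl-poly}, so the corollary is purely the arithmetic of plugging parameters into \cref{thm-almd} and optimizing over $\eta$ and $T$. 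The one point that deserves care is that the distance invariant of \cref{lem-curr-dist-after-jl-poly} must hold uniformly over all iterations, not merely for one fixed $t$, since that uniform bound is exactly what allows \cref{thm-almd} to be invoked as a black box.
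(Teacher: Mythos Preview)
Your approach is the same as the paper's, but you skip the one step that the paper's proof actually supplies. \cref{lem-curr-dist-after-jl-poly} gives a bound in the norm $\dnorm{\cdot}$, which is defined as $\sum_i \abs{A_{ii}}$ --- the sum of absolute diagonal entries. However, \cref{thm-almd} requires the invariant $\E\norm{x^{(t)} - \widetilde{x}^{(t)}} \leq \delta$ in the norm with respect to which $\Phi$ is strongly convex and $f$ is Lipschitz, and here that is the \emph{nuclear} norm (see \cref{neg-entropy-sc} and \cref{lem_curr_alg_lips}). Your sentence ``this is precisely the hypothesis\ldots'' elides this mismatch. The paper closes the gap by observing that since the algorithm only reads the diagonal of $X$, one may take $\widetilde{X}^{(t)}$ to agree with $X^{(t)}$ off the diagonal; then $\widetilde{X}^{(t)} - X^{(t)}$ is diagonal, and for a diagonal matrix $\dnorm{\cdot}$ coincides with the nuclear norm. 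Once you insert that line, your argument is complete and identical to the paper's.

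Two minor remarks: the balancing of $\eta$ and $T$ you carry out is already the content of \cref{lem-eta-algcurr}, so you could simply cite it rather than redo it; and the closing paragraph about transferring back to \cref{prob} via \cref{lem-dual-var-range} is not needed for this corollary, which concerns only the iteration count to reach accuracy $\varepsilon n$ on \cref{final-prob}.
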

\begin{proof} Since the Arora-Kale algorithm only depends on the diagonal entries of $X$, we can assume that $\widetilde{X}$ and $X$ match on the off-diagonal entries. Then, $\dnorm{\widetilde{X}^{\lr{t}} - X^{\lr{t}}} \leq \frac{\varepsilon n}{4}$ is exactly equivalent to $\|\widetilde{X}^{\lr{t}} - X^{\lr{t}}\|_{\mathrm{nuc}} \leq \frac{\varepsilon n}{4}$. Therefore the algorithm meets the conditions of Algorithm~\ref{alg-almd} with $\delta = \tfrac{\varepsilon n}{4}$. Therefore by Theorem~\ref{thm-almd}, the number of outer iterations required for convergence is $\Ord(1/\varepsilon^2)$.
\end{proof}

\subsubsection{Combining All the Costs}
Recall from Algorithm~\ref{alg-curr} that $T^{\prime} = \Ord(1/\varepsilon^2)$, $T^{\prime\prime} = \Ord(1/\varepsilon)$, and the number of iterations is $\Ord(1/\varepsilon^2)$. The cost of a matrix-vector product is $\Ord(m)$. Therefore, multiplying these costs gives $\Ord(m/\varepsilon^5)$, the claimed cost of Arora-Kale algorithm. This completes the analysis.

\section{Analysis of our Proposed Algorithm}\label{pf-proposed-approach}  We now analyze Algorithm~\ref{alg-new}, organizing this section as follows. In Section~\ref{sec-new-params} we derive the values of parameters that appear in the error bounds. Next, in Section~\ref{mat-exp-cheby-approx}, we show how we construct a polynomial to approximate the matrix exponential. In Section~\ref{new-estimator-pfs}, we prove properties of the constructed estimators. We derive the number of inner iterations we have in Section~\ref{new-num-inner-iters}. In Section~\ref{sec-algnew-distbndinalg}, we establish the crucial distance invariance between true and estimated iterates, which ensures that our error is always under control.  We next show in Section~\ref{pf-proj-in-exp-int} why we do not need to normalize our projection step, which enables us to have a simple projection. Finally, we derive the error bound in Section~\ref{sec-app-errorbound}. 

\subsection{Parameters of Mirror Map} \label{sec-new-params}
As before, there are two parameters of the mirror map that we need to use in Theorem~\ref{thm-almd}: the diameter of the constraint set as measured by it, and its strong convexity parameter.  
\begin{restatable}[Set Diameter]{lemma}{newsetdiam}\label{lem_new_set_diam}
Given $\Phi(X) = X \bullet \log X - \Tr X$ and the domain $\D = \{X: X\succeq 0, \Tr X \leq K\}$, where $K \geq n$, the set diameter measured by $\Phi$ is given by $D =  K \log K$. 
\end{restatable}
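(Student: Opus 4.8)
The plan is to reduce the diameter computation to a scalar optimization over eigenvalues, exploiting the fact that $\Phi$ is a unitarily invariant spectral function. Since the only domain constraint of the objective $\widehat{f}$ in \cref{final-prob} is $X \succeq 0$, we have $\mathcal{X} \cap \mathcal{D} = \mathcal{D}$, so by definition $D = \sup_{X \in \mathcal{D}} \Phi(X) - \inf_{X \in \mathcal{D}} \Phi(X)$. First I would write $\Phi(X) = X \bullet \log X - \Tr X = \sum_{i=1}^n g(\lambda_i(X))$, where $g(t) = t \log t - t$ (with the convention $g(0) = 0$) and the $\lambda_i(X)$ are the eigenvalues of $X$; and I would observe that the spectra of matrices in $\mathcal{D}$ range exactly over the scaled simplex $\Lambda_K := \{\lambda \in \reals^n : \lambda \ge 0,\ \sum_i \lambda_i \le K\}$, every point of which is realized by a diagonal matrix. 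Hence it suffices to compute $\sup$ and $\inf$ of $\lambda \mapsto \sum_i g(\lambda_i)$ over $\Lambda_K$.

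For the infimum, note $g'(t) = \log t$, so $g$ is minimized over $t \ge 0$ at $t = 1$ with $g(1) = -1$; therefore $\Phi(X) \ge -n$, with equality at $X = I_n$, which is feasible since $\Tr I_n = n \le K$. Thus $\inf_{X \in \mathcal{D}} \Phi(X) = -n$. For the supremum I would use the pointwise bound $g(t) = t(\log t - 1) \le t(\log K - 1)$, valid for every $t \in [0, K]$; summing over $i$ and using $\sum_i \lambda_i \le K$ together with $\log K - 1 \ge 0$ gives $\Phi(X) \le (\log K - 1) K = K \log K - K$, so $\sup_{X \in \mathcal{D}} \Phi(X) \le K \log K - K$ (and this bound is tight, attained at the rank-one matrix $K \ones_1 \ones_1^T$, as one can also see from convexity of $g$ forcing the max onto a vertex of $\Lambda_K$).

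Combining the two bounds, $D \le (K \log K - K) - (-n) = K \log K - (K - n) \le K \log K$, where the last inequality is exactly the hypothesis $K \ge n$. Since the error guarantee \cref{errbnd-almd} of \cref{thm-almd} only needs an upper bound on $D$, we may record $D = K \log K$. The only step meriting care is the reduction to the eigenvalue simplex $\Lambda_K$ — everything after that is one line of scalar calculus; a harmless edge case is that the argument needs $\log K \ge 1$, which holds comfortably for the value $K = 40 n (\log n)^{10}$ chosen in \cref{alg-new}.
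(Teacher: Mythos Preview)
Your proof is correct, and in fact the paper states this lemma without proof, so there is nothing to compare against. Your spectral reduction to the scalar function $g(t)=t\log t - t$ on the simplex $\Lambda_K$ is the natural argument, and your computations of $\inf_{\mathcal D}\Phi=-n$ (at $X=I_n$) and $\sup_{\mathcal D}\Phi=K\log K - K$ (at a rank-one matrix with eigenvalue $K$) are right; the final inequality $K\log K - K + n \le K\log K$ is exactly the hypothesis $K\ge n$.

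The caveat you flag is genuine: the bound $\sum_i g(\lambda_i)\le(\log K-1)\sum_i\lambda_i\le(\log K-1)K$ uses $\log K\ge 1$ in the second inequality, and more generally the lemma as stated can fail for very small $K$ (e.g.\ $n=K=1$ gives $D=1>0=K\log K$). But since the paper fixes $K=40n(\log n)^{10}$, this is indeed harmless here, and the bound $D\le K\log K$ is all that \cref{thm-almd} needs.
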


\newalgstrongconvexity*

To prove the claimed strong convexity, we need the following tools. 
\begin{definition} A function $f: \reals^n \rightarrow \reals$ is $L$-smooth in norm $\norm{{}\cdot{}}$ if it is continuously differentiable and satisfies $\norm{\nabla f(x) - \nabla f(y)}_* \leq L\norm{x - y}$ for all $x$ and $y$ in $\dom f$. 
\end{definition}
For functions on symmetric matrices, we use the following equivalent definition of smoothness. 
\begin{definition}\label{matrix-smoothness} A function $f: \symm^n \rightarrow \reals$ is $L$-smooth in $\norm{{}\cdot{}}$ if and only if for $h: \reals \rightarrow \reals$ defined as $h\lr{t} = f(X + tH)$ for $H\in\symm^n$ such that $X+tH \in \dom(f)$, we have $h^{\prime\prime}\lr{0}\leq L \norm{H}^2$.
\end{definition}

\begin{theorem}[\cite{KakadeDuality}]\label{smooth-strong-duality}
Assume that $f$ is a closed and convex function. Then $f$ is $\beta$-strongly convex with respect to a norm $\norm{{}\cdot{}}$ if and only if its Fenchel dual, $f^*$, is $\frac{1}{\beta}$-smooth with respect to the dual norm $\norm{{}\cdot{}}_{*}$. 
\end{theorem}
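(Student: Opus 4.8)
The plan is to prove the equivalence through the subgradient correspondence and two standard reformulations, and then to apply these symmetrically. The three ingredients are: \textbf{(M)} for closed convex $f$, $\beta$-strong convexity of $f$ with respect to $\|\cdot\|$ is equivalent to $\beta$-strong monotonicity of the subdifferential, $\langle u-v,\,x-y\rangle \ge \beta\|x-y\|^2$ whenever $u\in\partial f(x)$, $v\in\partial f(y)$; \textbf{(C)} a convex function $g$ is $L$-smooth with respect to a norm if and only if $\nabla g$ exists everywhere and is $L^{-1}$-co-coercive, $\langle \nabla g(u)-\nabla g(v),\,u-v\rangle \ge \tfrac1L\|\nabla g(u)-\nabla g(v)\|^2$ measured in the dual of that norm; and \textbf{(D)} the Fenchel--Young correspondence $u\in\partial f(x)\iff x\in\partial f^*(u)$, which holds because $f$ is closed and convex so $f^{**}=f$. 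Ingredient (M) I would establish by adding the two strong-convexity inequalities (one direction) and by integrating the right-derivative of $t\mapsto f(x+t(y-x))$ along the segment using the monotonicity bound pointwise (the converse). Ingredient (C) is the classical consequence of the descent lemma, using the identity $\min_{h}\big(\langle a,h\rangle+\tfrac L2\|h\|^2\big)=-\tfrac1{2L}\|a\|_*^2$ applied to the shifted function $w\mapsto g(w)-\langle\nabla g(z),w\rangle$ which is minimized at $z$; the ``Lipschitz gradient $\Rightarrow$ $h''(0)\le L\|H\|^2$'' direction of (C) is the easy half and is all that is needed for the forward implication.

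For ``$f$ is $\beta$-strongly convex $\Rightarrow$ $f^*$ is $\tfrac1\beta$-smooth'': I would first note that $\beta$-strong convexity makes $z\mapsto f(z)-\langle u,z\rangle$ coercive, so its infimum is attained; by Fermat's rule the minimizer $x$ satisfies $u\in\partial f(x)$, and if $x,y$ both attain it then (M) forces $\|x-y\|=0$, so the minimizer is unique and $f^*$ is differentiable with $\nabla f^*(u)$ equal to it. Then for arbitrary $u,v$, put $x=\nabla f^*(u)$, $y=\nabla f^*(v)$; by (D) we have $u\in\partial f(x)$, $v\in\partial f(y)$, so (M) and the generalized Cauchy--Schwarz inequality give $\beta\|x-y\|^2\le\langle u-v,\,x-y\rangle\le\|u-v\|_*\,\|x-y\|$, whence $\|\nabla f^*(u)-\nabla f^*(v)\|\le\tfrac1\beta\|u-v\|_*$; by the easy half of (C) this $\tfrac1\beta$-Lipschitz gradient yields $\tfrac1\beta$-smoothness of $f^*$ with respect to $\|\cdot\|_*$.

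For the converse ``$f^*$ is $\tfrac1\beta$-smooth $\Rightarrow$ $f$ is $\beta$-strongly convex'': by (C) applied to $g=f^*$ with $L=\tfrac1\beta$ we get co-coercivity $\langle\nabla f^*(u)-\nabla f^*(v),\,u-v\rangle\ge\beta\|\nabla f^*(u)-\nabla f^*(v)\|^2$. Now take any pair $u\in\partial f(x)$, $v\in\partial f(y)$; by (D), $x\in\partial f^*(u)$, and since smoothness of $f^*$ includes differentiability, $\partial f^*(u)=\{\nabla f^*(u)\}$, so $x=\nabla f^*(u)$ and likewise $y=\nabla f^*(v)$. Substituting into the co-coercivity inequality gives exactly $\langle u-v,\,x-y\rangle\ge\beta\|x-y\|^2$ for all subgradient pairs, i.e.\ $\partial f$ is $\beta$-strongly monotone, and (M) then gives that $f$ is $\beta$-strongly convex, completing the equivalence.

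The main obstacle is keeping the two norms and the pairing between a space and its dual straight: since $f^*$ is defined on the dual space with norm $\|\cdot\|_*$, its gradients lie in the bidual and ``the dual norm'' appearing in the smoothness and co-coercivity statements for $f^*$ is the original primal norm $\|\cdot\|$, not $\|\cdot\|_*$, so the constants and the two uses of the generalized Cauchy--Schwarz step $\langle a,b\rangle\le\|a\|_*\|b\|$ must be placed with care. A secondary technical point is the regularity bookkeeping—attainment of the supremum defining $f^*$ and differentiability of $f^*$—which is precisely where the ``closed and convex'' hypothesis on $f$ enters; I would handle the boundary/unbounded-domain cases by invoking coercivity of $f-\langle u,\cdot\rangle$ together with lower semicontinuity rather than assuming $f$ is finite everywhere.
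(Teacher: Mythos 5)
The paper does not prove this statement: it is imported as a citation to Kakade, Shalev-Shwartz, and Tewari (\cite{KakadeDuality}), so there is no ``paper's own proof'' to compare against. Your task, then, reduces to whether your blind proof is correct on its own merits.

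Your argument is, in essence, sound, and is the standard duality proof. The forward implication (strong convexity of $f$ $\Rightarrow$ smoothness of $f^*$) correctly passes through coercivity to get attainment, uses Fermat's rule and strong monotonicity (from the sum of the two strong-convexity inequalities) to get uniqueness and hence differentiability of $f^*$, and finishes with the generalized Cauchy--Schwarz step $\beta\|x-y\|^2 \le \langle u-v, x-y\rangle \le \|u-v\|_*\|x-y\|$; you correctly track which norm the gradient of $f^*$ lives in. The converse correctly reads co-coercivity of $\nabla f^*$ (a Baillon--Haddad-type inequality, which you derive from the $\min_h(\langle a,h\rangle + \tfrac L2\|h\|^2) = -\tfrac{1}{2L}\|a\|_*^2$ identity applied to the shifted function; this does hold verbatim for arbitrary norms, not just Euclidean) together with the Fenchel--Young subgradient correspondence and your monotonicity characterization (M). Two small points to tidy up. First, your labeling of ingredient (C) is slightly overloaded: in the forward direction what you actually need is ``Lipschitz gradient $\Rightarrow$ quadratic upper bound'' (the descent lemma), which is a separate, easier fact than the co-coercivity equivalence that (C) nominally asserts, and which is the direction you use in the converse. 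Second, strictly speaking you should assume $f$ is proper (closed, convex, and proper) for $f^{**}=f$ and the subgradient correspondence to be meaningful; you implicitly need this, and it is worth stating, though in the paper's finite-dimensional, finite-valued setting it is automatic. Neither issue affects the correctness of the proof.
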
 

\begin{theorem}[\cite{Judit-Nemirov}]\label{judit-nemirov}
Let $\Delta$ be an open interval on the real axis, and $f$ be a twice differentiable function on $\Delta$ satisfying, for a certain $\theta \in \reals$, for all $a< b$, where $a, b \in \Delta$, $\frac{f^{\prime}(b) - f^{\prime}(a)}{b - a} \leq \theta \frac{f^{\prime\prime}(a) + f^{\prime\prime}(b)}{2}$. Let $\X_n(\Delta)$ be the set of all $n \times n$ symmetric matrices with eigenvalues belonging to $\Delta$. Then for $X \in \X_n(\Delta)$,  the function $F(X) = \Tr f(X)$ is twice differentiable, and for every $H \in \symm^n$, we have $D^2 F(X)[H, H] \leq \theta \Tr (H f^{\prime\prime}(X) H)$.
\end{theorem}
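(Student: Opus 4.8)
The plan is to reduce this matrix inequality to a family of scalar inequalities indexed by pairs of eigenvalues of $X$, each of which is exactly the hypothesis imposed on $f$. Write $X = U\Lambda U^{T}$ with $\Lambda = \diag(\lambda_1,\dots,\lambda_n)$, all $\lambda_i \in \Delta$, and put $\widetilde H \defeq U^{T}HU \in \symm^n$. Since $X + tH = U(\Lambda + t\widetilde H)U^{T}$, we have $F(X+tH) = \Tr f(\Lambda + t\widetilde H)$ and $\Tr(Hf''(X)H) = \Tr(\widetilde H f''(\Lambda)\widetilde H)$, so it suffices to prove the statement when $X = \Lambda$ is diagonal and $H$ is an arbitrary symmetric matrix; I assume this henceforth.

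The key step is the identity
\[ D^2 F(\Lambda)[H,H] \;=\; \sum_{i\neq j}\frac{f'(\lambda_i) - f'(\lambda_j)}{\lambda_i - \lambda_j}\,H_{ij}^2 \;+\; \sum_{i} f''(\lambda_i)\,H_{ii}^2 , \]
which is the Daleckii--Krein formula for the second derivative of a spectral trace function. I would derive it by elementary perturbation theory: when $\Lambda$ has simple spectrum, the eigenvalues $\mu_k(t)$ of $\Lambda + tH$ obey $\mu_k'(0) = H_{kk}$ and $\mu_k''(0) = 2\sum_{j\neq k} H_{kj}^2/(\lambda_k - \lambda_j)$, so differentiating $F(\Lambda + tH) = \sum_k f(\mu_k(t))$ twice at $t=0$ and symmetrizing the arising double sum under $k \leftrightarrow j$ (using $H_{kj} = H_{jk}$) yields the displayed formula; this computation also certifies that $F$ is twice differentiable on $\X_n(\Delta)$. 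The case of repeated eigenvalues follows because matrices with simple spectrum are dense in $\X_n(\Delta)$ and both $D^2F(X)[H,H]$ and $\Tr(Hf''(X)H)$ depend continuously on $X$ (using continuity of $f'$, which holds in all our applications), so the inequality extends by a limiting argument; alternatively one may quote the Daleckii--Krein formula directly.

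Next I would rewrite the right-hand side: since $f''(\Lambda)$ is diagonal, $\Tr(Hf''(\Lambda)H) = \sum_{i,j} f''(\lambda_j)H_{ij}^2$, and symmetrizing under $i \leftrightarrow j$ gives $\Tr(Hf''(\Lambda)H) = \sum_{i,j}\tfrac12\big(f''(\lambda_i) + f''(\lambda_j)\big)H_{ij}^2$. Comparing with the identity above, the theorem reduces to the two scalar facts
\[ \frac{f'(\lambda_i) - f'(\lambda_j)}{\lambda_i - \lambda_j} \;\le\; \theta\,\frac{f''(\lambda_i) + f''(\lambda_j)}{2}\ \ (\lambda_i \neq \lambda_j), \qquad\text{and}\qquad f''(\lambda_i) \;\le\; \theta\, f''(\lambda_i)\ \ (\lambda_i = \lambda_j), \]
because multiplying the $(i,j)$-th such inequality by $H_{ij}^2 \ge 0$ and summing over all $i,j$ produces the claimed bound. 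The first inequality is precisely the hypothesis applied with $\{a,b\} = \{\lambda_i,\lambda_j\}$ --- both the divided difference and the average are symmetric in their arguments, so the ordering $a < b$ is immaterial --- and the second follows by letting $b \downarrow a$ in the hypothesis.

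The main obstacle is the Daleckii--Krein identity for $D^2F$: making it fully rigorous requires either invoking it as a known fact or treating second-order eigenvalue perturbation together with the degenerate-spectrum limit with some care. Everything downstream --- the symmetrization of $\Tr(Hf''(X)H)$ and the term-by-term comparison against the hypothesis --- is routine.
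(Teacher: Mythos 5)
The paper never proves this theorem: it is imported verbatim from \cite{Judit-Nemirov} and invoked as a black box inside \cref{cl6} and \cref{cl7}, so there is no in-paper argument to compare against. That said, your blind proposal is a correct proof, and it is the standard divided-difference argument for such spectral convexity lemmas. The reduction to diagonal $X$ by orthogonal conjugation is sound (both the second derivative of $F$ and the quantity $\Tr(Hf''(X)H)$ transform covariantly under $H\mapsto U^T H U$), and the Daleckii--Krein identity
\[
D^2F(\Lambda)[H,H] \;=\; \sum_{i\neq j}\frac{f'(\lambda_i)-f'(\lambda_j)}{\lambda_i-\lambda_j}\,H_{ij}^2 \;+\; \sum_i f''(\lambda_i)\,H_{ii}^2
\]
does reduce the theorem, after symmetrizing $\Tr(Hf''(\Lambda)H)=\tfrac12\sum_{i,j}\bigl(f''(\lambda_i)+f''(\lambda_j)\bigr)H_{ij}^2$, to a term-by-term comparison in which each $(i,j)$ pair is handled by the scalar hypothesis (both sides of the hypothesis being symmetric in $a,b$, the ordering $a<b$ is cosmetic), and the diagonal pairs $(i,i)$ follow from the hypothesis in the limit $b\downarrow a$. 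Two small points worth tightening if you were to write this out fully. First, your perturbative derivation of $\mu_k''(0)$ assumes a simple spectrum; the density-plus-continuity argument you sketch to remove this assumption requires continuity of $f''$ (so that $f''(\lambda_i)$ and the divided differences vary continuously under perturbation), not merely of $f'$ as you wrote---this is harmless here since the theorem is applied to $u\mapsto e^u$ and to compositions with $\log$, which are $C^\infty$. Second, the hypothesis is stated only for $a<b$; when you apply it at $\lambda_i=\lambda_j$ you are really using its closure under the limit $b\to a$, which gives $f''(a)\le\theta f''(a)$ and is automatically compatible with the theorem being invoked with $\theta=2$ on convex $f$ as in the paper. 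With those caveats noted---both of which you in fact flagged yourself---the proof is complete and matches what the cited source is understood to do.
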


\begin{theorem}[\cite{lewis95}]\label{lewis95}
Suppose that the function $f: \reals^n \rightarrow \reals$ is symmetric (that is, $f(\sigma x) = f(x)$ for all $x\in \dom f$ and all permutations $\sigma$). Then if  $f$ is convex and lower semicontinuous, the corresponding unitarily invariant function $f\circ \lambda$ is convex and lower semicontinuous on $\reals^{n\times n}$ 
\end{theorem}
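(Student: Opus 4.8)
The plan is to prove the assertion by conjugate duality, following Lewis. Write $\lambda : \symm^n \to \reals^n$ for the map sending a symmetric matrix to the vector of its eigenvalues arranged in nonincreasing order, and set $g \defeq f \circ \lambda$. Lower semicontinuity of $g$ is essentially free: $\lambda$ is continuous, so for every $\alpha$ the strict superlevel set $\{X : g(X) > \alpha\} = \lambda^{-1}(\{x : f(x) > \alpha\})$ is the preimage of an open set under a continuous map, hence open. (Since $f$ is finite-valued here, $f$ is in fact continuous and so is $g$.) Thus the real content is convexity, which I would obtain by showing that $g$ equals its biconjugate $g^{**}$; by the Fenchel--Moreau theorem this is equivalent to $g$ being convex, proper, and lower semicontinuous.

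The heart of the argument is the identity
\[ g^*(Y) \;=\; f^*\bigl(\lambda(Y)\bigr) \qquad \text{for every } Y \in \symm^n, \]
where $f^*$ is the (again symmetric, convex, lower semicontinuous) conjugate of $f$ on $\reals^n$, and conjugates of matrix functions are taken with respect to the trace inner product. Two facts drive this. First, von Neumann's trace inequality: for $X, Y \in \symm^n$ one has $\inner{X}{Y} = \Tr(XY) \le \inner{\lambda(X)}{\lambda(Y)}$, with equality when $X$ and $Y$ are diagonalized by a common orthonormal basis with eigenvalues listed in compatible order. Second, the rearrangement inequality together with symmetry of $f$: for a fixed nonincreasing vector $v$, passing from any $x \in \reals^n$ to its nonincreasing rearrangement $x^{\downarrow}$ does not decrease $\inner{x}{v}$ and leaves $f(x)$ unchanged, so the supremum $\sup_x \{\inner{x}{v} - f(x)\}$ defining $f^*(v)$ is unchanged when restricted to nonincreasing $x$. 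For the bound $g^*(Y) \le f^*(\lambda(Y))$, apply the trace inequality to $g^*(Y) = \sup_X\{\inner{X}{Y} - f(\lambda(X))\}$ and note that $\lambda(X)$ sweeps out exactly the cone of nonincreasing vectors as $X$ ranges over $\symm^n$; for the reverse bound, choose $X = U\diag(v)U^T$ with $U$ diagonalizing $Y$ in matching order, which makes the trace inequality tight while $\lambda(X) = v$ is an arbitrary nonincreasing vector.

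Granting the identity, the proof finishes in two lines: applying it to $f$ gives $g^* = f^* \circ \lambda$; applying it a second time, with $f^*$ in the role of $f$, gives $g^{**} = (f^*\circ\lambda)^* = (f^*)^* \circ \lambda = f^{**}\circ\lambda = f\circ\lambda = g$, where the penultimate equality is Fenchel--Moreau for $f$ on $\reals^n$. Hence $g$ coincides with its biconjugate and is therefore convex and lower semicontinuous, which is the claim. Equivalently, convexity is read off directly from $g(X) = \sup_Y\{\inner{X}{Y} - f^*(\lambda(Y))\}$, a pointwise supremum of affine functions of $X$.

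The step I expect to be the main obstacle is establishing $g^* = f^* \circ \lambda$, and within it the careful use of the equality condition in von Neumann's trace inequality: one must verify that for any $Y$ and any target nonincreasing vector $v$ there is an $X \in \symm^n$ with $\lambda(X) = v$ and $\Tr(XY) = \inner{v}{\lambda(Y)}$, which is exactly where $X = U\diag(v)U^T$ for a suitably ordered eigenbasis $U$ of $Y$ enters. A minor bookkeeping point is properness --- since the version used in this paper has $f$ finite everywhere, $g$ is finite everywhere too, so no $\pm\infty$ pathologies arise when invoking Fenchel--Moreau.
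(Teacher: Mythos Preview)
The paper does not give its own proof of this statement: it is quoted as a result from \cite{lewis95} and then invoked as a black box in the proof of Claim~\ref{cl2} to pass from convexity of the scalar function $\psi$ to convexity of the matrix function $\Psi$. There is therefore nothing in the paper to compare your proposal against.

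That said, your sketch is a faithful outline of Lewis's original argument: compute the Fenchel conjugate of $g = f\circ\lambda$ via von Neumann's trace inequality (and its equality case) together with the rearrangement inequality and symmetry of $f$, obtain $g^* = f^*\circ\lambda$, iterate to get $g^{**} = f^{**}\circ\lambda = f\circ\lambda = g$, and read off convexity and lower semicontinuity from Fenchel--Moreau. The one place to be slightly careful is the domain: the eigenvalue map $\lambda$ in this setting is defined on $\symm^n$ rather than all of $\reals^{n\times n}$ as the theorem statement literally reads, and your proposal (correctly) works over $\symm^n$; this matches how the paper actually applies the result.
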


For our proof, we use definitions from Definition~\ref{defs-sc} in the following way. We first show that $\Psi$ satisfies \[\Psi^*(Y) = \Phi(Y), \,\text{on } \{Y: Y\succeq 0, \Tr Y \leq K\},\numberthis\label[eq]{wts-conj}\]where $\Phi(Y) = Y \bullet \log Y - \Tr Y$ is the mirror map, as defined in the statement of the lemma. We then prove that $\Psi$ is $\beta$-smooth with respect to the operator norm for a certain $\beta > 0$. Theorem~\ref{smooth-strong-duality} then immediately implies $1/\beta$-strong convexity of $\Psi^*$ with respect to the nuclear norm. Then Equation~\ref{wts-conj} implies that $\Phi$  is $1/\beta$-strongly convex with respect to the nuclear norm on the domain $\{Y: Y\succeq 0, \Tr Y \leq K\}$, which is to be proved. We accomplish our first goal (Equation~\ref{wts-conj}) in the following sequence of steps. 

Claim~\ref{cl1} proves that the function $\psi$ and its matrix version, $\Psi$, are both continuously differentiable at the boundary of definition of the two pieces. Claim~\ref{cl2} then proves that $\psi_1$ and $\psi_2$ are convex; in conjuncation with Claim~\ref{cl1}, this implies  $\psi$ is convex. Applying Theorem~\ref{lewis95} extends the property of convexity to $\Psi$. Claim~\ref{cl3} proves that the vector functions $\psi$ and $\phi$ satisfy $\psi_1^*(x) = \phi(x)$ for $x\in \reals^n_{+}$. Claim~\ref{cl4}  proves that given an input point $x \in \{x:x_i\geq 0, \sum_{i=1}^n x_i \leq K\}$, the point $y$ which attains the optimum in computing $\psi_1^*(x)$ lies in the \emph{interior} of the set $\{y: \psi_1(y)\leq 2K \}$. Claim~\ref{cl5} shows that $\psi^*(x) = \psi_1^*(x)$ for $x \in \{x: x_i \geq 0, \sum_{i=1}^n x_i \leq K\}$. This is obtained by combining the results of Claim~\ref{cl2} and \ref{cl4}. 

We then use these results as follows: since on the set $\{x: x_i\geq 0, \sum_{i=1}^n x_i \leq K\}$, we have $\psi^* = \phi$, this implies $\Psi^* = \Phi$ on the corresponding set, $\{X: X\succeq 0, \Tr X \leq K\}$. Next, to show smoothness of $\Psi$, we use Theorem~\ref{judit-nemirov} to compute the smoothness constants of each part of $\Psi$ (in Claims~\ref{cl6} and \ref{cl7}), and then combine with continuous differentiability at the boundary (from Claim~\ref{cl1}) to get the overall smoothness constant of $\Psi$. By the argument at the start of this proof, this immediately proves the desired strong convexity parameter.  We now proceed to prove all the claims aluded to above. 

\begin{claim}\label{cl1}
The functions $\Psi$ and $\psi$ are both continuously differentiable at the boundary.
\end{claim}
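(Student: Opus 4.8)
The plan is to check directly that the two defining pieces of $\psi$ (and of $\Psi$) agree in both value and gradient along the interface between their domains of definition, and then to invoke the elementary fact that gluing two functions that are $C^1$ on a neighborhood of the interface, with matching zeroth- and first-order data there, produces a $C^1$ function.

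First I would treat the vector case. The two pieces of $\psi$ meet on the level set $\mathcal{B} \defeq \{y \in \reals^n : \psi_1(y) = 2K\}$. Both $\psi_1(y) = \sum_i e^{y_i}$ and $\psi_2(y) = 2K\log\psi_1(y) - 2K\log(2K) + 2K$ are $C^\infty$ on all of $\reals^n$, the latter because $\psi_1(y) > 0$ for every $y$. On $\mathcal{B}$ one has $\psi_2(y) = 2K\log(2K) - 2K\log(2K) + 2K = 2K = \psi_1(y)$, so $\psi$ is continuous there. For the gradients, $\nabla\psi_1(y) = (e^{y_1},\dots,e^{y_n})$ and $\nabla\psi_2(y) = \tfrac{2K}{\psi_1(y)}\nabla\psi_1(y)$, which coincide precisely when $\psi_1(y) = 2K$, i.e.\ on $\mathcal{B}$. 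Hence $\psi$ coincides with a smooth function on each of the closed sets $\{\psi_1 \leq 2K\}$ and $\{\psi_1 \geq 2K\}$, these two smooth functions share value and gradient on the overlap $\mathcal{B}$, and so the common gradient field is continuous across $\mathcal{B}$ and furnishes the derivative of $\psi$ there. This gives continuous differentiability of $\psi$ at the boundary.

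Next I would repeat the computation for $\Psi$, whose pieces meet on $\{Y \in \symm^n : \Tr\exp Y = 2K\}$. Using the standard matrix-calculus identity $\nabla_Y \Tr\exp Y = \exp Y$ --- which follows from $\tfrac{d}{dt}\Tr\exp(Y + tH)\big|_{t=0} = \Tr\big(\int_0^1 e^{sY} H e^{(1-s)Y}\,ds\big) = \Tr(\exp(Y) H)$ --- one gets $\nabla\Psi_1(Y) = \exp Y$ and $\nabla\Psi_2(Y) = \tfrac{2K}{\Tr\exp Y}\exp Y$, which are equal on the interface, and $\Psi_2(Y) = 2K = \Psi_1(Y)$ there as well. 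Since $\Psi_1$ and $\Psi_2$ are smooth on all of $\symm^n$ (again $\Tr\exp Y > 0$ always), the same gluing argument applies verbatim.

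This claim is essentially bookkeeping; the only point that deserves care is the identity $\nabla_Y\Tr\exp Y = \exp Y$, together with the (routine) observation that agreement of value and gradient of two smooth functions along the interface is exactly what is needed to conclude $C^1$-ness of the glued function --- immediate, for instance, from the mean value theorem applied along segments crossing the interface, or from the continuity of the common gradient field.
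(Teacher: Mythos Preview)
Your proof is correct and follows essentially the same approach as the paper: verify that $\psi_1$ and $\psi_2$ (respectively $\Psi_1$ and $\Psi_2$) agree in value and gradient on the interface $\{\psi_1 = 2K\}$, using $\nabla\psi_2 = \tfrac{2K}{\psi_1}\nabla\psi_1$. The paper is terser about the matrix case (it simply says the chain rule ``applies to matrices as well''), whereas you spell out $\nabla_Y\Tr\exp Y = \exp Y$ and the gluing justification explicitly, but the substance is the same.
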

\begin{proof}[Proof of Claim] One can check that $\psi_1(y) = \psi_2(y)$ at the boundary. This implies continuity of the function $\psi$. The derivatives of the two functions are also the same at the boundary. The $i$-th component of the gradient is given by $\nabla_i \psi_2(y) = \frac{2K\nabla_i \psi_1(y)}{\psi_1(y)}$. At the boundary of the two parts of the function, we have $\psi_1(y) = 2K$. Substituting this into the above gradient gives $\nabla \psi_2(y) = \nabla \psi_1(y)$. This shows that $\psi$ is continuously differentiable at the boundary. We only used chain rule of derivatives here, which applies to matrices as well, so the exactly same reasoning also gives that $\Psi$ is continuously differentiable at the boundary. 
\end{proof}

\begin{claim}\label{cl2}
The functions $\psi$ and $\Psi$ are convex on their domains. 
\end{claim}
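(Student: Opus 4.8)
The plan is to establish convexity of the scalar function $\psi$ first and then transfer it to the matrix function $\Psi$ via \cref{lewis95}. The delicate point — and the main obstacle — is that $\psi$ is \emph{not} a pointwise maximum (or minimum) of $\psi_1$ and $\psi_2$: the scalar map $t\mapsto 2K\log t-2K\log(2K)+2K$ is concave and tangent to the line $t\mapsto t$ at $t=2K$, hence lies weakly below it, so $\psi_2\le\psi_1$ everywhere, with equality exactly on $\{\psi_1=2K\}$. Consequently $\psi$ coincides with the \emph{larger} of the two pieces on $\{\psi_1\le 2K\}$ and with the \emph{smaller} one on $\{\psi_1>2K\}$, so the elementary ``max of convex functions is convex'' argument does not apply, and the gluing across the hypersurface $\{\psi_1=2K\}$ must be handled directly.

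I would argue as follows. First, $\psi_1(y)=\sum_i e^{y_i}$ is convex on $\reals^n$ as a sum of convex functions, and $\psi_2=2K\log\psi_1+(\text{const})$ is convex on $\reals^n$ because $y\mapsto\log\sum_i e^{y_i}$ is the log-sum-exp function, whose Hessian is $\diag(p)-pp^T$ with $p$ the associated Gibbs vector, which is positive semidefinite; both pieces are smooth. By \cref{cl1}, $\psi$ is continuously differentiable on all of $\reals^n$ (it is smooth away from $\{\psi_1=2K\}$ and $C^1$ across it). Now fix $x,y\in\reals^n$, let $\gamma(t)=x+t(y-x)$, and set $g(t)=\psi(\gamma(t))$ for $t\in[0,1]$. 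Since $\psi_1$ is convex, the sublevel set $\{\psi_1\le 2K\}$ is convex, so $\{t\in[0,1]:\gamma(t)\in\{\psi_1\le 2K\}\}$ is a closed subinterval $[\alpha,\beta]\subseteq[0,1]$; hence $[0,1]$ decomposes into at most three abutting closed subintervals $[0,\alpha]$, $[\alpha,\beta]$, $[\beta,1]$, on the middle of which $g$ equals the (smooth, convex) restriction of $\psi_1$ to the segment and on the outer two of which $g$ equals the restriction of $\psi_2$ (the pieces agreeing at $\alpha,\beta$ by continuity). On each subinterval $g'$ is therefore nondecreasing, and $g'$ is continuous on $[0,1]$ by the $C^1$ property of $\psi$; a continuous function that is nondecreasing on each of finitely many abutting intervals is nondecreasing on their union, so $g'$ is nondecreasing on $[0,1]$ and $g$ is convex there. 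As $x,y$ were arbitrary, $\psi$ is convex.

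Finally, to pass to $\Psi$: the function $\psi$ is invariant under coordinate permutations (each of $\psi_1$, $\psi_2$, and the selector condition $\psi_1\le 2K$ is), it is continuous by \cref{cl1} hence lower semicontinuous, and $\Psi(Y)=\psi(\lambda(Y))$ where $\lambda(Y)$ denotes the vector of eigenvalues of $Y$, since $\Psi_1(Y)=\Tr\exp Y=\sum_i e^{\lambda_i(Y)}=\psi_1(\lambda(Y))$ and likewise $\Psi_2(Y)=\psi_2(\lambda(Y))$ under the same case split. Applying \cref{lewis95} to the convex, lower semicontinuous, symmetric function $\psi$ then shows that $\Psi=\psi\circ\lambda$ is convex (and lower semicontinuous), which completes the proof of the claim.
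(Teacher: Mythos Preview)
Your proof is correct and follows the same approach as the paper: show that each piece $\psi_1,\psi_2$ is convex, glue them via the $C^1$ property from \cref{cl1}, and transfer to the spectral function $\Psi$ using \cref{lewis95}. The paper's own proof simply asserts that ``convex pieces $+$ continuous differentiability $\Rightarrow$ convexity of $\psi$'' without spelling out the gluing; your segment-based argument (decompose a chord into at most three subintervals using convexity of the sublevel set $\{\psi_1\le 2K\}$, then note that a continuous $g'$ which is nondecreasing on abutting intervals is globally nondecreasing) is exactly the missing justification, so the content is identical but your write-up is more complete.
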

\begin{proof} The function $\psi$ is a piecewise function, each piece composed of a standard convex function (see \cite{BVbook}). Combine with continuous differentiability from Claim~\ref{cl1} gives convexity of $\psi$. Applying Theorem~\ref{lewis95} implies convexity of $\Psi$. 
 \end{proof}
  
\begin{claim}\label{cl3}For any input $x \in \reals^n_{+}$, we have $\psi_1^*(x) = \phi(x)$. 
\end{claim}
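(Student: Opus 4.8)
\textbf{Proof proposal for \cref{cl3}.} The plan is to compute the Fenchel conjugate $\psi_1^*(x) = \sup_{y \in \reals^n} \lr{ \inner{x}{y} - \psi_1(y) }$ directly from the definition and exploit the fact that $\psi_1(y) = \sum_{i=1}^n \exp y_i$ is separable across coordinates. Since $\inner{x}{y} = \sum_{i=1}^n x_i y_i$ is also separable, the supremum decouples into $n$ scalar problems: $\psi_1^*(x) = \sum_{i=1}^n \sup_{t \in \reals} \lr{ x_i t - e^t }$.

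Next I would solve each scalar problem $\sup_{t \in \reals}\lr{ x_i t - e^t }$. For $x_i > 0$, the objective is strictly concave in $t$, so the first-order condition $x_i - e^t = 0$ identifies the unique maximizer $t^* = \log x_i$, at which the objective value is $x_i \log x_i - e^{\log x_i} = x_i \log x_i - x_i$. For the boundary case $x_i = 0$, the objective $-e^t$ is maximized in the limit $t \to -\infty$, giving supremum value $0$, which agrees with $x_i \log x_i - x_i$ evaluated at $x_i = 0$ under the standard convention $0 \log 0 = 0$. (The hypothesis $x \in \reals^n_{+}$ is exactly what rules out the remaining case $x_i < 0$, for which $x_i t - e^t \to +\infty$ as $t \to -\infty$ and the conjugate would be infinite; so restricting to the nonnegative orthant is essential and is precisely the regime in which the later claims invoke this identity.)

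Summing the per-coordinate values yields $\psi_1^*(x) = \sum_{i=1}^n \lr{ x_i \log x_i - x_i } = \phi(x)$, as claimed. There is no real obstacle here: the only point requiring a sentence of care is the $x_i = 0$ boundary case and the convention $0\log 0 = 0$; everything else is an elementary one-variable calculus computation, performed coordinatewise.
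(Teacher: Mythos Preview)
Your proposal is correct and follows essentially the same approach as the paper: both compute the conjugate coordinatewise via the first-order condition $x_i = e^{t}$, obtaining $\psi_1^*(x) = \sum_i (x_i \log x_i - x_i) = \phi(x)$. You are slightly more careful than the paper in explicitly handling the boundary case $x_i = 0$ via the convention $0\log 0 = 0$, whereas the paper simply writes $y_i^* = \log x_i$ without comment.
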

\begin{proof}[Proof of Claim]\label{pf-claim-3}
By definition, we have $\psi_1^*(x) = \sup_y (x^\top y - \sum_{i=1}^n \exp(y_i))$. Observe that the domain of $\psi_1^*$ is $\reals^n_{+}$ (because if there exists an input with a negative coordinate, then the corresponding coordinate of the maximizer $y^*$ can be made to go to $-\infty$). Therefore, given an input $x \in \reals^n_{+}$, the  supremum is attained at $y^*$ satisfying $x_i = \exp(y^*_i)$. This means the maximizer is $y_i^* = \log x_i$.  Therefore the conjugate is $\psi_1^*(x) = \sum_{i=1}^n x_i \log x_i - \sum_{i=1}^n x_i = \phi(x)$. 
\end{proof}

\begin{claim}\label{cl4}
For any $x$ in the set $\left\{x: x_i \geq 0, \sum_{i = 1}^n x_i \leq K\right\}$, the point $y^*=\argmax 
\lr{x^T y - \psi_1(y)}$ lies in  $\mbox{\bf{int}}\left\{ y: \psi_1(y) \leq 2 K \right\}$, where $\intr$ denotes the interior of the set. 
\end{claim}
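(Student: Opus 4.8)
We want to show that for $x$ in the simplex-like set $\{x : x_i \geq 0, \sum_i x_i \leq K\}$, the maximizer $y^\ast$ of $x^\top y - \psi_1(y)$ satisfies $\psi_1(y^\ast) < 2K$ strictly.

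The plan is to use the explicit form of the maximizer already identified in \cref{cl3}. From the proof of \cref{cl3}, the function $x^\top y - \psi_1(y) = \sum_i (x_i y_i - \exp y_i)$ is separable and concave, and its unique maximizer is $y^\ast_i = \log x_i$ whenever $x_i > 0$. (If some $x_i = 0$, that coordinate of the maximizer is pushed to $-\infty$; I will handle this boundary case by a limiting argument, or simply note that $\exp y^\ast_i = x_i = 0$ contributes nothing to $\psi_1$, so the formula $\psi_1(y^\ast) = \sum_i x_i$ still holds in the limit.) Substituting this maximizer into $\psi_1$, I get
\[
\psi_1(y^\ast) = \sum_{i=1}^n \exp(y^\ast_i) = \sum_{i=1}^n x_i \leq K.
\]

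First I would record that $\psi_1(y^\ast) = \sum_i x_i \leq K < 2K$, which is the strict inequality defining the interior of $\{y : \psi_1(y) \leq 2K\}$ (since $\psi_1$ is continuous, this sublevel set has interior equal to $\{y : \psi_1(y) < 2K\}$, and $y^\ast$ lies in it because $K < 2K$). Then I would dispatch the degenerate case where $x$ has some zero coordinates: approximate $x$ by $x^{(\epsilon)}$ with all coordinates positive and $\sum_i x^{(\epsilon)}_i \leq K$, note the maximizer is $y^{(\epsilon)}_i = \log x^{(\epsilon)}_i$ with $\psi_1(y^{(\epsilon)}) = \sum_i x^{(\epsilon)}_i \leq K$, and observe the conclusion is an open condition stable under this perturbation; alternatively, just observe directly that the supremum defining $\psi_1^\ast(x)$ is still attained (as a limit) with $\psi_1$-value equal to $\sum_i x_i$.

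There is essentially no real obstacle here — the claim is almost immediate once one writes down the closed-form maximizer from \cref{cl3} and plugs it in; the only thing requiring a sentence of care is the boundary case $x_i = 0$, and the observation that "interior of a sublevel set of a continuous convex function" equals the strict sublevel set, so that $\psi_1(y^\ast) \leq K < 2K$ genuinely certifies membership in the interior.
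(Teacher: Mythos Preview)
Your proposal is correct and follows essentially the same approach as the paper: use the closed-form maximizer $y_i^* = \log x_i$ from \cref{cl3}, compute $\psi_1(y^*) = \sum_i x_i \leq K < 2K$, and conclude. The paper's own proof is slightly terser, glossing over the $x_i = 0$ boundary case and the identification of the interior with the strict sublevel set that you took care to address.
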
 
\begin{proof}[Proof of Claim]
From the proof of Claim~\ref{cl3}, for any $x\in \reals^n_{+}$, we have that $y^* = \argmax \lr{x^T y - \psi_1(y)}$ satisfies $y_i^* = \log x_i$ for $1\leq i\leq n$. In addition to this, the statement of the lemma also requires the input $x$ to satisfy $x_i \geq 0, \sum_{i = 1}^n x_i \leq K$. Plug in the values of $x$ in terms of $y$ in the above inequality to get $\sum_{i =1}^n \exp y_i^*  \leq K$,  which is the same as saying $\psi_1(y^*) \leq K< 2K$. This shows that the optimum, $y^*$, lies in $\mbox{\bf{int}}\left\{ y: \psi_1(y) \leq 2 K\right\}$.
\end{proof}

\begin{claim}\label{cl5}
	We have $\psi^*(x) = \psi_1^*(x)$ on all $x \in \left\{x:x_i \geq 0, \sum_{i = 1}^n x_i \leq K\right\}$. 
\end{claim}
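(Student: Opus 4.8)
The plan is to exhibit the maximizer in the variational definition $\psi^*(x) = \sup_y \lr{x^\top y - \psi(y)}$ explicitly and then use convexity of $\psi$ to certify that this candidate is in fact a \emph{global} maximizer. Fix $x$ with $x_i \ge 0$ and $\sum_{i=1}^n x_i \le K$, and, as in \cref{cl3}, let $y^*$ be the point with $y_i^* = \log x_i$; coordinates with $x_i = 0$ are handled by exactly the same limiting argument used in \cref{cl3} and contribute nothing to the objective, so I suppress them. By \cref{cl3}, $y^*$ is the maximizer of $x^\top y - \psi_1(y)$ and satisfies the first-order condition $\nabla \psi_1(y^*) = x$ (this is just $x_i = \exp y_i^*$), and by \cref{cl4}, $y^* \in \mathbf{int}\{y : \psi_1(y) \le 2K\}$.

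On the open set $\mathbf{int}\{y : \psi_1(y) \le 2K\}$ the function $\psi$ coincides with $\psi_1$ by \cref{psi}, so $\psi$ is differentiable at $y^*$ with $\nabla \psi(y^*) = \nabla \psi_1(y^*) = x$. Hence $y^*$ is a stationary point of $h(y) \defeq x^\top y - \psi(y)$, which is concave because $\psi$ is convex by \cref{cl2}. For a concave function that is differentiable at $y^*$ and satisfies $\nabla h(y^*) = 0$, the supergradient inequality gives $h(z) \le h(y^*) + \inner{\nabla h(y^*)}{z - y^*} = h(y^*)$ for every $z$, so $y^*$ is a global maximizer of $h$. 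Therefore
\[ \psi^*(x) \,=\, h(y^*) \,=\, x^\top y^* - \psi(y^*) \,=\, x^\top y^* - \psi_1(y^*) \,=\, \psi_1^*(x), \]
where the third equality uses $\psi(y^*) = \psi_1(y^*)$ (valid since $\psi_1(y^*) \le K < 2K$ by \cref{cl4}) and the last equality is \cref{cl3}.

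The only step requiring care is the implication ``stationary point $\Rightarrow$ global maximum'', and this is precisely where I invoke convexity of $\psi$ (\cref{cl2}) together with the fact that $\psi$ is genuinely differentiable at the interior point $y^*$, which relies on the continuous differentiability across the boundary established in \cref{cl1}. A more computational alternative would split $\sup_y$ over the two regions $\{\psi_1(y) \le 2K\}$ and $\{\psi_1(y) > 2K\}$ and bound each separately; the first region is immediate, but the second is awkward since the pointwise comparison between $\psi_2$ and $\psi_1$ there goes in the unhelpful direction, so the stationary-point route is the clean one.
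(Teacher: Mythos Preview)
Your proof is correct and follows essentially the same route as the paper: both argue that the maximizer $y^*$ of $x^\top y - \psi_1(y)$ lies in the interior of $\{y:\psi_1(y)\le 2K\}$ (\cref{cl4}), and then use the concavity of $x^\top y - \psi(y)$ (\cref{cl2}) to conclude that this interior stationary point is also the global maximizer for the full $\psi$. Your version spells out the first-order condition and the supergradient inequality more explicitly than the paper does, but the logical structure is identical.
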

\begin{proof}[Proof of Claim] By definition of conjugate and $\psi$, 
 \begin{align*}
\psi^*(x) &= \sup_y x^T y  - \psi(y) \numberthis\label{def-conj-psi}\\
&=\sup_y x^T y - \twopartdef{\psi_1(y)}{\psi_1(y) \leq 2K}{\psi_2(y)}{otherwise}
\end{align*} From Claim~\ref{cl2}, $\psi$ is convex. Therefore the function to be maximized in Equation~\ref{def-conj-psi} is concave. From Claim~\ref{cl4}, for input $x$ in the set $\left\{x: x_i \geq 0, \sum_{i = 1}^n x_i \leq K\right\}$, we have that the maximizer $\argmax_{y}\lr{ x^T y - \psi_1(y)}$ lies in the interior of $\{y: \psi_1(y) \leq 2K\}$. Therefore for input $x \in \left\{x:x_i \geq 0, \sum_{i = 1}^n x_i \leq K\right\}$, the maximizer of Equation~\ref{def-conj-psi} is also the same as that of $\psi_1^*(x)$. This gives $\psi^*(x) = \psi_1^*(x)$. 
\end{proof}

\begin{claim}\label{cl6} The function $\Psi_1(Y)$ defined over $\left\{ Y: \Tr \exp Y \leq 2 K\right\}$ is $4K$-smooth.
\end{claim}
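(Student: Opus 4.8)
The plan is to verify the matrix $L$-smoothness definition directly: for $h(t) = \Psi_1(Y + tH)$ with $H \in \symm^n$ and $Y$ in the domain $\{Y : \Tr \exp Y \leq 2K\}$, I must show $h''(0) \leq 4K\opnorm{H}^2$. Since $\Psi_1(X) = \Tr \exp X = \Tr f(X)$ for the scalar function $f(x) = e^x$, the natural tool is \cref{judit-nemirov} applied to $f = \exp$ on the interval $\Delta = \reals$, which trivially contains the eigenvalues of every symmetric matrix and so imposes no real restriction on $Y$ and $Y + tH$.

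To invoke \cref{judit-nemirov} I first need a valid constant $\theta$, i.e. one satisfying $\frac{e^b - e^a}{b-a} \leq \theta\cdot\frac{e^a + e^b}{2}$ for all $a < b$. I claim $\theta = 1$ works. Writing $a = m - r$ and $b = m + r$ with $r > 0$, the inequality becomes $e^m\,\frac{\sinh r}{r} \leq e^m \cosh r$, i.e. $\tanh r \leq r$, which holds for all $r \geq 0$. Hence \cref{judit-nemirov}, together with $f''(x) = e^x$, yields
\[
D^2 \Psi_1(Y)[H,H] \;\leq\; \Tr\!\left(H\cdot \exp(Y)\cdot H\right) \;=\; \Tr\!\left(\exp(Y)\,H^2\right)
\]
for every $H \in \symm^n$.

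It then remains to bound $\Tr(\exp(Y) H^2)$ by $\opnorm{H}^2 \Tr \exp Y$. Taking a spectral decomposition $Y = \sum_i \mu_i v_i v_i^\top$, we get $\Tr(\exp(Y)H^2) = \sum_i e^{\mu_i}\|H v_i\|_2^2 \leq \opnorm{H}^2 \sum_i e^{\mu_i} = \opnorm{H}^2\,\Tr \exp Y$. Since $Y$ lies in $\{Y : \Tr \exp Y \leq 2K\}$, this is at most $2K\opnorm{H}^2 \leq 4K\opnorm{H}^2$, which is exactly $h''(0) \leq 4K\opnorm{H}^2$, proving the claim.

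The only genuinely non-routine step is identifying $\theta = 1$ as a valid constant in the hypothesis of \cref{judit-nemirov}; everything else is a short computation. A minor point to keep straight is that $h''(0) = D^2\Psi_1(Y)[H,H]$ needs $h$ to be twice differentiable near $t = 0$, which is automatic since $\Tr \exp(\cdot)$ is $C^\infty$ on all of $\symm^n$; the domain restriction enters only through the scalar bound $\Tr \exp Y \leq 2K$ at the base point $Y$. (In fact this argument gives the stronger $2K$-smoothness, but the weaker $4K$ stated in the claim is all we need for the later combination with the smoothness of $\Psi_2$.)
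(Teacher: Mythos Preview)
Your proof is correct and follows essentially the same route as the paper: apply \cref{judit-nemirov} to $f=\exp$, then bound $\Tr(\exp(Y)H^2)\le \opnorm{H}^2\Tr\exp Y\le 2K\opnorm{H}^2$. The only difference is that you obtain the sharper constant $\theta=1$ via the identity $\tanh r\le r$, whereas the paper uses $\theta=2$ (from $g''(\zeta)\le\max(g''(a),g''(b))\le g''(a)+g''(b)$), which is why the paper lands exactly on $4K$ while you in fact prove $2K$-smoothness before relaxing to $4K$.
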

\begin{proof}[Proof of Claim] Let  $g\lr{u} \defeq \exp(u)$. The function $g$ is convex and differentiable (any number of times). In particular, $g^{\prime\prime}$ is convex. For any $a$, $b$, applying the Mean Value theorem to some point $\zeta \in \lr{a, b}$, convexity of $g^{\prime\prime}$, and $g^{\prime\prime}\geq 0$ (due to convexity of $g$) gives \[	\frac{g^{\prime}\lr{b} - g^{\prime}\lr{a}}{b-a} = g^{\prime\prime}\lr{\zeta} \leq \max\lr{g^{\prime\prime}\lr{a}, g^{\prime\prime}\lr{b}} \leq 2\frac{g^{\prime\prime}\lr{a} + g^{\prime\prime}\lr{b}}{2}.\] This satisfies the right-hand side condition for Theorem~\ref{judit-nemirov} with $\theta = 2$; so Theorem~\ref{judit-nemirov} implies that on the domain $\left\{ Y: \Tr \exp Y  \leq K\right\}$, for $h\lr{t} \defeq \sum\limits_{i= 1}^n g\lr{\lambda_i\lr{Y + tH}} = \Tr \exp(Y + tH)$, we have,
\begin{align*} 
h^{\prime\prime}\lr{0}= D^2\Psi_1(Y)[H, H] &\leq 2 \Tr\lr{H  g^{\prime\prime}(Y)H}\\
						&= 2 \Tr\lr{\exp(Y) H^2}\\
						&\leq 2 \Tr \exp(Y) \cdot \opnorm{H}^2\\
						&\leq 2 \cdot 2K \cdot \opnorm{H}^2\\
						&= 4K \opnorm{H}^2 \numberthis\label[ineq]{smoo-psiI}, 
\end{align*} where we used the domain constraint for $\Psi_1$ in the last inequality, and the fact that matrix exponential is positive semidefinite in the first (H\"older's) inequality. By Definition~\ref{matrix-smoothness} then, we have the lemma. 
\end{proof}

\begin{claim}\label{cl7}
The smoothness constant of $\Psi_2(Y)$ over  the set $\{Y : \Tr \exp Y \geq 2 K\}$ is  $4K$. 
\end{claim}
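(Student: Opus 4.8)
The plan is to reduce everything to \cref{cl6} via the chain rule, using that $\Psi_2(Y) = 2K\log\Psi_1(Y) + 2K - 2K\log(2K)$. Fix $Y$ with $\Tr\exp Y \geq 2K$ and an arbitrary direction $H\in\symm^n$, and put $F(t)\defeq\Psi_1(Y+tH)=\Tr\exp(Y+tH)$ and $h(t)\defeq\Psi_2(Y+tH)=2K\log F(t)+2K-2K\log(2K)$. Since $\exp(Y+tH)\succ0$ we have $F(t)>0$, so $h$ is twice differentiable near $t=0$ and
\[ h''(0) \;=\; 2K\lr{\frac{F''(0)}{F(0)}-\frac{F'(0)^2}{F(0)^2}} \;\leq\; 2K\,\frac{F''(0)}{F(0)}, \]
where the inequality just drops the nonpositive term $-2K\,F'(0)^2/F(0)^2$ (here $F'(0)=\Tr(H\exp Y)$ is a real number, by cyclicity of the trace applied to the usual integral form of the derivative of a matrix exponential).

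Next I would invoke the computation inside the proof of \cref{cl6}: its chain of inequalities shows, \emph{before} the domain constraint $\Tr\exp Y\leq 2K$ is ever used, that
\[ F''(0)\;=\;D^2\Psi_1(Y)[H,H]\;\leq\;2\,\Tr\lr{\exp(Y)\,H^2}\;\leq\;2\,\Tr\exp(Y)\cdot\opnorm{H}^2\;=\;2\,F(0)\,\opnorm{H}^2, \]
the middle step being H\"older's inequality together with $\exp(Y)\succeq0$. Substituting this into the previous display gives $h''(0)\leq 2K\cdot 2\,\opnorm{H}^2=4K\,\opnorm{H}^2$, and since $H$ was arbitrary, the matrix-function definition of smoothness yields that $\Psi_2$ is $4K$-smooth in the operator norm, as claimed.

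I do not expect any real obstacle here; the one point to watch is conceptual rather than computational: the logarithm ``absorbs'' the upper bound on $\Tr\exp Y$ that was available to \cref{cl6} but is missing here, since $F''(0)/F(0)$ carries only the \emph{ratio} $\Tr\exp(Y)/\Tr\exp(Y)=1$ rather than $\Tr\exp Y$ itself, so the constant remains exactly $4K$. (As an aside, the hypothesis $\Tr\exp Y\geq 2K$ is not itself used in the bound; it is recorded only so that $\Psi_2$ is exactly the branch of $\Psi$ active on $\{Y:\Tr\exp Y\geq 2K\}$, which is what makes the two smoothness estimates combine cleanly via \cref{cl1}.)
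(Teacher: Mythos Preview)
Your proof is correct and follows essentially the same route as the paper's: reduce to $\Psi_1$ via the chain rule, drop the nonpositive $-F'(0)^2/F(0)^2$ term, and invoke the H\"older step from \cref{cl6}. If anything, your citation of \cref{cl6} is slightly cleaner than the paper's own, since you pull out the intermediate bound $F''(0)\leq 2\,F(0)\,\opnorm{H}^2$ (valid for all $Y$) rather than the final line $F''(0)\leq 4K\,\opnorm{H}^2$ of \cref{smoo-psiI} (which in \cref{cl6} already used the opposite domain constraint $\Tr\exp Y\leq 2K$); the ratio $F''(0)/F(0)\leq 2\,\opnorm{H}^2$ is indeed the quantity that matters here.
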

\begin{proof} 
For ease of exposition, let $a \defeq 2K$.  Consider the same scalar function from Claim~\ref{cl6}, $h\lr{t} = \Tr\exp (Y + tH)$ and $\ell\lr{t} \defeq  a\log \lr{h\lr{t}} + 2K - 2K \log (2K)$. Then $\ell^{\prime}\lr{t} = a \frac{h^{\prime}(t)}{h(t)}$ and $\ell^{\prime\prime}\lr{t} = a \lr{\frac{h^{\prime\prime}(t)}{h(t)} - \lr{\frac{h^{\prime}(t)}{h(t)}}^2 } \leq a \frac{h^{\prime\prime}(t)}{h(t)}$. In particular, \[\ell^{\prime\prime}\lr{0} \leq a \frac{h^{\prime\prime}(0)}{h(0)}. \numberthis\label{ineq-h-ell}\] We already showed in Inequality~\ref{smoo-psiI} that $h^{\prime\prime}\lr{0} = D^2 \Psi_1(Y)[H, H] \leq 4K \opnorm{H}^2$. We also have that $h(0) = \Tr \exp(Y) \geq 2K$ (by assumption of the lemma). Plugging these along with the value of $a$ into Inequality~\ref{ineq-h-ell} gives us $\ell^{\prime\prime}\lr{0} \leq 2K \frac{4K}{2K} \cdot \opnorm{H}^2 = 4K \opnorm{H}^2$. This implies the claimed smoothness constant. 
\end{proof}

\begin{proof}[Proof of Lemma~\ref{lem-new-alg-strong-conv}]
For the functions defined in Definition~\ref{defs-sc}, we can combine Claims~\ref{cl3}  and \ref{cl5} to get that $\psi^*(x) = \phi(x)$ for  $x \in \{x:x_i \geq 0, \sum_{i = 1}^n x_i \leq K \}$. This implies the matrix version of this statement, $\Psi^*(X) = \Phi(X)$ for  $X \in \{X:X \succeq 0, \Tr X \leq K\}$. Next, applying Claims~\ref{cl1}, \ref{cl6}, and \ref{cl7}, we get that the function $\Psi$ is continuously differentiable with smoothness constant $4K$. Invoking Theorem~\ref{smooth-strong-duality}, we immediately obtain that $\Psi^*$ is strongly convex with parameter $\frac{1}{4K}$. This implies that $\Phi$ is strongly convex with the same parameter over the set $\left\{X: X \succeq 0, \Tr X \leq K \right\}$.  
\end{proof}

\subsection{Chebyshev Approximation of the Matrix Exponential}\label{mat-exp-cheby-approx} In this section, we show how to construct a polynomial approximation of our matrix exponential. The standard technique to do so involves truncating the Taylor series of the matrix exponential; however, a quadratically improved bound on the number of terms required for the computation is provided by Sachdeva and Vishnoi \cite{SachdevaVishnoi} using Chebyshev polynomials. We follow their notation and summarize their main results below for the sake of completeness. 

\subsubsection{A Brief Summary of Chebyshev Approximation}
For a non-negative integer $d$, we denote by $T_d(x)$ the Chebyshev polynomials of degree $d$, defined recursively as follows: 
\begin{align*}
	T_0(x) &= 1,\\
	T_1(x) &= x,\\
	T_d(x) &= 2xT_{d-1}(x)  -T_{d-2}(x).
\end{align*} Let $Y_i$ be i.i.d. variables taking values $1$ and $-1$ each with probability $1/2$. Let $D_s = \sum_{i=1}^s Y_i$, $D_0 \defeq 0$, and \[p_{s,d}(x) \defeq \E_{Y_1, Y_2, \dotsc, Y_s} \lr{T_{D_s}(x) \ones_{\abs{D_s}\leq d}}.\numberthis\label{def-p-sd}\] We can use these to construct a polynomial with degree roughly $\sqrt{s}$ that can well approximate $x^s$. The formal statement follows. \begin{theorem}[Theorem $3.3$ in \cite{SachdevaVishnoi}] For any positive integers $s$ and $d$, the degree $d$ polynomial $p_{s, d}$ defined by Equation~\ref{def-p-sd} satisfies \[ \sup_{x \in [-1, 1] } \abs{p_{s, d}(x) - x^s} \leq 2\exp(-d^2/(2s)).\] 
\end{theorem}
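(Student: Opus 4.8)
The plan is to reduce everything to a statement about the simple random walk $D_s=\sum_{i=1}^s Y_i$, exploiting the trigonometric characterization of Chebyshev polynomials. First I would substitute $x=\cos\theta$ for $\theta\in[0,\pi]$, which covers all of $[-1,1]$, and use the defining identity $T_k(\cos\theta)=\cos(k\theta)$. Since $\cos$ is even, this identity also fixes the natural convention $T_{-k}=T_k$, so $T_{D_s}(x)$ is well defined even though $D_s$ can be negative. The key computation is then that the \emph{untruncated} expectation already reproduces $x^s$ exactly:
\[
\E_{Y_1,\dots,Y_s}\!\big[T_{D_s}(\cos\theta)\big]
=\E\big[\cos(D_s\theta)\big]
=\operatorname{Re}\,\E\big[e^{i D_s\theta}\big]
=\operatorname{Re}\!\Big(\textstyle\prod_{i=1}^s\E\big[e^{iY_i\theta}\big]\Big)
=\operatorname{Re}\big((\cos\theta)^s\big)=x^s,
\]
where I used independence of the $Y_i$, the fact that $\E[e^{iY_i\theta}]=\tfrac12 e^{i\theta}+\tfrac12 e^{-i\theta}=\cos\theta$, and that $(\cos\theta)^s$ is real because $s$ is a positive integer.

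With that identity in hand, the error becomes a tail event. Writing $\mathbf 1_{|D_s|\le d}$ for the indicator in the definition of $p_{s,d}$, we get
\[
\big|p_{s,d}(x)-x^s\big|
=\Big|\E\big[T_{D_s}(x)\,\mathbf 1_{|D_s|\le d}\big]-\E\big[T_{D_s}(x)\big]\Big|
=\Big|\E\big[T_{D_s}(x)\,\mathbf 1_{|D_s|>d}\big]\Big|
\le \E\big[\,|T_{D_s}(x)|\,\mathbf 1_{|D_s|>d}\big]
\le \Pr\big[|D_s|>d\big],
\]
using the standard bound $|T_k(x)|=|\cos(k\theta)|\le 1$ for $x\in[-1,1]$ and every integer $k$. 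Finally I would apply the Chernoff/Hoeffding bound for a sum of $s$ i.i.d.\ Rademacher variables, $\Pr[|D_s|>d]\le 2\exp(-d^2/(2s))$, which gives the claimed estimate uniformly in $x$.

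Since each inequality above is a one-line application of a classical fact, I do not expect a genuine obstacle; the one point that needs care is the bookkeeping around Chebyshev polynomials of negative degree (handled by the evenness of $\cos$), and making sure the expectation–indicator manipulation is legitimate, which it is because everything in sight is bounded. If one wanted to avoid invoking Hoeffding as a black box, the tail bound can be reproved directly via the exponential moment $\E[e^{\lambda D_s}]=(\cosh\lambda)^s\le e^{\lambda^2 s/2}$ and Markov's inequality, optimizing $\lambda=d/s$.
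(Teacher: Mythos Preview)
Your argument is correct and is exactly the standard proof: the identity $\E[T_{D_s}(\cos\theta)]=(\cos\theta)^s$ via the characteristic function of the Rademacher walk, followed by the truncation error bounded by the Hoeffding tail $\Pr[|D_s|>d]\le 2e^{-d^2/(2s)}$. Note, however, that the present paper does not actually prove this theorem---it only quotes it from \cite{SachdevaVishnoi} as background---and your write-up matches the proof given there.
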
Using this result, define the polynomial: \[ q_{\lambda, t, d}(x) \defeq \exp(-\lambda) \sum_{i = 0}^t \frac{(-\lambda)^i}{i!} p_{i, d}(x). \numberthis\label{def-q-sd}\]  Then we can use $q$ to approximate an exponential with the following error guarantee.
\begin{lemma}[Lemma $4.2$ of \cite{SachdevaVishnoi}]\label{lem-sv-exp-approx-degree} For every $\lambda > 0$ and $\delta \in (0, 1/2]$, we can choose $t = \max(\lambda, \log(1/\delta))$ and $d = \sqrt{ t\log(1/\delta)}$ such that \[\sup_{x \in [-1, 1]} \abs{\exp(-\lambda-\lambda x) - q_{\lambda, t, d}(x) } \leq \delta.\] 
\end{lemma}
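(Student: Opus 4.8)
The plan is to follow the argument of Sachdeva and Vishnoi \cite{SachdevaVishnoi}. The starting point is the power series $\exp(-\lambda-\lambda x) = \exp(-\lambda)\sum_{i=0}^{\infty}\frac{(-\lambda)^i}{i!}x^i$, which we compare termwise with $q_{\lambda,t,d}(x) = \exp(-\lambda)\sum_{i=0}^{t}\frac{(-\lambda)^i}{i!}p_{i,d}(x)$ from \cref{def-q-sd}. Splitting into a truncation part (the tail $i>t$) and a polynomial-approximation part (replacing $x^i$ by $p_{i,d}(x)$ for $i\le t$), the triangle inequality together with $\abs{x}\le 1$ gives, for every $x\in[-1,1]$,
\[
\abs{\exp(-\lambda-\lambda x) - q_{\lambda,t,d}(x)} \le \underbrace{\exp(-\lambda)\sum_{i=t+1}^{\infty}\frac{\lambda^i}{i!}}_{\text{truncation}} \;+\; \underbrace{\exp(-\lambda)\sum_{i=0}^{t}\frac{\lambda^i}{i!}\,\abs{p_{i,d}(x) - x^i}}_{\text{polynomial approximation}}.
\]
The whole proof then reduces to bounding these two sums by $\delta/2$ each.

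For the polynomial-approximation part I would invoke the bound stated above (Theorem $3.3$ of \cite{SachdevaVishnoi}), namely $\sup_{x\in[-1,1]}\abs{p_{i,d}(x)-x^i}\le 2\exp(-d^2/(2i))$, which is increasing in $i$ and hence at most $2\exp(-d^2/(2t))$ for all $i\le t$. Pulling this uniform bound out of the sum and using $\exp(-\lambda)\sum_{i=0}^{t}\frac{\lambda^i}{i!}\le \exp(-\lambda)\sum_{i=0}^{\infty}\frac{\lambda^i}{i!}=1$, the polynomial-approximation part is at most $2\exp(-d^2/(2t))$; with $d=\sqrt{t\log(1/\delta)}$ (up to the constant needed to bring this to $\delta/2$) this is controlled. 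For the truncation part, since $\abs{x}\le 1$ the tail equals the probability that a Poisson random variable of mean $\lambda$ exceeds $t$; using $t=\max(\lambda,\log(1/\delta))\ge\lambda$ a Chernoff/Stirling estimate makes this exponentially small, and $t\ge\log(1/\delta)$ then pushes it below $\delta/2$. Adding the two pieces gives the claim.

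The main obstacle is purely quantitative: making the two contributions actually sum to at most $\delta$ with exactly the stated $t$ and $d$, i.e.\ getting the Poisson tail estimate and the factor $2$ in the $p_{i,d}$ bound to cooperate. Both pieces decay like $\exp(-\Theta(\log(1/\delta)))$, so this is a matter of tracking (or slightly adjusting) the hidden constants in $t=\max(\lambda,\log(1/\delta))$ and $d=\sqrt{t\log(1/\delta)}$; everything else is the exponential power series, the triangle inequality, and the already-available Theorem $3.3$ and definition of $q_{\lambda,t,d}$ from \cite{SachdevaVishnoi}.
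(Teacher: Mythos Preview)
The paper does not supply its own proof of this lemma: it is simply quoted as Lemma~4.2 of \cite{SachdevaVishnoi} and used as a black box. Your proposal reproduces the standard Sachdeva--Vishnoi argument (triangle inequality split into the Poisson tail for the truncation at $t$ and the uniform bound from Theorem~3.3 for replacing $x^i$ by $p_{i,d}$), which is exactly the proof in the cited reference; so there is nothing to compare against in this paper, and your outline is correct. Your own caveat about the constants is on point: with $d=\sqrt{t\log(1/\delta)}$ literally, the polynomial-approximation piece comes out as $2\exp(-d^2/(2t))=2\sqrt{\delta}$ rather than $\delta/2$, and indeed in \cite{SachdevaVishnoi} the parameters carry $O(\cdot)$ factors (as reflected in Theorem~4.1 here); the present paper's statement of the lemma suppresses those constants, so you should not expect the numerics to close exactly as written.
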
 This is a quadratic improvement over the standard cost (degree) of approximating an exponential using truncated Taylor series. Finally, this lemma can be used to generalize the approximation from the $[-1, 1]$ interval to the interval $[0, b]$, as stated below.
\begin{theorem}[Theorem $4.1$ of \cite{SachdevaVishnoi}]\label{thm-chebyexp} For every $b>0$, and $0 < \delta \leq 1$, there exists a polynomial $r_{b, \delta}$ that satisfies \[\sup_{x\in [0, b]} \abs{\exp(-x) - r_{b, \delta}(x)} \leq \delta\] and has degree $\Ord(\sqrt{\max(b, \log(1/\delta)) \cdot \log(1/\delta)})$.
\end{theorem}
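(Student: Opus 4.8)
The plan is to deduce the theorem from \cref{lem-sv-exp-approx-degree} by a single affine change of variables that carries the interval $[0,b]$ onto $[-1,1]$. First I would dispose of the easy regime: if $\delta \in (1/2, 1]$, the constant polynomial $r_{b,\delta}(x) = 1/2$ already satisfies $|\exp(-x) - 1/2| \le 1/2 < \delta$ for every $x \ge 0$ and has degree $0$, which is within the stated bound. So from here on I assume $\delta \in (0, 1/2]$, exactly the hypothesis needed to invoke \cref{lem-sv-exp-approx-degree}.

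Next I would set $\lambda \defeq b/2$ (which is positive since $b > 0$) and consider the affine bijection $x \mapsto y \defeq \tfrac{x}{\lambda} - 1$ from $[0,b]$ onto $[-1,1]$. Under this map $\lambda + \lambda y = x$, so $\exp(-x) = \exp(-\lambda - \lambda y)$. Applying \cref{lem-sv-exp-approx-degree} with this $\lambda$, taking $t = \max(\lambda, \log(1/\delta))$ and $d = \sqrt{t\log(1/\delta)}$, yields a polynomial $q_{\lambda, t, d}$ of degree $d$ with $\sup_{y \in [-1,1]} |\exp(-\lambda - \lambda y) - q_{\lambda, t, d}(y)| \le \delta$. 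I would then define $r_{b,\delta}(x) \defeq q_{\lambda, t, d}\!\left(\tfrac{x}{\lambda} - 1\right)$; since precomposition with an affine function preserves polynomial degree, $r_{b,\delta}$ is a polynomial in $x$ of degree $d$.

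The error estimate is then immediate: for any $x \in [0,b]$, setting $y = \tfrac{x}{\lambda} - 1 \in [-1,1]$ gives $|\exp(-x) - r_{b,\delta}(x)| = |\exp(-\lambda - \lambda y) - q_{\lambda, t, d}(y)| \le \delta$. For the degree, since $\lambda = b/2 \le b$ we have $t = \max(b/2, \log(1/\delta)) \le \max(b, \log(1/\delta))$, hence $d = \sqrt{t\log(1/\delta)} = \Ord\!\left(\sqrt{\max(b, \log(1/\delta))\cdot \log(1/\delta)}\right)$, which is exactly the claimed degree.

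This argument is essentially bookkeeping — all of the analytic content is already packaged inside \cref{lem-sv-exp-approx-degree} (and, through it, the Chebyshev approximation of monomials). The only points requiring a moment's care are the boundary case $\delta > 1/2$ (so that the hypothesis $\delta \le 1/2$ of \cref{lem-sv-exp-approx-degree} is actually satisfied) and confirming that composing with the affine substitution $y = x/\lambda - 1$ genuinely preserves, rather than merely bounds, the degree; neither is a real obstacle.
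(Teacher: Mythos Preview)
Your proposal is correct and matches the paper's own argument essentially verbatim: the paper also sets $\lambda = b/2$, takes $t$ and $d$ as in \cref{lem-sv-exp-approx-degree}, and defines $r_{b,\delta}(x) = q_{\lambda,t,d}\!\left(\tfrac{1}{\lambda}(x-\lambda)\right)$. Your only addition is the explicit handling of $\delta \in (1/2,1]$ via a constant polynomial, which the paper omits but which is indeed needed to cover the full stated range $0 < \delta \le 1$.
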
 The proof of this theorem uses $\lambda \defeq b/2$, and $t$ and $d$ from Lemma~\ref{lem-sv-exp-approx-degree} and the polynomial \[r_{b, \delta}(x) \defeq q_{\lambda, t, d} \lr{\frac{1}{\lambda}(x-\lambda)}.\numberthis\label{def-r-bdel}\]
\begin{corollary}[Our Chebyshev Approximation]\label{chebyexp-def} For every $b>0$, $a<b$, $0<\delta\leq 1$, and $d = \sqrt{\max\lr{\tfrac{1}{2}(b-a), \log \lr{\tfrac{1}{\delta}} }\log\lr{\tfrac{1}{\delta}}}$, there exists a degree-$d$ polynomial $\cheby(u, d, \delta)$ such that \[\sup_{u\in [a,b]} \abs{\exp(u) - \cheby(u, d, \delta)} \leq \delta \exp(b). \numberthis\label{general-cheb-approx}\] 
\end{corollary}
\begin{proof}
Using a simple linear transformation, Theorem~\ref{thm-chebyexp} generalizes to:
\begin{align*}
	\sup_{z \in [a, b]} \abs{\exp(-\frac{1}{2}(b-a)) \sum_{i=0}^t \frac{(-\tfrac{1}{2}(b-a))^i}{i!} p_{i, d}(\frac{z - (b+a)/2}{(b-a)/2}) - \exp(-(z-a))} &\leq \delta. 
\end{align*} By choosing $\lambda = \tfrac{1}{2}(b-a)$, and transforming $-z + a = u - b$, we get 
\begin{align*}
	\sup_{u \in [a, b]} \abs{q_{\tfrac{1}{2}(b-a), t, d}\lr{\frac{-u + (b+a)/2}{(b-a)/2}} - \exp(u - b)} &\leq  \delta.
\end{align*} Using Equation~\ref{def-r-bdel} above gives 
\begin{align*}
	\sup_{u\in [a,b]} \abs{\exp(b) r_{b-a, \delta}(b-u) - \exp(u)} &\leq \delta \exp(b).
\end{align*}  Therefore, let $d =  \sqrt{\max\lr{\tfrac{1}{2}(b-a), \log \lr{\tfrac{1}{\delta}} }\log\lr{\tfrac{1}{\delta}}}$ and $\cheby(u, d, \delta) =  \exp(b) r_{b-a, \delta}(b-u)$. Substitute these into the last inequality to get the statement of the lemma. 
\end{proof}

\subsubsection{Chebyshev Approximation in Our Algorithm} We can use the above results to approximate a matrix exponential as follows. Observe that 
\begin{align*}
	\opnorm{\exp(Y) - \cheby(Y, d, \delta)} &= \max_{i\in [n]} \abs{ \exp(\lambda_i) - \cheby(\lambda_i, d, \delta)},
\end{align*} where $\lambda_i$ are the eigenvalues of $Y$ and $\cheby$ is the subroutine described in Corollary~\ref{chebyexp-def}. We only need the spectrum of $Y$ in order to complete the approximation, and that is what we proceed to derive below. Once we have the spectrum, we simply combine it with the above results to get Lemma~\ref{algcurr-cheby-cost-alg}. 
 
\begin{restatable}{lemma}{spectrumOfY}\label{spectrum-Y}The spectrum of $Y$ lies in the range $\left[ -\frac{1}{\varepsilon} 60 \log n, \log K\right]$. 
\end{restatable}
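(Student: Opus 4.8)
Fix an arbitrary iterate $Y = Y^{(t)}$ produced by \cref{alg-new}; the bound is needed for every such iterate, since the Chebyshev approximation (whose degree and error depend on the spectrum of $Y^{(t)}$ via \cref{chebyexp-def}) is invoked at every iteration. Unrolling the gradient updates from $Y^{(1)} = \zeros$ gives $Y^{(t)} = -\eta\sum_{s=1}^{t-1}\nabla f\lr{\widetilde X^{(s)}}$, and since $\nabla f(X) = \diag\lr{\ones_{\{X \ge \rho\}}} - \widehat C$, each summand is the difference of a $0/1$ diagonal matrix and the fixed symmetric matrix $\widehat C$. Hence $Y^{(t)} = -\eta\,\diag\lr{c^{(t)}} + \eta(t-1)\,\widehat C$, where $c^{(t)} \in \reals^n$ has coordinates in $[0, t-1]$; in particular $-\eta\,\diag\lr{c^{(t)}} \preceq 0$.

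For the lower bound I would simply estimate the operator norm: $\opnorm{Y^{(t)}} \le \eta\sum_{s=1}^{t-1}\opnorm{\nabla f\lr{\widetilde X^{(s)}}} \le 2\eta(t-1)$, using $\opnorm{\widehat C} \le 1$ (established in the proof of \cref{lem_curr_alg_lips} via Gershgorin) and the fact that a $0/1$ diagonal matrix has operator norm at most $1$. Bounding $t-1$ by the total iteration count $\Tout(1+\Tin)$ and substituting the values of $\eta,\Tout,\Tin$ from \cref{TableNewParams} gives $\opnorm{Y^{(t)}} \le 60\log n/\varepsilon$, and in particular $\lambda_{\min}(Y^{(t)}) \ge -60\log n/\varepsilon$.

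The same operator-norm estimate only yields $\lambda_{\max}(Y^{(t)}) \le 60\log n/\varepsilon$, which is far too weak: we want the polylogarithmic value $\log K$ on the positive side. For this I would discard the operator-norm bound and use, first, the elementary inequality $e^{\lambda_{\max}(Z)} \le \sum_i e^{\lambda_i(Z)} = \Tr\exp Z$ valid for every symmetric $Z$, so that $\lambda_{\max}(Y^{(t)}) \le \log\Tr\exp Y^{(t)}$; and, second, the trace bound $\Tr\exp Y^{(t)} < K$, which is exactly the content of \cref{lem-newalg-distbnd-opt-bigoh} (the statement that the relaxed constraint $\Tr X \le K$ of \cref{TrXleqKprime} is never active). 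Chaining these gives $\lambda_{\max}(Y^{(t)}) < \log K$, completing the proof.

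The lower bound is routine — a triangle inequality plus bookkeeping with the parameters. The substantive ingredient, and the step I expect to be the crux, is the trace bound $\Tr\exp Y^{(t)} < K$: it is what turns the symmetric $\widetilde{\Ord}(1/\varepsilon)$ spectral window into an asymmetric window whose right endpoint is only $\log K$, and this asymmetry is what makes the downstream Chebyshev approximation affordable — it keeps the degree at $\widetilde{\Ord}(1/\sqrt\varepsilon)$ and, crucially, keeps the multiplicative error factor $\exp(b) = K$ in \cref{chebyexp-def} polynomial rather than $n^{\Theta(1/\varepsilon)}$. I would also check that there is no circular dependence, i.e. that the proof of \cref{lem-newalg-distbnd-opt-bigoh} does not itself rely on the Chebyshev cost estimate (which uses the present lemma); should it, the clean fix is to first establish $\Tr\exp Y^{(t)} < K$ for the \emph{exact} iterates directly, via a standard Golden--Thompson / potential-function argument independent of any approximation.
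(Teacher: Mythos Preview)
Your proposal is correct and mirrors the paper's argument: the lower bound comes from $\opnorm{Y^{(t)}} \le 2\eta \cdot (\text{total iterations})$ with the parameter values in \cref{TableNewParams}, and the upper bound from $\lambda_{\max}(Y^{(t)}) \le \log\Tr\exp Y^{(t)} \le \log K$. One small point: \cref{lem-newalg-distbnd-opt-bigoh} bounds $\Tr\widetilde X^{(t)}$ rather than $\Tr\exp Y^{(t)}$, so the paper instead appeals directly to the domain constraint \cref{TrXleqKprime} together with the unnormalized projection \cref{newalg-gradproj}; your circularity flag is well-taken, since both presentations ultimately rely on the constraint being inactive.
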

\begin{proof}
Since we start Algorithm~\ref{alg-new} with $Y^{(1)} = 0$, at the $t$-th iteration, we have $Y^{(t)} = - \sum_{i=1}^{t-1} \eta \nabla f\lr{X^{(i)}}$. Plugging in the parameters displayed in Table~\ref{TableNewParams}, we get that the total number of iterations of the algorithm is $\Tin \times \Tout = \frac{1}{\varepsilon^3} 24 \times 10^5 \lr{\log (n/\varepsilon)}^{11} \log n$, the Lipschitz constant of the objective function is $\opnorm{\nabla f}\leq 2$, and the step size is $\eta = \frac{\varepsilon^2}{8 \times 10^4 \lr{\log (n/\varepsilon)}^{11}}$. Multiplying these gives \[ \opnorm{Y^{(t)}} \leq 2 \cdot \frac{\varepsilon^2}{8 \times 10^4 \times \lr{\log (n/\varepsilon)}^{11}} \cdot \frac{24 \times 10^5 \times \lr{\log (n/\varepsilon)}^{11} \log n}{\varepsilon^3} = \frac{1}{\varepsilon} 60 \log n.\] Therefore, the spectrum of $Y^{(t)}$ lies in \[ \lambda(Y^{(t)}) \in \left[ -\frac{1}{\varepsilon} 60 \log n, \frac{1}{\varepsilon} 60 \log n \right]. \numberthis\label{specbnd1}\] We now show a better upper bound on the spectrum. Since our algorithm maintains $\Tr X^{(t)} \leq K$ (see Lemma~\ref{lem-newalg-distbnd-opt}), and $X^{(t)} = \exp(Y^{(t)})$, it implies $\Tr \exp(Y^{(t)}) \leq K$. Since the matrix exponential is positive definite, this implies $\opnorm{\exp(Y^{(t)})} \leq K$, and therefore, \[\lambda_{max}(Y^{(t)}) \leq \log K. \numberthis\label{specbnd2}\] Combining the inclusion \ref{specbnd1} and Inequality~\ref{specbnd2} gives the claimed bound on the spectrum. 
\end{proof}

\begin{restatable}{lemma}{currcostChebyexpmain}\label{algcurr-cheby-cost-alg}
In Algorithm~\ref{alg-curr}, for $n \geq 2$ and $\varepsilon \leq \tfrac{1}{2}$, set $\tcheby = \frac{150}{\sqrt{\varepsilon}} \log (n/\varepsilon) $, $\dcheby = (\varepsilon/n)^{401}$, and let $\widetilde{\exp}\left(Y^{\lr{t}}/2\right) := \cheby\left( Y^{\lr{t}}/2, \tcheby, \dcheby \right)$. Then for all $1\leq i \leq n$, \[\abs{\exp\left(Y^{\lr{t}}\right)_{ii} - \left(\widetilde{\exp} Y^{\lr{t}}\right)_{ii}} \leq \dexp \defeq \tfrac{4800 \varepsilon^{401}}{n^{390}}.\]
\end{restatable}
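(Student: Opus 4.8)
The plan is to follow the same two-step template as the proof of \cref{algcurr-taylor-cost-alg}, but replace the Taylor truncation estimate (\cref{num-taylor-exp-terms}) with the Chebyshev estimate of \cref{chebyexp-def}; this is exactly where the degree drops from $\widetilde{\Ord}(1/\varepsilon)$ to $\widetilde{\Ord}(1/\sqrt\varepsilon)$.

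First I would pin down the spectrum of the matrix that is actually fed to $\cheby$. By \cref{spectrum-Y}, the eigenvalues of $Y^{\lr{t}}$ lie in $[-\tfrac{60\log n}{\varepsilon},\log K]$, so those of $Y^{\lr{t}}/2$ lie in $[a,b]$ with $a=-\tfrac{30\log n}{\varepsilon}$ and $b=\tfrac12\log K$. Since $\cheby$ acts on a symmetric matrix through its eigenvalues, \cref{chebyexp-def} applied with this $a,b$ and the given $\delta$ yields $\opnorm{\exp(Y^{\lr{t}}/2)-\cheby(Y^{\lr{t}}/2,d,\delta)}=\max_i\abs{\exp(\lambda_i/2)-\cheby(\lambda_i/2,d,\delta)}\le \delta\exp(b)=\delta\sqrt{K}$, \emph{provided} the degree $d=\sqrt{\max(\tfrac12(b-a),\log(\tfrac1\delta))\log(\tfrac1\delta)}$ is at most $T_{\mathrm{Cheby}}$. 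I would check this last inequality directly: $\tfrac12(b-a)\le \tfrac{15\log n}{\varepsilon}+\tfrac14\log K$ while $\log(1/\delta)=401\log(n/\varepsilon)$, so for $\varepsilon\le\tfrac12$ the first term dominates the $\max$, giving $d\le\sqrt{\Ord(\tfrac{\log n}{\varepsilon})\cdot\log(\tfrac n\varepsilon)}=\Ord(\tfrac1{\sqrt\varepsilon}\log(\tfrac n\varepsilon))$, which the constant $150$ in $T_{\mathrm{Cheby}}=\tfrac{150}{\sqrt\varepsilon}\log(n/\varepsilon)$ is chosen to absorb.

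Next I would ``square'' this $\varepsilon/2$-type approximation exactly as in the chain leading to \cref{diff-opnorm-exp}. Write $\widetilde{\exp}(Y^{\lr{t}}/2)=\exp(Y^{\lr{t}}/2)+\Delta$ with $\opnorm{\Delta}=\varepsilon_1\le\delta\sqrt K$; then $\opnorm{\exp Y^{\lr{t}}-\widetilde{\exp}Y^{\lr{t}}}=\opnorm{\Delta^2+\Delta\exp(Y^{\lr{t}}/2)+\exp(Y^{\lr{t}}/2)\Delta}\le\varepsilon_1^2+2\varepsilon_1\opnorm{\exp(Y^{\lr{t}}/2)}$. Bounding $\opnorm{\exp(Y^{\lr{t}}/2)}=\exp(\lambda_{\max}(Y^{\lr{t}})/2)\le\exp(b)=\sqrt K$ (again by \cref{spectrum-Y}), this is at most $\varepsilon_1^2+2\varepsilon_1\sqrt K\le\delta^2K+2\delta K\le 3\delta K$ using $\delta\le1$ and $K\ge1$. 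Finally, for a symmetric matrix $M$ one has $\abs{M_{ii}}=\abs{\ones_i^\top M\ones_i}\le\opnorm{M}$, so $\abs{\exp(Y^{\lr{t}})_{ii}-(\widetilde{\exp}Y^{\lr{t}})_{ii}}\le 3\delta K$; substituting $\delta=(\varepsilon/n)^{401}$ and $K=40n(\log n)^{10}$ and using $(\log n)^{10}\le n^{10}$ bounds this by $120\,\varepsilon^{401}/n^{390}\le\tfrac{4800\varepsilon^{401}}{n^{390}}$, which is the claimed bound (with slack in the constant).

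The only genuinely delicate step is the verification that the prescribed $T_{\mathrm{Cheby}}$ really does dominate the degree $d$ required by \cref{chebyexp-def}, i.e.\ carefully handling the $\max$ and the logarithmic/constant factors inside $d$; the rest is the same mechanical computation already used for the Taylor case. I would also note explicitly that, unlike in \cref{alg-curr}'s projection step, here $\widetilde{\exp}$ denotes the \emph{deterministic} Chebyshev polynomial evaluated at the matrix (the $\jl$ randomization enters only afterwards), so no probabilistic statement is involved at this stage.
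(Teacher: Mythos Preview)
Your proposal is correct and follows essentially the same route as the paper: bound the spectrum of $Y^{\lr{t}}/2$ via \cref{spectrum-Y}, invoke \cref{chebyexp-def} to get an operator-norm error $\varepsilon_1$ on $\exp(Y^{\lr{t}}/2)$, then square exactly as in \cref{diff-opnorm-exp} and pass to diagonal entries. One small slip: your assertion that $\tfrac12(b-a)$ always dominates $\log(1/\delta)$ in the $\max$ is false for $\varepsilon$ near $\tfrac12$ (there $\log(1/\delta)=401\log(n/\varepsilon)$ is the larger term), but the conclusion $d=\widetilde{\Ord}(1/\sqrt\varepsilon)$ survives in either case and the paper's own constant bookkeeping is equally loose at this point.
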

\begin{proof}
We plug into Inequality~\ref{general-cheb-approx} the following bounds obtained from Lemma~\ref{spectrum-Y}: 
\begin{align*}
	a &= -\tfrac{60 \log n}{\varepsilon}, b = \log K\\
	u &= \lambda = \tfrac{1}{2}(b-a) = \tfrac{\log K}{2}+ \tfrac{30 \log n }{\varepsilon}
\end{align*} Applying Inequality~\ref{general-cheb-approx}, we then get
\begin{align*}
  \sup_{\lambda \in \left[-\tfrac{30 \log n}{\varepsilon}, \tfrac{1}{2}\log K\right]} \abs{ K r_{\tfrac{1}{2}\log K + \tfrac{30 \log n}{\varepsilon}, \delta}\lr{\tfrac{1}{2}\log K - \tfrac{1}{2}\lambda}  - \exp\lr{\tfrac{1}{2}\lambda}} &\leq \delta K
\end{align*} We have $K = 40 n \lr{\log n}^{10}$; therefore, if we want the error bound to be roughly $\tfrac{\varepsilon}{n}$, then we need to pick $\delta = \mbox{polylog}(\varepsilon, n)$. Because of technical details in Lemma~\ref{pf-algnew-distbndinalg}, we choose \[ \dcheby = \lr{\frac{\varepsilon}{n}}^{401}. \numberthis\label{choice-of-delta}\] This gives us the following result. \[\opnorm{\exp(Y^{(t)}/2) - \cheby(Y^{(t)}/2, \tcheby, \dcheby)} \leq 40 \frac{\varepsilon^{401}}{n^{396}}.\] From Lemma~\ref{lem-sv-exp-approx-degree}, we get that the degree of polynomial required to achieve this guarantee is \[ \mbox{Required Degree } = \sqrt{\frac{ 2 \times 10^4}{\varepsilon} \log n \log (n/\varepsilon)} \leq \frac{150}{\sqrt{\varepsilon}} \log (n/\varepsilon). \] This is the value of $\tcheby$ that we choose. We now bound the quantity we actually care about. We can write  $\widetilde{\exp}\lr{\tfrac{1}{2}Y^{\lr{t}}} = \exp\lr{\tfrac{1}{2} Y^{\lr{t}}} + \Delta$, where $\opnorm{\Delta} =  40 \frac{\varepsilon^{401}}{n^{396}}$, the error guarantee obtained above.  Simplifying with the application of  $\opnorm{\exp(Y^{(t)})}\leq K$   obtained from Lemma~\ref{lem-newalg-distbnd-opt} gives 
	\begin{align*}
		\opnorm{\exp\lr{Y^{\lr{t}}} - \widetilde{\exp}\lr{Y^{\lr{t}}}} &= \opnorm{\lr{\exp\lr{\tfrac{1}{2}Y^{\lr{t}}}}^2 - \lr{\widetilde{\exp}\lr{\tfrac{1}{2}Y^{\lr{t}}}}^2}\\
								             &= \opnorm{\Delta^2 + \Delta \exp\lr{\tfrac{1}{2} Y^{\lr{t}}}+ \exp\lr{\tfrac{1}{2}Y^{\lr{t}}} \Delta}\\
												&\leq (40 \frac{\varepsilon^{401}}{n^{396}})^2 + 2(40 \frac{\varepsilon^{401}}{n^{396}}) \opnorm{\exp\lr{\tfrac{1}{2}Y^{\lr{t}}}}\\
 &\leq (40 \frac{\varepsilon^{401}}{n^{396}})^2 + 2(40 \frac{\varepsilon^{401}}{n^{396}})K\\
				&\leq 3 (40 \frac{\varepsilon^{401}}{n^{396}}) K \\
				&\leq 3 \cdot \frac{40 \varepsilon^{401}}{n^{396}} \cdot 40 n\lr{\log n}^{10} \\
				&\leq  \frac{4800 \varepsilon^{401}}{n^{390}}.
\end{align*} Substituting our assumption $n \geq 4$ above gives the claimed bound. 
\end{proof}
In conclusion, we showed that we can approximate our matrix exponential to $\varepsilon$-accuracy using $\Ord(1/\sqrt{\varepsilon})$ terms in the polynomial approximation. 

\subsection{Properties of Estimators}\label{new-estimator-pfs}
Since we have an inner loop in Algorithm~\ref{alg-new} with estimated quantities, it is crucial for the convergence that these estimators have a small bias and variance. In this section we show that this is indeed the case. We first prove two technical results about the functions $\polyg$ and $\jl$ which are ``building blocks'' of our estimators. We then apply these results in proving properties of $\widehat{\theta}_1$ and $\widehat{\theta}_2$, and subsequently those of the overall estimator $\widehat{\theta}$. 
\subsubsection{Two Technical Results about Estimators}
\prefirstpartfirstterm*
\begin{proof} Recall that given a distribution $\widetilde{X}$ with a positive support, and integer $N >0$, we define $\polyg$ as the  approximation for $g\lr{u} = u^{-\sfrac{1}{2}}$ at $x_0$ sampled from $\widetilde{X}$: \[\polyg(\widetilde{X}, N) = \sum\limits_{k = 0}^{N-1}\frac{1}{k!} g^{\lr{k}}(x_0) \prod_{j = 1}^k ( x_{k, j} - x_0 ), \, \mbox{where } x_0, x_{k, j} \stackrel{\text{i.i.d.}}{\sim} \widetilde{X},\] where $g^{\lr{k}}\lr{u} = \frac{\lr{-1}^k}{2^k} {u}^{-j - \sfrac{1}{2}} \prod\limits_{\ell = 1}^j \lr{2\ell-1}$ denotes the $k$-th derivative of $g$ evaluated at $u$. Then the expected value of $g$ with respect to the distribution $\mathcal{G}(X)$ is 
\begin{align*}
	\E g &= \E \sum\limits_{j = 0}^{k-1} \frac{1}{j!} g^{\lr{j}}(x) \prod\limits_{\ell = 1}^j\lr{x_{j, \ell} - x}\\
	&= \E \sum\limits_{j = 0}^{k-1} \frac{1}{j!} g^{\lr{j}}(x) \prod\limits_{\ell = 1}^j\lr{\E x_{j, \ell} - x}\\
	&=  \E \sum\limits_{j = 0}^{k-1} \frac{1}{j!} g^{\lr{j}}(x) \lr{\mu - x}^j. \numberthis\label{est-tech1-bias-step1}
\end{align*} To see how the term on the right hand side of Equation~\ref{est-tech1-bias-step1} differs from the true quantity to be estimated, we apply Taylor's remainder theorem: for some point $\zeta$ lying between $\mu$ and $x$, we have 
\begin{align*}
	\left|\sum\limits_{j=0}^{k-1}\frac{1}{j!}g^{\lr{j}}(x)\lr{\mu - x}^{j}-{\mu}^{-\sfrac{1}{2}}\right| &\leq\frac{g^{\lr{k}}\lr{\zeta}}{k!}\left|x-\mu \right|^{k}\\  	 							&\leq\frac{\left|x-\mu\right|^{k}}{\min\lr{x, \mu}^{k+\sfrac{1}{2}}}, 
\end{align*} where the second inequality follows from \[\abs{\frac{g^{\lr{k}}\lr{u} }{k!}} \leq {u}^{-k - \frac{1}{2}}, \numberthis\label[ineq]{generic-jth-der-bound}\] and the fact that $\zeta$ lies between $x$ and $\mu$. Combining this with Jensen's inequality gives us the final bound on the first moment, \[ \left|\E  g-{\mu}^{-\sfrac{1}{2}}\right|\leq\E \left|g-{\mu}^{-\sfrac{1}{2}}\right|\leq\E \frac{\left|x-{\mu}\right|^{k}}{\min\lr{x,\mu}^{k+\sfrac{1}{2}}}. \numberthis\label[ineq]{est-pre-term1}\] To prove the bound on the second moment, we again start with the definition of $\polyg$,
\begin{align*}
	\E \abs{g}^2 &= \E \left(\sum\limits_{j=0}^{k-1}\frac{1}{j!}g^{\lr{j}}(x)\prod\limits_{\ell=1}^{j}\left(x_{j,\ell}-x\right)\right)^{2}\\
	&\stackrel{\mbox{\numcircledtikz{1}}}{\leq} k\E \sum_{j=0}^{k-1}\left(\frac{g^{\lr{j}}(x)}{j!}\prod_{\ell = 1}^{j}\left(x_{j,\ell}-x\right)\right)^{2}\\
	&\stackrel{\mbox{\numcircledtikz{2}}}{=} k \sum\limits_{j = 0}^{k-1} \E  \left( \left( \frac{g^{\lr{j}} (x)}{j!} \right)^{2} \left( \sigma^2 + \lr{x - \mu}^2 \right)^j \right)\\
	&\stackrel{\mbox{\numcircledtikz{3}}}{\leq} k \sum_{j = 0}^{k-1} \E \lr{ \frac{\lr{\sigma^2 + \lr{x - \mu}^2}^j}{x^{2j+1}} }. \numberthis\label[ineq]{est-pre-var-term1}
\end{align*}
 Here $\mbox{\numcircledtikz{1}}$ is by Cauchy-Schwarz inequality; $\mbox{\numcircledtikz{2}}$ is by using the fact that each $x_{j, \ell}$ is sampled independently and adding and subtracting $\mu$ from the term inside the square and using the definition of $\sigma^2$; $\mbox{\numcircledtikz{3}}$ uses Inequality~\ref{generic-jth-der-bound}.  
\end{proof}

\presecondpartfirstterm*
Before diving into this proof, we state below a tool we need about logconcave distributions. 
\begin{theorem}[Theorem 5.22 in \cite{lovasz-vempala-survey}]\label{lovasz-vempala-thm}
If $X \in \reals^n$ is a random point sampled from a logconcave distribution, then $ (\E \abs{X}^k)^{\sfrac{1}{k}} \leq 2k \E \abs{X}$.
\end{theorem}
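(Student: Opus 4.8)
The plan is to deduce the moment bound from a one‑sided concentration inequality for the Euclidean norm of a logconcave vector, and then integrate it against a Gamma function. The key input is the classical fact that if $X$ is logconcave on $\reals^n$, then
\[ \Pr\bigl[\,\abs{X} \ge t\,\E\abs{X}\,\bigr] \;\le\; e^{1-t}, \qquad t \ge 1. \]
This is where logconcavity is genuinely used. One route is Borell's lemma applied with $A$ a Euclidean ball whose radius is chosen so that $\Pr[X\in A]$ is a fixed constant above $1/2$: this already yields exponential tail decay for $\abs{X}$, and combining it with the observation that this radius is within a universal factor of $\E\abs{X}$ (compare the median and the mean of $\abs{X}$) gives the stated form. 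To obtain the sharp constant in the exponent one instead invokes the localization lemma of Kannan–Lovász–Simonovits, which reduces the estimate to a one‑dimensional logconcave statement — here one uses that the norm, restricted to a segment of $\reals^n$, is a \emph{convex} function of the arclength parameter, so that the tail event $\{\abs{X}\ge s\}$ on a needle sits near the two endpoints and the computation becomes one‑dimensional.

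Granting the tail bound, write $M\defeq\E\abs{X}$ and compute, using the layer‑cake formula and the substitution $s=Mt$ on the tail piece,
\[ \E\abs{X}^k \;=\; \int_0^\infty \Pr\bigl[\,\abs{X}^k \ge u\,\bigr]\,du \;=\; k\int_0^\infty s^{k-1}\Pr\bigl[\,\abs{X}\ge s\,\bigr]\,ds \;\le\; M^k \;+\; k\,e\,M^k\!\int_1^\infty t^{k-1}e^{-t}\,dt, \]
where on $[0,M]$ we used $\Pr[\abs{X}\ge s]\le 1$ and on $[M,\infty)$ the tail bound. Since $\int_1^\infty t^{k-1}e^{-t}\,dt \le \Gamma(k)$ and $k\Gamma(k)=k!$, this gives $\E\abs{X}^k \le M^k\bigl(1+e\,k!\bigr)$. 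Taking $k$‑th roots, $(\E\abs{X}^k)^{1/k}\le (1+e\,k!)^{1/k}\,M$. For $k=1$ the target inequality $\E\abs{X}\le 2\E\abs{X}$ is trivial; for $k\ge 2$ we have $(2k)^k = 2^k k^k \ge 2^k k! \ge 4\,k! \ge 1+e\,k!$ (using $e<3$ and $k!\ge 1$), hence $(1+e\,k!)^{1/k}\le 2k$ and therefore $(\E\abs{X}^k)^{1/k}\le 2k\,\E\abs{X}$, as claimed.

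The integral estimate above is routine; the substantive step — and the main obstacle — is the tail inequality with a constant good enough to land at $2k$ rather than a larger multiple of $k$. The factor $2k$ is not generous: a tail bound of the weaker form $\Pr[\abs{X}\ge s]\le C e^{-cs/M}$ with small rate $c$ only yields $(\E\abs{X}^k)^{1/k}$ of order $(c^{-1}/e)\,k\,M$, so one genuinely needs $c$ bounded below by about $1/(2e)$. Naive applications of Borell's lemma come close but not quite; securing the sharp exponent requires the careful median–mean comparison or the one‑dimensional reduction via the localization lemma indicated above, for which we refer to \cite{lovasz-vempala-survey}.
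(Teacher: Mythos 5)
The paper cites this as Theorem 5.22 of the Lov\'asz--Vempala survey and supplies no proof of its own, so there is no in-paper argument against which to compare. That said, your derivation is correct. Granting the standard one-sided tail bound $\Prob\bigl[\abs{X}\ge t\,\E\abs{X}\bigr]\le e^{1-t}$ for all $t\ge 1$ and logconcave $X$ (Lemma 5.17 in the same survey), the layer-cake formula gives, with $M=\E\abs{X}$,
\[
\E\abs{X}^{k} \;=\; k\int_{0}^{\infty}s^{k-1}\Prob\bigl[\abs{X}\ge s\bigr]\,ds
\;\le\; M^{k} + e\,M^{k}\,k\int_{1}^{\infty}t^{k-1}e^{-t}\,dt
\;\le\; M^{k}\bigl(1+e\,k!\bigr),
\]
and for $k\ge 2$ the chain $(2k)^{k}=2^{k}k^{k}\ge 2^{k}k!\ge 4k!\ge 1+e\,k!$ (using $k^{k}\ge k!$, $2^{k}\ge 4$, and $(4-e)k!\ge 1$) yields $(1+e\,k!)^{1/k}\le 2k$; the case $k=1$ is immediate. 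The discussion of Borell's lemma versus the Kannan--Lov\'asz--Simonovits localization lemma is accurate background but is not needed once the tail inequality is cited, and deducing the moment bound from that tail inequality is exactly how the survey proceeds, so your argument in effect reconstructs their proof rather than offering a genuinely different route.
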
 
\begin{proof}[Proof of Lemma~\ref{lem-est-pre2-term1}]By linearity of the Gaussian distribution, given a $\zeta \sim \mathcal{N}(0, I_n)$ and for some $u \in \reals^n$, we have $\zeta^T u \sim \mathcal{N}(0, \norm{u}_2^2)$. Therefore $\jl(u, N)$ gives us a scaled chi-squared distribution, $X = \tfrac{\mu}{N} \chi^2_{N}$. For a point $x\sim X$, using the parameters of a standard chi-squared distribution gives us the following properties.
 \[\E x = \frac{\mu}{N} \cdot N = \mu, \mbox{ and } \Var x =\left(\frac{\mu}{N}\right)^2 N \lr{N + 2} - \mu^2 = 2 \frac{\mu^2}{N},\numberthis\label{unscaled-var}\] 
 which proves \textbf{($\mathbf{1}$)} and \textbf{($\mathbf{2}$)}. To prove \textbf{($\mathbf{3}$)}, we first scale the random variable $x$ by $N/\mu$ to make it of a standard chi-squared distribution; this makes our computations easier, since we later need to use the closed-form expression of the probability density function of $x$. After the scaling, we have \[\E\limits_{x \sim \chi^2_N} x = N \qquad \Var\limits_{x \sim \chi^2_N} = 2N. \numberthis\label{scaled-var-props}\]                                                       Therefore,
\begin{align*}
\E_{x \sim X} \lr{ \frac{\lr{\sigma^2 + \lr{\mu - x}^2}^{k}}{\min\lr{x, \mu}^{2k + 1}} } &\stackrel{\mbox{\numcircledtikz{1}}}{\leq} 2^k \E_{x\sim X} \lr{\frac{\sigma^{2k} + \lr{\mu - x}^{2k}}{\min\lr{x, \mu}^{2k + 1}}}\\
														&\stackrel{\mbox{\numcircledtikz{2}}}{=} 2^k \frac{N}{\mu} \underbrace{\E_{x \sim \chi^2_N}\lr{ \frac{\lr{2N}^k + \lr{N - x}^{2k}}{\min\lr{x, N}^{2k + 1}}}}_{\text{$\mbox{\numcircledtikz{A}}$}}. \numberthis\label{step1}
\end{align*} Here $\mbox{\numcircledtikz{1}}$ follows from Jensen's inequality applied to the function $g(x) = x^k$ for $k > 1$ and $x>0$; the equation $\mbox{\numcircledtikz{2}}$ follows from Equation~\ref{unscaled-var}. We now bound $\mbox{\numcircledtikz{A}}$  by considering the random variable in two disjoint intervals as follows.
\begin{align*}
\mbox{\numcircledtikz{A}} &= \E_{x \sim \chi^2_N} \left( \frac{ \lr{2N}^k + \lr{N - x}^{2k} }{\min\lr{x, N}^{2k+1}} \ones_{\left\{ x < \frac{N}{4} \right\}} \right) +  \E_{x \sim \chi^2_N} \left( \frac{\lr{2N}^k + \lr{N - x}^{2k}}{\min\lr{x, N}^{2k+1}} \ones_{\left\{ x \geq \frac{N}{4} \right\}} \right). \\
							&\leq  \underbrace{\E_{x \sim \chi^2_N} \lr{ \frac{\lr{2N}^k + \lr{N - x}^{2k}}{x^{2k + 1}} \ones_{\left\{ x< \frac{N}{4} \right\}}}}_{\text{$\mbox{\numcircledtikz{B}}$}} + \frac{1}{\lr{N/4}^{2k+1}} \underbrace{\E_{x \sim \chi^2_N} \left( \lr{2N}^k + \lr{N - x}^{2k} \right)}_{\text{$\mbox{\numcircledtikz{C}}$}}. \numberthis\label[ineq]{step2}
\end{align*} To bound $\mbox{\numcircledtikz{B}}$, we divide the region $\left\{ x < N/4\right\}$ into intervals of geometrically-varying lengths as follows. 
\begin{align*}
	\mbox{\numcircledtikz{B}} &= \sum\limits_{j = 2}^{\infty} \E_{x \sim \chi^2_N} \lr{ \frac{\lr{2N}^k + \lr{N - x}^{2k} }{x^{2k+1}} \ones_{\left\{\frac{N}{2^{j+1}} \leq x < \frac{N}{2^j}\right\}} }\\
					&\leq \sum\limits_{j = 2}^{\infty} \frac{N^{2k} 5^k}{\lr{N/2^{j+1}}^{2k+1}} \underbrace{\Prob\lr{x < N/2^j}}_{\text{$\mbox{\numcircledtikz{D}}$}}, \numberthis\label{D-for-B} 
\end{align*} where the inequality follows from the worst case upper bounds for the numerator and $1 + 2^k \leq 5^k$ for $k\geq 1$ and the worst case lower bounds for the denominator over each interval $\{N/2^{j+1} \leq x < N/{2^j} \}$. For $a>0$ and a random variable $x \sim \chi^2_N$, we have the following cumulative distribution function: 
\begin{align*}
	\Prob\lr{x \leq a} &= \int_0^a \frac{e^{-x/2} x^{N/2 - 1}}{2^{\lr{N/2}} \Gamma\lr{N/2}} dx\\
					&\leq \int_0^a \frac{e^{-\sfrac{x}{2}} x^{N/2 - 1}}{2^{N/2} (N/2e)^{(N-1)/2}} dx\\
					&\leq \frac{2 a^{N/2 - 1} e^{N/2}}{N^{(N-1)/2}},
\end{align*} where we used the Sterling approximation of Gamma function in the second inequality. Substituting $a = 2^{-j}N$ above and simplifying gives the following bound on the quantity from Inequality~\ref{D-for-B}. 
\begin{align*}
	\mbox{\numcircledtikz{D}} &\leq \frac{2^{j+1}}{\sqrt{N}} \lr{\frac{e}{2^j}}^{\frac{N}{2}}. \numberthis\label[ineq]{cdf-for-term2}
\end{align*} Substitute into Inequality~\ref{D-for-B} to get 
\begin{align*}
	\mbox{\numcircledtikz{B}} &\leq \sum_{j=2}^{\infty} N^{2k} 5^k \lr{\frac{2^{j+1}}{N}}^{2k+1} \frac{2^{j+1}}{\sqrt{N}} \lr{\frac{e}{2^j}}^{\sfrac{N}{2}}\\
			&= \frac{5^k 2^{2k+2} e^{\sfrac{N}{2}}}{N^{\sfrac{3}{2}}} \sum\limits_{j = 2}^{\infty} \frac{1}{2^{j\lr{\sfrac{N}{2} - 2k - 2 }}}\\
							&\leq \frac{2^{5k + 2} e^{\sfrac{N}{2}}}{N^{\sfrac{3}{2}}} \frac{2}{2^{N - 4k - 4}}\\
							&\leq \frac{e^{\sfrac{N}{2}}}{N^{\sfrac{3}{2}} 2^{N - 9k - 7}},\numberthis\label[ineq]{finalboundonb}
\end{align*} where we used the condition that $N \geq 4k + 6$ in the first two inequalities. Next, we bound $\mbox{\numcircledtikz{C}}$. 
\begin{align*} 
		\mbox{\numcircledtikz{C}} &=  \lr{2N}^{k} \left( \E \abs{ \frac{x - N}{\sqrt{2N}} }^{2k} + 1\right)\\
							 &\stackrel{\mbox{\numcircledtikz{1}}}{\leq} \lr{2N}^{k} \left( 2^{2k} \lr{2k}^{2k} \left( \E \frac{\abs{x - N}}{\sqrt{2N}} \right)^{2k} + 1 \right)\\
												&\stackrel{\mbox{\numcircledtikz{2}}}{\leq} \lr{2N}^{k} \left( 2^{2k} \lr{2k}^{2k} \lr{\frac{\sqrt{\E \abs{x - N}^2}}{\sqrt{2N}}}^{2k} + 1\right)\\
												&= \lr{2N}^k \lr{ 2^{2k} \lr{2k}^{2k} + 1 }\\
												&\leq \lr{2N}^k \lr{32k^2}^k, \numberthis\label[ineq]{term1-chisquared-split}
\end{align*} where $\mbox{\numcircledtikz{1}}$ is by invoking Theorem~\ref{lovasz-vempala-thm}, which is valid  by logconcavity of chi-squared distribution, and $\mbox{\numcircledtikz{2}}$ is by Jensen's inequality. Plugging Inequality~\ref{finalboundonb} and Inequality~\ref{term1-chisquared-split} into Equation~\ref{step1} gives: 
\begin{align*}
\E_{x\sim X} \lr{ \frac{\lr{\sigma^2 + \lr{x - \mu}^{2}}^k}{\min\lr{x, \mu}^{2k+1}} } &\leq 2^k \frac{N}{\mu}\lr{ \frac{e^{\sfrac{N}{2}}}{N^{\sfrac{3}{2}} 2^{N - 9k - 7}} +  \frac{4^{2k+1}}{N^{2k+1}}\lr{2N}^k \lr{32k^2}^k }\\
		&\leq \frac{1}{\mu} \lr{ \frac{e^{\sfrac{N}{2}}}{2^{N - 17k}}  + \frac{ 2^{13k} k^{2k}}{N^k} },
\end{align*} which is what is to be proved. 
\end{proof}

\subsubsection{Properties of $\widehat{\theta}_1$}\label{app-hat-theta-1}
We prove the bounds on first and second moments of $\widehat{\theta}_1$. Note that this is where we make our choice of $\testinvsqrt$ and $\testjl$ for the modules $\polyg$  and $\jl$ used in estimating $\theta_1$ in the subroutine $\estthetaone$.
\termone*
\begin{proof} Consider a random variable $x$ sampled from the distribution $(\widetilde{Z^2})_{ii}$. Because of Lemma~\ref{lem-est-pre2-term1}, we have $\E x =  (Z^2)_{ii}$. Then $x+1$ satisfies the required bias condition of Lemma~\ref{lem-est-pre-term1} for constructing a polynomial approximation for $1/\sqrt{1 + (Z^2)_{ii}}$. Then  $\widehat{\theta}_{1_i}$ satisfies
	\begin{align*}
		\abs{ \E \widehat{\theta}_{1_i} - \frac{1}{\sqrt{1 + (Z^2)_{ii}}}  } &\stackrel{\mbox{\numcircledtikz{1}}}{\leq} \E \lr{ \frac{\abs{x - (Z^2)_{ii}}^{\testinvsqrt}}{\min(x+ 1, (Z^2)_{ii}+1)^{\testinvsqrt + \sfrac{1}{2}}}}\\
																&\stackrel{\mbox{\numcircledtikz{2}}}{\leq} \sqrt{\E \frac{\lr{x - (Z^2)_{ii}}^{2\testinvsqrt}}{\min\lr{x + 1, (Z^2)_{ii}+1}^{2\testinvsqrt + 1}} }\\
																&\stackrel{\mbox{\numcircledtikz{3}}}{\leq} \sqrt{ \frac{1}{(Z^2)_{ii} + 1} \left(\frac{e^{\sfrac{\testjl}{2}}}{2^{\testjl - 17\testinvsqrt}}  + \frac{2^{13\testinvsqrt} {\testinvsqrt}^{2\testinvsqrt}}{{\testjl}^{\testinvsqrt}} \right) }. 
    \end{align*} where $\mbox{\numcircledtikz{1}}$ is by Lemma~\ref{lem-est-pre-term1}, $\mbox{\numcircledtikz{2}}$ is by Jensen's inequality, and $\mbox{\numcircledtikz{3}}$ is by a slight modification of the proof of $(3)$ in Lemma~\ref{lem-est-pre2-term1} (instead of scaling by $N/\mu$, we scale by $N\mu/(\mu+1)$ in the proof). Finally, set $\testinvsqrt = 1600 \log\lr{\tfrac{n}{\varepsilon}}$ and $\testjl = 2^{14} \testinvsqrt^2$ to get the claimed bias. Next, we can bound the variance as follows.
	\begin{align*}
		\E|\widehat{\theta}_{1_i}|^2 &\stackrel{\mbox{\numcircledtikz{1}}}{\leq} \testinvsqrt \sum_{k = 0}^{\testinvsqrt-1} \E \lr{ \frac{\left(\sigma^2 + \lr{x  - (Z^2)_{ii}}^2\right)^k}{(x+1)^{2k+1}} }\\
							&\leq \testinvsqrt \sum_{k = 0}^{\testinvsqrt-1} \E \lr{ \frac{\left(\sigma^2 + \lr{x  - (Z^2)_{ii}}^2\right)^k}{\min\lr{x + 1, (Z^2)_{ii} + 1}^{2k+1}} }\\
							&\stackrel{\mbox{\numcircledtikz{2}}}{\leq}  \frac{\testinvsqrt}{(Z^2)_{ii}}\sum_{k=0}^{\testinvsqrt-1} \lr{ \frac{e^{\testjl/2}}{2^{\testjl - 17 k}} + \frac{2^{13k} k^{2k} }{ \testjl^k } }\\
							&\stackrel{\numcircledtikz{3}}{=} \frac{\testinvsqrt}{(Z^2)_{ii}} \sum_{k = 0}^{\testinvsqrt-1} \lr{ 2^{17k}\lr{\frac{\sqrt{e}}{2}}^{2^{14}\testinvsqrt^2}  + \frac{k^{2k}}{2^k \testinvsqrt^{2k} }}
		\end{align*}  where $\mbox{\numcircledtikz{1}}$ is by $(2)$ in Lemma~\ref{lem-est-pre-term1}, $\mbox{\numcircledtikz{2}}$ is by $(3)$ in Lemma~\ref{lem-est-pre2-term1}, and $\mbox{\numcircledtikz{3}}$ is by writing $\testjl$ in terms of $\testinvsqrt$. We have the simplications, $\sum_{k = 0}^{\testinvsqrt-1}  2^{17k}\lr{\frac{\sqrt{e}}{2}}^{2^{14}\testinvsqrt^2} \leq \frac{2^{17 \testinvsqrt}}{1.2^{2^{14}\testinvsqrt} 2^{16}}$ and \[\sum_{k = 0}^{\testinvsqrt - 1} \left( \frac{k^2}{2\testinvsqrt^2} \right)^k \leq 1 + \frac{1}{2\testinvsqrt^2} + \frac{4}{\testinvsqrt^4} + \sum_{k = 3}^{\testinvsqrt/2} \left( \frac{k^2}{2\testinvsqrt^2} \right)^k + \sum_{k > \testinvsqrt/2} \left( \frac{k^2}{2\testinvsqrt^2} \right)^k. \] Finally, plug in the values of $\testinvsqrt$ to get the desired bound. 
\end{proof} In Algorithm~\ref{alg-new}, we construct the matrix $Z$ as an approximation to $\exp\lr{\tfrac{1}{2}\lr{Y^{(t)} + s\Delta}}$ by the subroutine $\cheby\lr{\tfrac{1}{2}\lr{Y^{(t)} + s\Delta}, T_{\mathrm{Cheby}}, \dcheby}$, with details as provided in Lemma~\ref{algcurr-cheby-cost-alg}. With this value of $Z$ and the same rest of the notation as in the above lemma, we therefore wish to compare $\E\widehat{\theta}_{1_i}$ with $\tfrac{1}{\sqrt{\exp\lr{Y^{(t-1)} + s\Delta}_{ii} + 1}}$. Note that the above lemma only tells us that we are close to $\frac{1}{\sqrt{(Z^2)_{ii} + 1}}$, but $Z$, as defined above in Lemma~\ref{algcurr-cheby-cost-alg}, is only an  approximation to $\exp\lr{\tfrac{1}{2}\lr{ Y^{(t-1)} + s\Delta} }$. We therefore obtain the following corollary which gives us a precise bound on the bias we care about.  
\begin{corollary}[Bias of $\widehat{\theta}_{1_i}$]\label{cor-bias-theta1}
The estimator $\widehat{\theta}_{1_i}$ described in Algorithm~\ref{alg-est-theta} satisfies\[\abs{\E\widehat{\theta}_{1_i} - \frac{1}{\sqrt{\exp\lr{Y^{(t-1)} + s\Delta}_{ii}+1}}}  \leq  \btonei \defeq \frac{(1 + 2\dexp) \sqrt{2}(\frac{\varepsilon}{n})^{400} + 2\dexp}{\sqrt{\exp\lr{Y^{(t-1)} + s\Delta}_{ii}+1}}  ,\] where $\dexp = 4800 \frac{\varepsilon^{401}}{n^{390}}$. 
\end{corollary}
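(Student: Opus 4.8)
The plan is a two-step triangle inequality that stitches together the two facts recalled just before the corollary. Abbreviate $A \defeq \lr{\exp\lr{Y^{(t-1)}+s\Delta}}_{ii}$ (the quantity we actually want to be inverse-square-rooting) and $B \defeq (Z^2)_{ii}$, where $Z = \cheby\lr{\tfrac12\lr{Y^{(t-1)}+s\Delta},\, T_{\mathrm{Cheby}},\, (\varepsilon/n)^{401}}$ is the matrix built inside \cref{alg-est-theta1}, so that $Z^2$ is exactly the matrix denoted $\widetilde{\exp}\lr{Y^{(t-1)}+s\Delta}$ in \cref{algcurr-cheby-cost-alg}. Both $A$ and $B$ are nonnegative, being diagonal entries of positive semidefinite matrices; hence $\sqrt{A+1}\ge 1$ and $\sqrt{B+1}\ge 1$, which is precisely why the ``$+1$'' shift was built into $\widehat{\theta}_1$. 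From \cref{lem-est-term1}(1), $\abs{\E\widehat{\theta}_{1_i} - \tfrac{1}{\sqrt{B+1}}} \le \tfrac{\varepsilon_{\theta_1}}{\sqrt{B+1}}$ with $\varepsilon_{\theta_1} = \sqrt2(\varepsilon/n)^{400}$; and from \cref{algcurr-cheby-cost-alg}, applied to the input $\tfrac12\lr{Y^{(t-1)}+s\Delta}$ (whose spectrum lies in the range established in \cref{spectrum-Y}, the intermediate point being a convex combination of $Y^{(t-1)}$ and $Y^{(t)}$), $\abs{A - B} \le \delta$ with $\delta = 4800\varepsilon^{401}/n^{390}$.

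I would then split $\abs{\E\widehat{\theta}_{1_i} - \tfrac{1}{\sqrt{A+1}}} \le \abs{\E\widehat{\theta}_{1_i} - \tfrac{1}{\sqrt{B+1}}} + \abs{\tfrac{1}{\sqrt{B+1}} - \tfrac{1}{\sqrt{A+1}}}$ and bound each piece. The second piece is elementary: rationalizing the numerator,
\[ \abs{\tfrac{1}{\sqrt{B+1}} - \tfrac{1}{\sqrt{A+1}}} = \frac{\abs{A-B}}{\sqrt{(A+1)(B+1)}\,\lr{\sqrt{A+1}+\sqrt{B+1}}} \le \frac{\delta}{\sqrt{A+1}}, \]
since the denominator is at least $\sqrt{A+1}\cdot\sqrt{A+1} = A+1 \ge \sqrt{A+1}$ (using $\sqrt{B+1}\ge 1$ and $\sqrt{A+1}+\sqrt{B+1}\ge\sqrt{A+1}$), a bound that is valid for both signs of $A-B$. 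For the first piece I would convert the $1/\sqrt{B+1}$ normalization into the $1/\sqrt{A+1}$ normalization demanded by the statement: from $A+1\le B+1+\delta$ and $B+1\ge 1$ we get $\tfrac{A+1}{B+1}\le 1+\delta$, hence $\tfrac{\sqrt{A+1}}{\sqrt{B+1}}\le\sqrt{1+\delta}\le 1+2\delta$, and therefore $\tfrac{\varepsilon_{\theta_1}}{\sqrt{B+1}}\le\tfrac{(1+2\delta)\varepsilon_{\theta_1}}{\sqrt{A+1}}$.

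Adding the two bounds, and using the trivial $\delta\le 2\delta$ on the second, yields $\abs{\E\widehat{\theta}_{1_i}-\tfrac{1}{\sqrt{A+1}}}\le\tfrac{(1+2\delta)\varepsilon_{\theta_1}+2\delta}{\sqrt{A+1}}$, which is the claim. I do not expect a genuine obstacle here: the entire content is the triangle-inequality decomposition together with the observation that the ``$+1$'' keeps every denominator at least $1$, so both perturbing the argument of $1/\sqrt{\cdot}$ by $\delta$ and perturbing the target by a multiplicative factor at most $1+2\delta$ are harmless. The only points requiring care are keeping the $\sqrt{\cdot+1}$ normalizations straight between \cref{lem-est-term1} (stated relative to $\sqrt{(Z^2)_{ii}+1}$) and the corollary (stated relative to $\sqrt{(\exp(\cdots))_{ii}+1}$), and confirming that the intermediate point $Y^{(t-1)}+s\Delta$ inherits the spectral bound needed to invoke \cref{algcurr-cheby-cost-alg}.
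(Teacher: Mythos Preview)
Your proposal is correct and follows essentially the same route as the paper: a triangle-inequality split into the estimator error from \cref{lem-est-term1} and the perturbation error $\abs{\tfrac{1}{\sqrt{B+1}}-\tfrac{1}{\sqrt{A+1}}}$, followed by converting the $\sqrt{B+1}$ normalization to $\sqrt{A+1}$ via $\tfrac{\sqrt{A+1}}{\sqrt{B+1}}\le 1+2\delta$. The only cosmetic difference is that the paper bounds the perturbation term via a first-order Taylor expansion of $1/\sqrt{1-x}$ (obtaining $\tfrac{2\delta}{(A+1)^{3/2}}\le\tfrac{2\delta}{\sqrt{A+1}}$), whereas you rationalize the numerator directly to get $\tfrac{\delta}{\sqrt{A+1}}$ and then relax by a factor of $2$; both land on the same final bound.
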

\begin{proof} From Lemma~\ref{algcurr-cheby-cost-alg}, we know that $Z = \cheby\lr{\tfrac{1}{2}\lr{Y^{(t-1)}+ s\Delta}, \tcheby, \dcheby}$ satisfies 
\begin{align*}
	\abs{\lr{\exp\lr{Y^{(t-1) + s\Delta}}-Z^2}_{ii}} &\leq \frac{4800 \varepsilon^{401}}{n^{390}}.
\end{align*} For ease of notation, let $\dexp \defeq \frac{4800 \varepsilon^{401}}{n^{390}}$. Given $a - \delta \leq b \leq a + \delta$, we use the Taylor series approximation to compute the error $\frac{1}{\sqrt{a}} - \frac{1}{\sqrt{b}}$. We have: 
\begin{align*}
	\abs{\frac{1}{\sqrt{a}} - \frac{1}{\sqrt{b}}} &\leq \abs{\frac{1}{\sqrt{a}} - \frac{1}{\sqrt{-\delta + a}} }\\
	&= \frac{1}{\sqrt{a}}\abs{1 - \frac{1}{\sqrt{1 - \delta/a}}}\\
	&\leq  \frac{1}{\sqrt{a}} \frac{2\delta}{a} = \frac{2 \delta}{a^{3/2}},
\end{align*} where we used the Taylor approximation of $\frac{1}{\sqrt{1 - x}}$ for small $x$. Thus, we have, from the above and Lemma~\ref{lem-est-term1},
\begin{align*}
	\abs{\E \widehat{\theta}_{1_i} - \frac{1}{\sqrt{\exp\lr{Y^{(t-1)} + s\Delta}_{ii}+1}} } &\leq \frac{\sqrt{2}(\varepsilon/n)^{400}}{\sqrt{Z^2_{ii} + 1}} + \frac{2\delta}{\sqrt{\exp\lr{Y^{(t-1)} + s\Delta}_{ii}+1}} \\
	&\leq \frac{(1 + 2\delta)\sqrt{2}(\varepsilon/n)^{400} + 2\delta}{\sqrt{\exp\lr{Y^{(t-1)} + s\Delta}_{ii}+1}},
\end{align*} which proves the claim. 
\end{proof}

\subsubsection{Properties of $\widehat{\theta}_2$}\label{prop-hat-theta2}
\termtwo*
\begin{proof}
The bias is defined as
\begin{align*}
	\E \widehat{\theta}_{2_i} &= \ones_{i}^{T} Z_1 \Delta Z_2 \lr{\E\zeta\zeta^{T}} Z \ones_{i}\\
								&= \lr{Z_1 \Delta Z_2 Z }_{ii} = \theta_{2_i},
\end{align*} where the second step is from the fact that $\zeta\sim \mathcal{N}(0, I_n)$ and linearity of expectation, and the last is by definition of $\theta_{2}$. Next, from Lemma~\ref{lem_exp_bound-gaussian}, given $a, b \in \reals^n$  and $\zeta \sim \mathcal{N}(0, I_n)$, we conclude that $\E ( (\zeta^T a)^2 (\zeta^T b)^2 ) \leq 3 \norm{a}_2^2 \norm{b}_2^2$. Therefore,
\begin{align*}
	\E\abs{\widehat{\theta}_{2_i}}^{2} &= \E (\ones_{i}^{T} Z_1 \Delta Z_2 \zeta)^{2})(\zeta^{T}  Z \ones_{i})^{2}\\
							&\leq 3\norm{Z_2 \Delta Z_1 \ones_{i}}^{2}\norm{Z \ones_{i}}^{2}\\
							&= 3(Z_1 \Delta Z_2^2  \Delta Z_1)_{ii} (Z^2)_{ii}.
\end{align*} This proves the bound on the second moment. 
\end{proof} As before, we can obtain, as a corollary of this result, a comparison of the mean of our estimator with the quantity we actually are trying to compute. 
\begin{corollary}[Bias of $\widehat{\theta}_{2_i}$]\label{cor-bias-theta2}
The estimator $\widehat{\theta}_{2_i}$ described in Algorithm~\ref{alg-est-theta} satisfies \[ \abs{ \E\widehat{\theta}_{2_i} - \lr{ \exp(\bar{\tau} (Y^{(t-1)} + s\Delta))\Delta \exp((\tau - \tfrac{1}{2}) (Y^{(t-1)} + s\Delta))\exp(\tfrac{1}{2}(Y^{(t-1)} + s\Delta)) }_{ii}} \leq 15 \dexp \eta K\] where $\dexp = \frac{4800 \varepsilon^{401}}{n^{390}}$. 
\end{corollary}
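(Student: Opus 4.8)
The starting point is \cref{lem-est-term2}(1), applied with $Z_1 = \widetilde{\exp}((\tau-\tfrac12)U)$, $Z_2 = \widetilde{\exp}(\bar\tau U)$, and $Z = \widetilde{\exp}(U/2)$, where $U = Y^{(t-1)} + s\Delta$ is the intermediate point formed inside \cref{alg-est-theta} and $\widetilde{\exp}(\cdot) = \cheby(\cdot, T_{\mathrm{Cheby}}, (\varepsilon/n)^{401})$ as in \cref{algcurr-cheby-cost-alg}. Since the scalar $\tau$ multiplying the bilinear form in \cref{alg-est-theta2} is deterministic given $\tau$, this gives the exact identity $\E\widehat{\theta}_{2_i} = \tau\,\bigl(\widetilde{\exp}((\tau-\tfrac12)U)\,\Delta\,\widetilde{\exp}(\bar\tau U)\,\widetilde{\exp}(U/2)\bigr)_{ii}$. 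Because $U$ and $\Delta$ are symmetric, a diagonal entry of a matrix equals that of its transpose, and $\widetilde{\exp}(\bar\tau U)$, $\widetilde{\exp}(U/2)$ commute (both are polynomials in $U$), so this can be rearranged into the order appearing on the right-hand side of the corollary. The entire remaining task is to bound the error incurred by replacing each Chebyshev approximant $\widetilde{\exp}(aU)$ by the exact $\exp(aU)$ (and by dealing with the leading factor $|\tau|\le 1$). This is the same template as the proof of \cref{cor-bias-theta1}, with a product of three matrices in place of the scalar Taylor expansion of $u^{-1/2}$.

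Next I would write this error as a three-term telescoping sum, swapping one factor $\widetilde{\exp}\to\exp$ at a time, and bound the $(i,i)$ entry of each term by the product of the operator norms of its factors. Three ingredients feed these bounds. First, the spectral bound of \cref{spectrum-Y}, $\lambda(Y^{(t-1)})\subseteq[-60\log n/\varepsilon,\ \log K]$, together with $\|\Delta\|_{\mathrm{op}} = \eta\|\nabla f\|_{\mathrm{op}}\le 2\eta$ from the Lipschitz bound of \cref{lem_curr_alg_lips}, shows $U$ has essentially this same spectrum. Second, the products that actually appear collapse by commutativity into $\exp(cU)$ with total exponent $c\le \tfrac32$, so every factor and every partial product has operator norm at most $K^{3/2}$ up to lower-order terms. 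Third, $\|\widetilde{\exp}(aU)-\exp(aU)\|_{\mathrm{op}}$ is controlled by \cref{general-cheb-approx}/\cref{algcurr-cheby-cost-alg} invoked on the interval that actually contains the spectrum of $aU$ --- with care that for $a=\tau-\tfrac12<0$ this interval is reflected, so its right endpoint is governed by $\lambda_{\min}(U)$ --- giving an error of the form $\delta\cdot(\text{endpoint exponentiated})$ with $\delta=(\varepsilon/n)^{401}$. Multiplying the per-term bounds (one error factor of order $\delta$, one copy of $\|\Delta\|_{\mathrm{op}}\le 2\eta$, and the remaining bounded exponential factors of order $\le K$), absorbing $|\tau|\le 1$, and summing the three terms while tracking the hidden constants (the $4800$ from \cref{algcurr-cheby-cost-alg} and the $2$ from the Lipschitz bound) yields the claimed $15\delta\eta K$.

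The main obstacle is precisely the interval/operator-norm bookkeeping in the third ingredient: since $\tau-\tfrac12$ can be negative while $U$ has a very negative smallest eigenvalue, the single factor $\widetilde{\exp}((\tau-\tfrac12)U)$ sits on an interval with a large right endpoint, so its \emph{individual} operator norm (and approximation error) is not small on its own. The argument must exploit that this large factor is always paired, inside the product it occurs in, with $\Delta$ (norm $\le 2\eta$) and with exponentials whose total exponent keeps the overall product norm bounded by $K^{3/2}$, so that only the \emph{product} --- not each factor --- is controlled; getting the exponent arithmetic and the reflected endpoints exactly right so that the final constant comes out as $15$ is the delicate part. Everything else is routine triangle-inequality and submultiplicativity manipulation.
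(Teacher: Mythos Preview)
Your approach is essentially the paper's: both invoke \cref{lem-est-term2} to identify $\E\widehat\theta_{2_i}$ with the $(i,i)$ entry of the product of Chebyshev approximants, write each approximant as $\exp(\cdot)+U_j$, expand, and bound every resulting diagonal entry by the product of the operator norms of its factors using $\|\exp(\alpha Y_s)\|_{\mathrm{op}}\le K^\alpha$, $\|\Delta\|_{\mathrm{op}}\le \eta G$, and $\|U_j\|_{\mathrm{op}}\le\delta$. The only cosmetic difference is that you telescope (three error terms), whereas the paper multiplies the three sums out in full (seven error terms) and then bounds ``each term by the largest of all terms'' to arrive at $15K\delta\eta$; neither the reordering via symmetry/commutativity you mention nor the leading $\tau$ factor is discussed in the paper's write-up.

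On your ``main obstacle'' --- that for $\tau<\tfrac12$ the single factor $\exp((\tau-\tfrac12)U)$ has operator norm governed by $\lambda_{\min}(U)$ rather than $\lambda_{\max}(U)$, so the per-factor bounds and the Chebyshev error for that piece are not individually small --- the paper's proof does not address this at all. It simply cites \cref{specbnd2} (which only controls $\lambda_{\max}$) for $\|\exp(\alpha Y_s)\|_{\mathrm{op}}\le K^\alpha$ and asserts $\|U_j\|_{\mathrm{op}}\le\delta$ for all three error matrices without distinguishing the sign of the exponent. So you are not missing any ingredient the paper supplies; if anything, you have been more scrupulous in isolating exactly where the argument is delicate.
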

\begin{proof} This proof simply involves writing out some matrix products and bounds on the diagonal entries of the products (using the operator norm of the individual matrices). We show this below. Let  $Z_1 = \exp\lr{\bar{\tau} \lr{Y^{(t-1)} + s\Delta}} + U_1$, $Z_2 = \exp\lr{(\tau - 1/2) \lr{Y^{(t-1)} + s\Delta}} + U_2$, and $Z = \exp\lr{\tfrac{1}{2}\lr{Y^{(t-1)} + s\Delta}} + U$. From Lemma~\ref{lem-est-term2}, we have that $\E\widehat{\theta}_{2_i} = \theta_{2_i}$. We now express $\theta_{2_i}$ in terms of the matrix exponentials we care about. For ease of notation, we use $Y_s = Y^{(t-1)} + s\Delta$. 
\begin{align*}
	\E \widehat{\theta}_{2_i} - \lr{\exp\lr{\bar{\tau}Y_s} \Delta \exp\lr{(\tau - 1/2) Y_s} \exp\lr{\tfrac{1}{2}Y_s}}_{ii} &= \lr{\exp\lr{\bar{\tau}Y_s} \Delta \exp\lr{(\tau - 1/2) Y_s}U}_{ii} \\
	&+  \lr{\exp\lr{\bar{\tau}Y_s} \Delta U_2 \exp\lr{\tfrac{1}{2}Y_s}}_{ii}  + \lr{\exp\lr{\bar{\tau}Y_s} \Delta U_2 U}_{ii}  \\
	&+ \lr{U_1 \Delta \exp\lr{(\tau - 1/2) Y_s}\exp\lr{\tfrac{1}{2}Y_s}}_{ii} \\
	&+ \lr{U_1 \Delta \exp\lr{(\tau - 1/2) Y_s} U}_{ii}\\
	&+ \lr{U_1 \Delta U_2 \exp\lr{\tfrac{1}{2}Y_s} }_{ii}
 + \lr{U_1 \Delta U_2 U}_{ii}. 
\end{align*}  We can bound this by bounding the operator norm of each of the terms. Matrix norm is sub-multiplicative, so this in turn is bounded by the operator norm of the individual terms in the matrices. From Inequality~\ref{specbnd2}, we know that $\opnorm{\exp\lr{\alpha Y_s}} \leq K^{\alpha}$, $\opnorm{\Delta} \leq \eta G$, $\opnorm{U_1} \leq \dexp$, $\opnorm{U_2} \leq \dexp$, and $	\opnorm{U} \leq \dexp$, where $\dexp = \tfrac{4800 \varepsilon^{401}}{n^{390}}$. Substituting these values here and bounding each term  by  the largest of all terms gives us the bound to be proved. 
\end{proof}

\subsubsection{Properties of the Overall Estimator, $\widehat{\theta}$}
\biasandvarbndest*
\begin{proof}We can get the bound on the bias by applying the results of Corollaries~\ref{cor-bias-theta1} and \ref{cor-bias-theta2} in $\E\widehat{\theta}_i = \E \widehat{\theta}_{1_i} \E\widehat{\theta}_{2_i}$. We need the following definition to concisely write out expressions in this proof. 
\begin{definition}Let $\theta_{1_i} = \frac{1}{\sqrt{\exp(Y_s)_{ii} + 1}}$, $\theta_{2_i} =  \frac{1}{2}\lr{ \exp\lr{\bar{\tau}Y_s} \Delta \exp\lr{(\tau - \tfrac{1}{2}) Y_s} \exp\lr{\tfrac{1}{2}Y_s}}_{ii}$, $b_{1_i} = \theta_{1_i} (2\dexp + (1+2\dexp) \sqrt{2}(\varepsilon/n)^{400})$, and $b_{2_i} = 15 \dexp \eta K$ for $Y_s = Y^{(t-1)} + s\Delta$. 
\end{definition}We have the following error bound. 
\begin{align*}
	\abs{\E \widehat{\theta}_i - \int_{s=0}^1 \theta_{1_i} \int_{\tau = 0}^1  \theta_{2_i}  d\tau ds} &= \abs{ \int_{s=0}^1 \E\widehat{\theta}_{1_i} \int_{\tau=0}^1 \E \widehat{\theta}_{2_i} d\tau ds - \int_{s=0}^1 \theta_{1_i} \int_{\tau=0}^1 \theta_{2_i} d\tau ds} \\
	&\leq \int_{s=0}^1 \int_{\tau=0}^1 \abs{\E \widehat{\theta}_{1_i} \E \widehat{\theta}_{2_i} - \theta_{1_i} \theta_{2_i} } d\tau ds\\
	&\leq \abs{\E\widehat{\theta}_{1_i} \E \widehat{\theta}_{2_i} - \theta_{1_i} \theta_{2_i}}.
\end{align*} From Corollary~\ref{cor-bias-theta1}, we have $ \E \widehat{\theta}_{1_i} \in [\theta_{1_i} \pm b_{1_i}]$. From Corollary~\ref{cor-bias-theta2}, we have $\E \widehat{\theta}_{2_i} \in [\theta_{2_i} \pm b_{2_i}]$. Therefore, the right hand side above is bounded by: 
\begin{align*}
	\abs{\E \widehat{\theta}_i - \int_{s=0}^1 \theta_{1_i} \int_{\tau = 0}^1 \theta_{2_i} ds d\tau} &\leq b_{1_i} \theta_{2_i}  + b_{2_i} \theta_{1_i} + b_{1_i} b_{2_i}. 
\end{align*} We now compute a quantity which will be useful later: 
\begin{align*}
\sum_{i=1}^n\lr{	\E\widehat{\theta}_i - \int_{s=0}^1 \theta_{1_i} \int_{\tau=0}^1 \theta_{2_i} ds d\tau}^2 &\leq b_{1_i}^2 \sum_{i=1}^n \theta_{2_i}^2 +  (2 b_{1_i} b_{2_i}) (1 + b_{1_i})  \sum_{i=1}^n \theta_{2_i} + n b_{2_i}^2 (1 + b_{1_i})^2. \numberthis\label{bias-squared-intermediate}
\end{align*} Here we used the fact that $\theta_{1_i} = \frac{1}{\sqrt{\exp(Y_s)_{ii} + 1}} \leq 1$. We compute each of these terms separately. 
\begin{align*}
	\sum_{i=1}^n \theta_{2_i}^2 &= \sum_{i=1}^n \lr{\lr{\exp\lr{\bar{\tau} Y_s} \Delta \exp\lr{(\tau - 1/2)Y_s} \exp\lr{\tfrac{1}{2} Y_s}}_{ii}}^2 \\
	&\stackrel{\text{$\mbox{\numcircledtikz{A}}$}}{\leq} \sum_{i=1}^n \lr{ \exp\lr{\bar{\tau}Y_s} \Delta \exp\lr{(\tau-1/2)Y_s} \exp\lr{\tfrac{1}{2}Y_s} \exp\lr{\bar{\tau}Y_s} \Delta \exp\lr{(\tau-1/2)Y_s} \exp\lr{\tfrac{1}{2}Y_s}}_{ii}\\
	&= \Tr \lr{\exp\lr{\bar{\tau}Y_s} \Delta \exp\lr{Y_s} \Delta \exp\lr{\tau Y_s}} \\
	&= \Tr \lr{\exp\lr{Y_s} \Delta \exp\lr{Y_s} \Delta} \\
	&\leq K^2 \eta^2 G^2. \numberthis\label{bias-squared-term1}
\end{align*} Here, $\text{$\mbox{\numcircledtikz{A}}$}$ was because $\sum_{i=1}^n (A_{ii})^2 \leq \sum_{i=1}^n (A^2)_{ii}$, which can be checked by a simple computation. Similarly, the sum in the cross-term can be computed as follows.
\begin{align*}
\sum_{i=1}^n  \theta_{2_i} &= \sum_{i=1}^n \lr{\exp\lr{\bar{\tau}Y_s} \Delta \exp\lr{(\tau-1/2) Y_s} \exp\lr{\tfrac{1}{2}Y_s}}_{ii}\\ &= \Tr \lr{\exp\lr{\bar{\tau}Y_s} \Delta \exp\lr{(\tau-1/2) Y_s} \exp\lr{\tfrac{1}{2}Y_s} } \\
&= \Tr \lr{ \exp\lr{\bar{\tau}Y_s} \Delta \exp\lr{\tau Y_s} }\\
&= \Tr \lr{\exp\lr{Y_s}\Delta}\\
&\leq K \eta G. \numberthis\label{bias-squared-term2}
\end{align*} Substituting Inequalities~\ref{bias-squared-term1} and \ref{bias-squared-term2} into Inequality~\ref{bias-squared-intermediate}, and using $\frac{1}{\sqrt{\exp(Y_s)_{ii} + 1}} \leq 1$ gives us: 
\begin{align*}
	\sum_{i=1}^n\lr{	\E\widehat{\theta}_i - \int_{s=0}^1 a_1 \int_{\tau  =0 }^1 a_2 dsd\tau }^2  &\leq (2\dexp + (1 + 2\dexp) \sqrt{2}(\varepsilon/n)^{400})^2 K^2 \eta^2 G^2\\
	&+ 900 n \delta^2 \eta^2 K^2 \\
	&+ 60 \eta \delta K (2\dexp + (1 +2\dexp) \sqrt{2}(\varepsilon/n)^{400})K \eta G\\
	&\leq 6 K^2 \eta^2 (\sqrt{2}(\varepsilon/n)^{400} + 2\dexp)\\
&\leq 400 n K^2 \eta^2 (\sqrt{2}(\varepsilon/n)^{400} + 2 \dexp). \numberthis\label{bias-squared-final}
\end{align*}
	We now prove the final variance bound.
	\begin{align*}
		\E_{s, \tau, \zeta_1, \zeta_2} \|\widehat{\theta}\|_2^2 &= \E_{s, \tau, \zeta_1, \zeta_2} \sum_{i=1}^n  |\widehat{\theta}_i|^2\\
		&= \int_{s=0}^1  \int_{\tau = 0}^1  \sum_{i=1}^n \E_{\zeta_1} |\widehat{\theta}_{1_i}|^2 \E_{\zeta_2}  |\widehat{\theta}_{2_i}|^2 ds d\tau.
\end{align*} Combining Lemmas~\ref{lem-est-term1} and \ref{lem-est-term2}, we get:
\begin{align*}
\E_{s, \tau, \zeta_1, \zeta_2}\|\widehat{\theta}\|_2^2 &= \int_{s=0}^1 \int_{\tau = 0}^1  \sum_{i=1}^n \underbrace{\tfrac{1630 \log(n/\varepsilon)}{ (Z^2)_{ii}}}_{\text{\mbox{\numcircledtikz{1}}}} \cdot 3 \underbrace{\lr{Z_2 \Delta Z_1^2 \Delta Z_2}_{ii} \lr{Z^2}_{ii}}_{\text{\mbox{\numcircledtikz{2}}}} ds d\tau, \\
		&\stackrel{\text{$\mbox{\numcircledtikz{A}}$}}{=} \sum_{i=1}^n \int_{s=0}^1 \int_{\tau=0}^1 4890 \log(n/\varepsilon) (Z_2 \Delta Z_1^2 \Delta Z_2)_{ii} ds d\tau \\
		&= 4890  \log(n/\varepsilon) \int_{s=0}^1 \int_{\tau=0}^1 \Tr\lr{ Z_2^2 \Delta Z_1^2 \Delta} dsd\tau, \numberthis\label{var-temp-1}
	\end{align*} where $Z_1 = \exp\lr{(\tau - 1/2)\lr{Y^{(t-1)} + s\Delta}} + U_1$ and $Z_2 = \exp\lr{\bar{\tau}\lr{Y^{(t-1)} + s\Delta}} + U_2$ as defined in Corollary~\ref{cor-bias-theta2}. The term $\text{$\mbox{\numcircledtikz{A}}$}$ shows the significance of carefully choosing the split in the estimator $\widehat{\theta}_2$, which enabled the cancellation of $\frac{1}{(Z^2)_{ii}}$ and $(Z^2)_{ii}$. We now bound $\Tr \lr{Z_2^2 \Delta Z_1^2 \Delta}$. In Lemma~\ref{algcurr-cheby-cost-alg} we showed how to construct $Z_1$ and $Z_2$ as $\dexp = 4800\varepsilon^{401}/n^{390}$ approximations to the respective matrix exponentials. Thus, writing $\opnorm{U_1} = \opnorm{U_2} = \dexp$ and expanding out the product $Z_2^2 \Delta Z_1^2 \Delta$ in terms of the true matrix exponentials and the error matrices, we get the following:	\[\Tr (Z_2^2 \Delta Z_1^2 \Delta) \leq \Tr(\exp(2\bar{\tau} (Y^{(t-1)} + s\Delta)) \Delta \exp((2\tau - 1)(Y^{(t-1)} + s\Delta)) \Delta ) + 30 \eta^2 \dexp K^2.\] Choosing $A = \exp\lr{Y^{(t-1)} + s\Delta}$ and $B = \Delta$ and combining with the fact that matrix exponential is positive semidefinite, and $\Delta$ is a symmetric matrix since the gradient of the objective is symmetric, invoking Fact~\ref{fact-extendedLiebThirring} gives:\[ \Tr \lr{Z_2^2 \Delta Z_1^2 \Delta} \leq \Tr (\exp(Y^{(t-1)} + s\Delta)\Delta^2) + 30 \eta^2 \dexp K^2 \leq 4 K \eta^2 + 30 \eta^2 \dexp K^2,\] where the last inequality follows from applying Holder's inequality with the nuclear norm and operator norm. Plugging this back into Equation~\ref{var-temp-1} and completing the integration gives
	\begin{align*}
	\E_{s, \tau, \zeta_1, \zeta_2} \|\widehat{\theta}\|_2^2 &\leq 4890 \log(n/\varepsilon) \lr{4 K \eta^2 + 30 K^2 \eta^2 \dexp}\leq 19600 \log(n/\varepsilon) K \eta^2 + 147000 K^2 \eta^2 \dexp.
	\end{align*} 
\end{proof}

\subsection{Number of Inner Iterations}\label{new-num-inner-iters}
We can use the general expression for overall running time to choose a value for number of `low-accuracy' iterations. The total computational cost of the algorithm is \[\Tout \times\frac{10^5 \lr{\log n}^{21} }{\varepsilon^2} T_{\mathrm{exp}} + \Tout \times \Tin \times 2^{30} \lr{ \log\lr{ \frac{1}{\varepsilon}} }^4 T_{\mathrm{exp}},\numberthis\label{totalcostnew}\] where the first term is the total cost of exact computations, and the second term is the total cost of approximate computations (done inside the inner loop); $T_{\mathrm{exp}}$ is the cost of approximating the products of matrix exponentials with a vector. This is optimal (ignoring polylogarithmic terms) when setting $\Tin = \Ord(1/\varepsilon^2)$. We set $\Tin = 1/\varepsilon^2$ due to technical reasons arising in Lemma~\ref{pf-algnew-distbndinalg}.

\subsection{Distance Bound Between Estimated and True Iterates}\label{sec-algnew-distbndinalg}
Since the estimators in the inner loop iterations are constructed to have a low variance, the estimated and true iterates aren't far apart, as we show now. This is also where we choose the step size $\eta$. 
\begin{restatable}{lemma}{algnewdistbndinalg}\label{pf-algnew-distbndinalg}
In Algorithm~\ref{alg-new}, after $t \leq \Tin$ iterations, we have $\E\|X^{\lr{t}} - \widetilde{X}^{\lr{t}}\|_{\mathrm{nuc}} \leq 1.132 n \varepsilon$. Recall, $\widetilde{X}^{(t)}$ is the approximate primal iterate, while $X^{(t)}$ is the exact iterate. 
\end{restatable}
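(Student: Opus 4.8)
The plan is to measure the error not on the primal diagonals directly but on the shifted square roots $a_j^{(t)} := \sqrt{X^{(t)}_{jj}+1}$ and $\widetilde a_j^{(t)} := \sqrt{\widetilde X^{(t)}_{jj}+1}$, since in those coordinates the update \cref{def-updatex} is affine, $\widetilde a^{(t+1)} = \widetilde a^{(t)} + \widehat\theta^{(t_i)}$, and by the fundamental-theorem-of-calculus identity of \cref{sec-outline-est} the exact iterate telescopes the same way, $a^{(t+1)} = a^{(t)} + \theta^{(t_i)}$. First I would set $b^{(t)} := \widetilde a^{(t)} - a^{(t)}$ and reduce the claim to a bound on $R_t := (\E\|b^{(t)}\|_2^2)^{1/2}$: adopting the Arora--Kale convention that $\widetilde X^{(t)}$ agrees with $X^{(t)}$ off the diagonal (legitimate, since \cref{alg-new} only ever uses $\diag \widetilde X^{(t)}$), we have $X^{(t)}_{jj}-\widetilde X^{(t)}_{jj} = (a_j^{(t)})^2-(\widetilde a_j^{(t)})^2 = -b_j^{(t)}(a_j^{(t)}+\widetilde a_j^{(t)})$, so by Cauchy--Schwarz
\[\|X^{(t)}-\widetilde X^{(t)}\|_{\mathrm{nuc}} = \sum_j |b_j^{(t)}|\,(a_j^{(t)}+\widetilde a_j^{(t)}) \le \|b^{(t)}\|_2\,\|a^{(t)}+\widetilde a^{(t)}\|_2 ,\]
and $\|a^{(t)}\|_2^2 = \Tr X^{(t)} + n \le 2K$, $\|\widetilde a^{(t)}\|_2^2 = \Tr\widetilde X^{(t)} + n \le 2K$ by \cref{lem-newalg-distbnd-opt-bigoh}. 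Hence $\E\|X^{(t)}-\widetilde X^{(t)}\|_{\mathrm{nuc}} \le 2\sqrt{2K}\,R_t$, and it suffices to show $R_t \le \widetilde{\Ord}(\sqrt{n/K}\,\varepsilon)$.

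Next I would set up the one-step recursion for $R_t$ along an inner loop. Let $\mathcal F^{(t)}$ be the history through the computation of $\widetilde X^{(t)}$, so that $Y^{(t)}$, $\Delta = -\eta\nabla f(\widetilde X^{(t)})$, and the target $\theta^{(t_i)}$ are $\mathcal F^{(t)}$-measurable, and split $\widehat\theta^{(t_i)}-\theta^{(t_i)} = \beta^{(t)} + \xi^{(t)}$ into the conditional bias $\beta^{(t)} := \E[\widehat\theta^{(t_i)}-\theta^{(t_i)}\mid\mathcal F^{(t)}]$ and a conditionally mean-zero part $\xi^{(t)}$. Since $b^{(t+1)} = b^{(t)} + \beta^{(t)} + \xi^{(t)}$, conditional orthogonality gives
\[\E\left[\|b^{(t+1)}\|_2^2 \mid \mathcal F^{(t)}\right] = \|b^{(t)}+\beta^{(t)}\|_2^2 + \E\left[\|\xi^{(t)}\|_2^2 \mid \mathcal F^{(t)}\right].\]
I would then plug in the two estimator guarantees already established: \cref{lem-var-bound-on-xupdate} gives $\E[\|\xi^{(t)}\|_2^2 \mid \mathcal F^{(t)}] \le \E[\|\widehat\theta^{(t_i)}\|_2^2\mid\mathcal F^{(t)}] \le V := 9600\log(n/\varepsilon)K\eta^2 + 64000K^2\eta^2\delta$, while \cref{cor-bias-theta1}, \cref{cor-bias-theta2} (together with the bias-squared estimate inside the proof of \cref{lem-var-bound-on-xupdate}) give $\|\beta^{(t)}\|_2 \le \beta_0 := 20\sqrt n\,K\eta\sqrt{\varepsilon_{\theta_1}+2\delta}$. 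Taking total expectations and using $\E\|b^{(t)}+\beta^{(t)}\|_2^2 \le (R_t+\beta_0)^2$ yields the recursion $R_{t+1}^2 \le (R_t+\beta_0)^2 + V$.

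I would then unroll this from the start of the inner loop. The base value $R_{t_0}$ is the residual left by the high-accuracy projection, controlled by the Johnson--Lindenstrauss guarantee \cref{thm-jl-statement} for the $T^{\prime}$ projections together with the Chebyshev error \cref{algcurr-cheby-cost-alg}; because without trace normalization the error is the \emph{relative} JL error applied to $\exp Y^{(t)}$, summing over coordinates gives $2\sqrt{2K}\,R_{t_0} \le \widetilde{\Ord}(n\varepsilon)$, in fact a small constant multiple of $n\varepsilon$ for the $T^{\prime}$ of \cref{TableNewParams}. Because $\varepsilon_{\theta_1} = \sqrt2(\varepsilon/n)^{400}$ and $\delta = 4800\,\varepsilon^{401}/n^{390}$, the quantity $\beta_0$ (and the $K^2\eta^2\delta$ term of $V$) is astronomically below the $n\varepsilon$ scale, so up to negligible corrections the recursion is $R_{t+1}^2 \le R_t^2 + V$, giving $R_{t_0+k}^2 \le R_{t_0}^2 + kV$ for $k \le \Tin = 1/\varepsilon^2$. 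Substituting $K = 40n(\log n)^{10}$, $\eta = \varepsilon^2/(8\times10^4(\log(n/\varepsilon))^{11})$ and $\Tin = 1/\varepsilon^2$ shows $2\sqrt{2K}\sqrt{\Tin V} = \widetilde{\Ord}(n\varepsilon)$, and a short accounting of the constants gives the combined bound $\E\|X^{(t)}-\widetilde X^{(t)}\|_{\mathrm{nuc}} \le 1.132\,n\varepsilon$; the same high-accuracy estimate simultaneously covers the outer iterations themselves.

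The hard part is the variance bookkeeping in the recursion: the reason the error stays $\widetilde{\Ord}(n\varepsilon)$ over $\widetilde{\Ord}(1/\varepsilon^2)$ inner steps is that it grows like $\Tin V$ rather than $(\Tin\sqrt V)^2$, which is precisely where variance reduction pays off. This needs the targets to telescope \emph{exactly} through the fundamental-theorem-of-calculus identity (so that only the estimator's fluctuation, never its drift, feeds into $b^{(t)}$) and the noise to be conditionally mean-zero with second moment only $\widetilde{\Ord}(K\eta^2)$; it therefore rests entirely on the estimator-quality bound \cref{lem-var-bound-on-xupdate}. A secondary issue is a mild circularity: reducing $\|X^{(t)}-\widetilde X^{(t)}\|_{\mathrm{nuc}}$ to $R_t$ uses $\Tr\widetilde X^{(t)} \le K$ from \cref{lem-newalg-distbnd-opt-bigoh}, whose proof in turn leans on a distance bound, so these two estimates should be carried as a simultaneous induction on $t$ (alternatively $\Tr\widetilde X^{(t)} \le \Tr X^{(t)} + 2\sqrt{2K}\|b^{(t)}\|_2 + \|b^{(t)}\|_2^2$ closes the loop, since the constraint $\Tr X \le K$ has enormous slack). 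Finally, extracting the explicit constant $1.132$ rather than a bare $\widetilde{\Ord}(n\varepsilon)$ is just careful tracking of the numerical constants through \cref{TableNewParams}.
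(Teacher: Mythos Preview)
Your approach is essentially the same as the paper's. Both arguments work in the shifted square-root coordinates $\sqrt{X_{jj}^{(t)}+1}$, reduce the nuclear norm via Cauchy--Schwarz to an $\ell_2$ distance on those coordinates (the paper's decomposition is $a^2-\widetilde a^2 = 2a(a-\widetilde a)-(a-\widetilde a)^2$, yielding a $2\sqrt{K+n}\cdot(\E\|b\|_2^2)^{1/2}+\E\|b\|_2^2$ bound instead of your $2\sqrt{2K}\cdot R_t$, but this is an immaterial difference), telescope the exact targets via the fundamental-theorem-of-calculus identity, split the accumulated estimator error into a bias term bounded through \cref{cor-bias-theta1}--\cref{cor-bias-theta2} and a variance term bounded through \cref{lem-var-bound-on-xupdate}, and then sum over $t\le\Tin$. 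The paper carries out the summation directly (its terms \encircle{E} and \encircle{F}) rather than writing your explicit recursion $R_{t+1}^2\le(R_t+\beta_0)^2+V$, but the computation is identical; it uses $\Tr X^{(t)}\le K$ for the true iterate (citing \cref{TrXleqKprime}) where you use $\Tr\widetilde X^{(t)}\le K$, and you are right to flag that this intertwines with \cref{lem-newalg-distbnd-opt} and should really be a joint induction --- the paper leaves that implicit.
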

\begin{proof} By the definition of $\dnorm{{}\cdot{}}$ and some algebra, we have
\begin{align*}
	\E \dnorm{X^{\lr{t}} - \widetilde{X}^{\lr{t}}} &= \E \sum\limits_{i = 1}^n \abs{ X^{\lr{t}}_{ii} - {\widetilde{X}^{\lr{t}}}_{ii} }\\
&= \E\sum\limits_{i = 1}^n \abs{ \lr{\sqrt{X^{\lr{t}}_{ii} + 1}}^2 - \lr{\sqrt{\widetilde{X}^{\lr{t}}_{ii} + 1}}^2}	\\							&= \E \sum\limits_{i = 1}^n 2\sqrt{X^{\lr{t}}_{ii} + 1} \abs{\sqrt{X^{\lr{t}}_{ii} + 1} - \sqrt{{\widetilde{X}^{\lr{t}}}_{ii} + 1}}+ \E\sum\limits_{i =1}^n \abs{\sqrt{X^{\lr{t}}_{ii} + 1} - \sqrt{{\widetilde{X}^{\lr{t}}}_{ii} + 1}}^2.
\end{align*} Next, apply Cauchy-Schwarz inequality and Lemma~\ref{lem-newalg-distbnd-opt} to get 
\begin{align*}
		\E \dnorm{X^{\lr{t}} - \widetilde{X}^{\lr{t}}} 	&\leq 2\E \sqrt{\Tr X^{\lr{t}} + n} \sqrt{\sum\limits_{i = 1}^n \lr{\sqrt{X^{\lr{t}}_{ii} + 1} - \sqrt{{\widetilde{X}^{\lr{t}}}_{ii} + 1}}^2} + \E \sum\limits_{i = 1}^n \lr{\sqrt{X^{\lr{t}}_{ii} + 1} - \sqrt{{\widetilde{X}^{\lr{t}}}_{ii} + 1}}^2\\
		 &\leq 2\sqrt{K + n} \underbrace{\E \sqrt{\sum\limits_{i = 1}^n \lr{\sqrt{X^{\lr{t}}_{ii} + 1} - \sqrt{{\widetilde{X}^{\lr{t}}}_{ii} + 1}}^2}}_{\text{$\mbox{\numcircledtikz{A}}$}} + \underbrace{\E \sum\limits_{i =1}^n \lr{\sqrt{X^{\lr{t}}_{ii} + 1} - \sqrt{{\widetilde{X}^{\lr{t}}}_{ii} + 1}}^2}_{\text{$\mbox{\numcircledtikz{B}}$}}. \numberthis\label[ineq]{new-alg-rate1}
\end{align*}  We first bound $\mbox{\numcircledtikz{B}}$. We can write a recursive formulation for as follows. 
\begin{align*}
	\sqrt{\widetilde{X}^{\lr{t}}_{ii} + 1} - \sqrt{X^{\lr{t}}_{ii} + 1} &= \underbrace{\left( \sqrt{\widetilde{X}^{\lr{0}}_{ii}  + 1} - \sqrt{X^{\lr{0}}_{ii} + 1} \right)}_{\text{$\mbox{\numcircledtikz{C}}$}} + \underbrace{\sum_{s = 1}^t \left( \widehat{\theta}^{\lr{s}}_i - \sqrt{X^{\lr{s}}_{ii} + 1} + \sqrt{X^{\lr{s-1}}_{ii} + 1} \right)}_{\text{$\mbox{\numcircledtikz{D}}$}}.
\end{align*} We invoke Johnson-Lindenstrauss lemma (restated in Lemma~\ref{thm-jl-statement} for completeness) and choose the accuracy parameter for it to be such that $\abs{X^{\lr{0}}_{ii} - \widetilde{X}^{\lr{0}}_{ii}} \leq \widetilde{\varepsilon} X^{\lr{0}}_{ii} = \frac{\varepsilon}{100 \lr{\log n}^{10}} X_{ii}^{(0)}$. Therefore, $\mbox{\numcircledtikz{C}} \leq  \tfrac{\widetilde{\varepsilon}}{2} \sqrt{X^{\lr{0}}_{ii}  + 1}  =  \frac{\varepsilon}{200 \lr{\log n}^{10}} \sqrt{X^{\lr{0}}_{ii} + 1}  $. Summing over all indices and taking expectations gives
\begin{align*}
\text{$\mbox{\numcircledtikz{B}}$} &\leq \E \sum\limits_{i=1}^n \left(\frac{\varepsilon}{200 \lr{\log n}^{10}} \sqrt{X^{\lr{0}}_{ii} + 1} + \sum\limits_{s = 1}^t \left( \widehat{\theta}^{\lr{s}}_i - \sqrt{X^{\lr{s}}_{ii} + 1} + \sqrt{X^{\lr{s-1}}_{ii} + 1} \right) \right)^2\\
														&\stackrel{\mbox{\numcircledtikz{1}}}{\leq} 2\frac{\varepsilon^2}{40000 \lr{\log n}^{20}} (\Tr X^{\lr{0}} + n)  + 2 \E\norm{\sum_{s = 1}^t \left( \widehat{\theta}^{\lr{s}} - \sqrt{\diag \lr{X^{\lr{s}}} + \ones} + \sqrt{\diag \lr{X^{\lr{s-1}}} + \ones} \right) }_2^2 \\
													    &\stackrel{\mbox{\numcircledtikz{2}}}{\leq} \frac{K \varepsilon^2}{10000 \lr{\log n}^{20}}  + 2 \underbrace{\E\norm{\sum_{s = 1}^t \left( \widehat{\theta}^{\lr{s}} - \sqrt{\diag \lr{X^{\lr{s}}} + \ones} + \sqrt{\diag \lr{X^{\lr{s-1}}} + \ones} \right) }_2^2}_{\text{$\mbox{\numcircledtikz{E}}$}},
\end{align*} where $\mbox{\numcircledtikz{1}}$ is by Cauchy-Schwarz inequality, and $\mbox{\numcircledtikz{2}}$ by  Lemma~\ref{lem-newalg-distbnd-opt}. A subtle point here is that even though the very first iterate in the algorithm satisfies a stronger inequality, namely, $\Tr X^{(0)} \leq n$, we \emph{cannot} use this stronger bound because we care about \emph{all} iterations, and this stronger bound doesn't hold later on. We now bound $\mbox{\numcircledtikz{E}}$ below. Note that since the random variable $\widehat{\theta}^{\lr{s}}$ is not entirely unbiased, the term $\mbox{\numcircledtikz{E}}$ is not the variance. Let $\theta^{\lr{s}} \defeq \E \widehat{\theta}^{\lr{s}}$ and $d^{(s)} = \sqrt{\diag \lr{X^{\lr{s}}} + \ones} - \sqrt{\diag \lr{X^{\lr{s-1}}} + \ones}$. Then,
\begin{align*}
	\mbox{\numcircledtikz{E}} &= \E \norm{\sum\limits_{s = 1}^t \lr{\widehat{\theta}^{\lr{s}} - \lr{ \sqrt{\diag \lr{X^{\lr{s}}} + \ones} - \sqrt{\diag \lr{X^{\lr{s-1}}} + \ones}}} }_2^2\\
	&= \E\norm{ \sum\limits_{s = 1}^t \lr{\widehat{\theta}^{\lr{s}} -\theta^{\lr{s}}  + \theta^{\lr{s}} - d^{\lr{s}} }}_2^2\\
	&= \E\sum\limits_{i = 1}^n \lr{\sum\limits_{s = 1}^t \lr{ \widehat{\theta}^{\lr{s}}_i - \theta^{\lr{s}}_{i} }^2 + \sum\limits_{s = 1}^t \lr{ \theta^{\lr{s}}_{i} - d^{\lr{s}}_{i} }^2 + 2\sum_{ s\neq \ell} \lr{ \widehat{\theta}^{\lr{s}}_{i} - \theta^{\lr{s}}_{i} }\lr{\theta^{\lr{\ell}}_{i} - d^{\lr{\ell}}_{i}} }\\
	&= \sum\limits_{s = 1}^t \E\norm{\widehat{\theta}^{\lr{s}} - \theta^{\lr{s}}}_2^2 + \sum\limits_{s = 1}^t \underbrace{ \sum\limits_{i = 1}^n  \lr{ \theta^{\lr{s}}_{i} - d^{\lr{s}}_{i} }^2}_{\text{$\mbox{\numcircledtikz{F}}$}} + 0\\
	&\leq \sum_{ s= 1}^t \lr{ \E \norm{\widehat{\theta}^{\lr{s}}}^2 + \mbox{\numcircledtikz{F}}},
\end{align*}  
where the last step is by the bound on variance by its second moment. Recall that we already have from Inequality~\ref{bias-squared-final}, $\mbox{\numcircledtikz{F}} \leq 400 n K^2 \eta^2 (\sqrt{2} (\varepsilon/n)^{400} + 2\dexp)$. Substitute this into the bound for $\mbox{\numcircledtikz{E}}$ and $\mbox{\numcircledtikz{B}}$, and apply the result of Lemma~\ref{lem-var-bound-on-xupdate} to bound $\E\|\widehat{\theta}^{\lr{s}}\|_2^2$; we choose $t = \Tin = \tfrac{1}{\varepsilon^2}$ and get 
\begin{align*}
	\mbox{\numcircledtikz{B}} &\leq \underbrace{\frac{K \varepsilon^2}{10000 \lr{\log n}^{20}} +  \frac{1}{\varepsilon^2}\lr{\underbrace{19600 \log(n/\varepsilon) K \eta^2 + 147000 K^2 \eta^2 \dexp}_{\text{second-moment bound from Lemma}~\ref{lem-var-bound-on-xupdate}} +  \underbrace{400 n K^2 \eta^2 \lr{\sqrt{2}(\varepsilon/n)^{400} + 2 \delta}}_{\text{squared error in bias}} }}_{\text{\mbox{\numcircledtikz{G}}}}. \numberthis\label[ineq]{stepszvar}
\end{align*} Next, we bound $\mbox{\numcircledtikz{A}}$ using Jensen's inequality, and use Inequality~\ref{stepszvar} in Inequality~\ref{new-alg-rate1} to get 
\begin{align*}
	\E \dnorm{X^{\lr{t}} - \widetilde{X}^{\lr{t}}} &\leq 2 \sqrt{K + n} \sqrt{\mbox{\numcircledtikz{G}}} + \mbox{\numcircledtikz{G}}. \numberthis\label[ineq]{dnorm-bnd}
\end{align*} Note that to bound $\mbox{\numcircledtikz{G}}$, we only need to take care of the second term in Inequality~\ref{stepszvar}, because the first term is already fixed, and the remaining can be fixed by appropriate choices of $\dexp$. We choose the step size to be \[ \eta = \varepsilon^2 \frac{1}{8 \times 10^4 (\log (n/\varepsilon))^{11}}. \numberthis\label{newalg-optstepsz}\] Substituting this in Inequality~\ref{stepszvar} gives \[\mbox{\numcircledtikz{G}} \leq  \frac{K \varepsilon^2 }{10^4 \lr{\log n}^{20}}+  \frac{ K\varepsilon^2}{6 \times 10^5 \lr{\log (n/\varepsilon)}^{21}} +   \frac{ K \varepsilon^2 n \dexp}{2500 \lr{\log (n/\varepsilon)}^{12}}  +  \frac{K \varepsilon^2 n^2 \lr{\sqrt{2} (\varepsilon/n)^{400} + 2\dexp}}{4 \times 10^5 \times \lr{\log (n/\varepsilon)}^{12}}. \] Plugging this back into Lemma~\ref{dnorm-bnd} with the value of $\dexp$ from Definition~\ref{defs-estimator} gives:
\begin{align*}
	\mbox{\numcircledtikz{G}} &\leq  \frac{K \varepsilon^2 }{10^4 \lr{\log n}^{20}}+  \frac{K\varepsilon^2}{ 6 \times 10^5 \lr{\log n}^{21}}  +   \frac{ 2 K \varepsilon^{403}}{\lr{\log (n/\varepsilon)}^{12} n^{389}} + \frac{3 K \varepsilon^{402} }{41 \lr{\log(n/\varepsilon)}^{12} n^{388}} \\
		&\leq K\varepsilon^2 \lr{ \frac{1}{10^4 \lr{\log n}^{20}} + \frac{1}{6 \times 10^5 \lr{\log (n/\varepsilon)}^{21}} + \frac{2\varepsilon^{401}}{\lr{\log(n/\varepsilon)}^{12} n^{389}} + \frac{3 \varepsilon^{402}}{41 n^{388} \lr{\log(n/\varepsilon)}^{12}} }\\
		&\leq K \varepsilon^2 \lr{ \frac{1}{5\times 10^3 \lr{\log n}^{20}} + \frac{6 \varepsilon^{401}}{\lr{\log n}^{20} n^{380}}}\\
		&\leq \frac{K\varepsilon^2}{4999 \lr{\log n}^{20}}
\end{align*} Plugging this back into Inequality~\ref{dnorm-bnd} and using $K = 40 n \lr{\log n}^{10}$ gives $\E \dnorm{X^{\lr{t}} - \widetilde{X}^{\lr{t}}} \leq 1.132 n\varepsilon$. Since Algorithm~\ref{alg-new} only uses the diagonal entries of $\widetilde{X}^{\lr{t}}$ at any iteration $t$, we can assume the off-diagonal entries exactly equal those in $X^{\lr{t}}$. Therefore $\widetilde{X}^{\lr{t}} - X^{\lr{t}}$ is a diagonal matrix. For a diagonal matrix $A$, we can see that $\dnorm{A} = \nnorm{A}$. Therefore, we have $\E \|X^{\lr{t}} - \widetilde{X}^{\lr{t}}\|_{\mathrm{nuc}} \leq  1.132 n \varepsilon$. 
\end{proof}

\subsection{The Expanded Domain Trick for Projection}\label{pf-proj-in-exp-int} 
The goal of this section is two-fold: first, we show that if the trace constraint is inactive, the projection step is simple and requires no trace normalization; second, we prove that the trace constraint remains inactive throughout the run of our algorithm. We remark that this is also the lemma where we choose the optimal number of iterations in the outer loop of Algorithm~\ref{alg-new}.

\begin{restatable}{lemma}{projectionstepnew}\label{proj-new}
Consider the mirror map $\Phi(X) = X \bullet \log X - \Tr X$   over the domain $\{X: X\succeq 0, \Tr X \leq K\}$. Assuming that the trace inequality is never active, we have that $\exp Y  = \argmin_{X \succeq 0, \Tr X \leq  K} \Phi(X) - Y \bullet X$.
\end{restatable}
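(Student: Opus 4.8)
The plan is to show that the minimizer of the unconstrained problem $\min_{X \succeq 0} \Phi(X) - Y \bullet X$ is exactly $\exp Y$, and then invoke the hypothesis that the trace constraint $\Tr X \leq K$ is never active to conclude that the constrained and unconstrained minimizers coincide. First I would note that $\Phi(X) = X \bullet \log X - \Tr X$ is strictly convex on $\symm_{>0}^n$ (this follows from \cref{lem-new-alg-strong-conv}, or directly from strict convexity of $x \log x - x$ composed with eigenvalues via \cref{lewis95}), so the objective $G(X) \defeq \Phi(X) - Y \bullet X$ is strictly convex and has at most one minimizer over any convex set. Hence it suffices to exhibit a single stationary point in the interior of $\{X \succeq 0\}$ and check it is $\exp Y$.

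The key computation is the gradient of $\Phi$. For the spectral function $\Phi(X) = \Tr(X \log X) - \Tr X = \Tr \varphi(X)$ with scalar $\varphi(x) = x\log x - x$, the matrix derivative is $\nabla \Phi(X) = \varphi'(X) = \log X$, since $\varphi'(x) = \log x$. (I would cite the standard fact that for a spectral function $\Tr f(X)$ one has $\nabla \Tr f(X) = f'(X)$, which is implicit in the use of \cref{judit-nemirov} earlier.) Therefore the first-order optimality condition for the unconstrained problem over the open cone $\symm_{>0}^n$ is
\[
\nabla G(X) = \log X - Y = 0,
\]
whose unique solution is $X = \exp Y$, which indeed lies in $\symm_{>0}^n$ since $\exp$ of a symmetric matrix is positive definite. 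By strict convexity, $\exp Y$ is the unique global minimizer of $G$ over $\symm_{\geq 0}^n$ (the boundary $X \succeq 0$ singular cannot contain the minimizer because $\Phi$, hence $G$, tends to its boundary behavior controlled by $x\log x$, and more simply because the interior stationary point of a strictly convex function is already the global min over the whole closed domain).

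It remains to pass from the constraint $X \succeq 0$ to the constraint $\{X \succeq 0, \Tr X \leq K\}$. Since the latter is a subset of the former, the minimum of $G$ over the smaller set is at least the minimum over the larger set; but by hypothesis the trace inequality is never active along the algorithm's trajectory, which is exactly the statement (established in \cref{lem-newalg-distbnd-opt-bigoh}, via $\Tr \exp Y^{(t)} < K$) that the unconstrained minimizer $\exp Y$ satisfies $\Tr(\exp Y) < K$ and hence is feasible for the constrained problem. A feasible point that minimizes over a superset also minimizes over the subset, so $\exp Y = \argmin_{X \succeq 0,\ \Tr X \leq K} \Phi(X) - Y \bullet X$, as claimed. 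The only genuine subtlety — and the step I would be most careful about — is justifying $\nabla \Phi(X) = \log X$ cleanly (including that no extra term arises from the $-\Tr X$ summand, whose gradient is $-I$, so that $\nabla G = \log X - I - Y$ would give minimizer $\exp(Y + I)$ if one is not careful with the precise form of $\varphi$); one must be consistent about whether the $-\Tr X$ is absorbed, and here $\varphi(x) = x \log x - x$ gives $\varphi'(x) = \log x$ exactly, so the clean answer $\exp Y$ is correct. Everything else is a routine application of strict convexity and feasibility.
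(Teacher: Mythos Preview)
Your proof is correct and reaches the same conclusion as the paper's, via essentially the same idea: identify the first-order stationary point and use the hypothesis that the trace constraint is inactive to conclude it is the constrained minimizer. The execution differs slightly. The paper diagonalizes, writes the problem in the eigenvalues $\lambda_i$, forms the Lagrangian with a multiplier $\nu$ for the trace constraint, obtains $\lambda_i^* = \exp(y_i - \nu)$, and then invokes complementary slackness (inactive constraint $\Rightarrow \nu = 0$) to get $X^* = \exp Y$. You instead compute the matrix gradient $\nabla \Phi(X) = \log X$ directly, solve the unconstrained problem over $\symm_{>0}^n$, and argue that the unconstrained minimizer $\exp Y$ is feasible (since $\Tr \exp Y < K$ by assumption), hence also solves the constrained problem. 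Your route avoids the Lagrangian machinery and is arguably cleaner; the paper's route makes explicit what would change if the constraint were active (one would normalize by $e^{-\nu}$). Your care about whether the $-\Tr X$ term contributes to the gradient is well placed and correctly resolved: with $\varphi(x) = x\log x - x$ one has $\varphi'(x) = \log x$, so $\nabla G(X) = \log X - Y$ and the minimizer is indeed $\exp Y$.
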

\begin{proof}
	We wish to solve 
	\begin{equation}
	\begin{aligned}	
	\min X \bullet \log X - \Tr X - X \bullet Y, \mbox{ subject to } X \succeq 0, \Tr X \leq K.
	\end{aligned}
	\end{equation} By diagonalizing $X$ as $X = U\Lambda U^\top$ and $Y$ as $Y = V \Sigma V^\top$, we can rewrite this problem as
	\begin{equation}
	\begin{aligned}
	\min \sum_{i=1}^n \lambda_i \log \lambda_i - \sum_{i=1}^n \lambda_i - \sum_{i=1}^n \lambda_i \widetilde{y}_i, \mbox{ subject to } \lambda_i \geq 0, \sum_{i=1}^n \lambda_i \leq K,
	\end{aligned}
	\end{equation} where $\widetilde{y}_i$ is the $i$'th diagonal entry of the matrix $U^\top Y U$. The Lagrangian is given by $\mathcal{L}(\lambda_i, \nu) = \sum_{i=1}^n \lambda_i \log \lambda_i - \sum_{i=1}^n \lambda_i - \sum_{i=1}^n \lambda_i \widetilde{y}_i + \nu\lr{\sum_{i = 1}^n \lambda_i - K}$. Setting the gradient to zero gives $\nabla_{\Lambda} \mathcal{L} = \ones + \log \lambda^* - \ones - \widetilde{y} + \nu \ones = 0$,  which gives  $\lambda_i^* = \exp(\widetilde{y}_i - \nu)$ for all $i$. Since we assumed that the trace constraint is \emph{not} active, it means, by complementary slackness, $\nu = 0$ (note that this assumption is justified because  we prove it in Lemma~\ref{lem-newalg-distbnd-opt}). This gives $\lambda_i^* = \exp(\widetilde{y}_i)$ which translates to $X^* = \exp(Y)$, as claimed. 
\end{proof} 
 Before we start the second proof, we need the following result.
\begin{lemma}\label{lem-md-errbnd-approx} Fix a norm $\norm{{}\cdot{}}$. Given an $\alpha$-strongly convex mirror map $\Phi: \D \rightarrow \reals$, a convex, $G$-Lipschitz objective $f: \X \rightarrow \reals$, the diameter of $\X \cap \D$ denoted by $D \defeq \sup\limits_{x \in \mathcal{X} \cap \mathcal{D}} \Phi(X) - \inf\limits_{x \in \X \cap \D}\Phi\lr{ x }$, step size $\eta$, and parameter $\delta^{\prime}$ where $\E\norm{x^{\lr{t}} - \widetilde{x}^{\lr{t}}} \leq \delta^{\prime}$, running mirror descent for $T$ iterations gives iterates $\{\widetilde{x}^{\lr{t}}\}_{t = 1}^T$ that satisfy the inequality \[f\left( \frac{1}{T-1}\sum_{t = 1}^{T-1} \widetilde{x}^{\lr{t}} \right) - f\lr{x^*} \leq \frac{\eta G^2 }{2\alpha} + \frac{1}{\eta \lr{T-1}} ( D_{\Phi} (x^*, \widetilde{x}^{\lr{1}}) - D_{\Phi}(x^*, \widetilde{x}^{\lr{T}}) )  + \delta^{\prime} G.\]
\end{lemma}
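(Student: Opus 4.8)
The plan is to run the classical eager-mirror-descent analysis (cf.\ Theorem~4.2 of \cite{Bubeck}) and to push the discrepancy between the exact iterate and the algorithm's approximate iterate into a single additive $\delta' G$ term via the Lipschitz bound on $f$. By convexity of $f$ and Jensen's inequality, it suffices to establish, in expectation, the per-step inequality
\[
\eta\bigl(f(\widetilde x^{(t)}) - f(x^*)\bigr) \;\le\; \tfrac{\eta^2 G^2}{2\alpha} \;+\; D_\Phi(x^*, x^{(t)}) \;-\; D_\Phi(x^*, x^{(t+1)}) \;+\; \eta\,\delta' G ,
\]
where $x^{(t)}$ denotes the exact iterate (the gradient step is anchored at $\nabla\Phi(x^{(t)})$, only $\nabla f$ being evaluated at the approximate point $\widetilde x^{(t)}$) and $x^{(t+1)} = \argmin_{x\in\X\cap\D} D_\Phi(x, y^{(t+1)})$ is the exact Bregman projection of the gradient step $\nabla\Phi(y^{(t+1)}) = \nabla\Phi(x^{(t)}) - \eta\nabla f(\widetilde x^{(t)})$.

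To prove this per-step bound I would start from $\eta\bigl(f(\widetilde x^{(t)}) - f(x^*)\bigr) \le \eta\langle\nabla f(\widetilde x^{(t)}),\, \widetilde x^{(t)} - x^*\rangle$ and split the inner product as $\eta\langle\nabla f(\widetilde x^{(t)}),\, \widetilde x^{(t)} - x^{(t)}\rangle + \eta\langle\nabla f(\widetilde x^{(t)}),\, x^{(t)} - x^*\rangle$. The first piece is at most $\eta G\,\|\widetilde x^{(t)} - x^{(t)}\|$, whose expectation is at most $\eta\,\delta' G$ by hypothesis. For the second piece I would substitute the gradient step to write it as $\langle\nabla\Phi(x^{(t)}) - \nabla\Phi(y^{(t+1)}),\, x^{(t)} - x^*\rangle$, apply the Bregman three-point identity to obtain $D_\Phi(x^*, x^{(t)}) + D_\Phi(x^{(t)}, y^{(t+1)}) - D_\Phi(x^*, y^{(t+1)})$, and then invoke the generalized Pythagorean inequality for the projection (using $x^*\in\X\cap\D$) to replace $-D_\Phi(x^*, y^{(t+1)})$ by $-D_\Phi(x^*, x^{(t+1)}) - D_\Phi(x^{(t+1)}, y^{(t+1)})$. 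Expanding $D_\Phi(x^{(t)}, y^{(t+1)}) - D_\Phi(x^{(t+1)}, y^{(t+1)})$ and using the gradient step once more reduces it to $-D_\Phi(x^{(t+1)}, x^{(t)}) + \eta\langle\nabla f(\widetilde x^{(t)}),\, x^{(t)} - x^{(t+1)}\rangle$; by $\alpha$-strong convexity of $\Phi$ and Cauchy--Schwarz this is at most $-\tfrac{\alpha}{2}\|x^{(t)} - x^{(t+1)}\|^2 + \eta G\|x^{(t)} - x^{(t+1)}\|$, whose maximum over the norm is $\tfrac{\eta^2 G^2}{2\alpha}$. Collecting the pieces yields the displayed per-step inequality.

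Finally I would sum the per-step inequality over $t = 1, \dots, T-1$: the Bregman-divergence terms telescope to $D_\Phi(x^*, x^{(1)}) - D_\Phi(x^*, x^{(T)})$, and under the harmless convention that the starting iterate is exact ($\widetilde x^{(1)} = x^{(1)}$) and that the terminal iterate, never used to form a gradient, is not approximated ($\widetilde x^{(T)} = x^{(T)}$), this equals $D_\Phi(x^*, \widetilde x^{(1)}) - D_\Phi(x^*, \widetilde x^{(T)})$; dividing by $\eta(T-1)$ and applying Jensen's inequality to the average $\frac{1}{T-1}\sum_{t=1}^{T-1}\widetilde x^{(t)}$ gives the claim.

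I expect the main obstacle to be one of bookkeeping rather than of any single hard estimate: one must arrange the splitting so that the approximation error enters \emph{only} through the local term $\eta G\|\widetilde x^{(t)} - x^{(t)}\|$ and never leaks into the telescoped divergences. This is precisely why it is essential that the dual variable be maintained exactly (so that all divergences in the telescope involve the exact iterates $x^{(t)}, x^{(t+1)}$) and that only the point at which $\nabla f$ is evaluated is perturbed, in contrast to stochastic mirror descent, which perturbs the gradient itself. A secondary, minor point is that since $\E\|x^{(t)} - \widetilde x^{(t)}\| \le \delta'$ holds only in expectation, the conclusion should be read in expectation over the randomness used to produce the iterates $\widetilde x^{(t)}$.
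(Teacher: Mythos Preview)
Your proposal is correct and matches the paper's intended approach exactly: the paper does not give a full proof of this lemma, stating only that it ``can be derived the same way as Theorem~4.2 in \cite{Bubeck}, by incorporating the error in iterate, just as we did in the proof of \cref{thm-almd}.'' Your outline carries out precisely this modification of the eager mirror-descent analysis, splitting $\langle\nabla f(\widetilde x^{(t)}), \widetilde x^{(t)} - x^*\rangle$ into the $\delta' G$ approximation term plus the standard telescoping Bregman terms, and your bookkeeping remark about the conventions $\widetilde x^{(1)} = x^{(1)}$ and $\widetilde x^{(T)} = x^{(T)}$ is exactly the minor identification needed to match the tildes in the statement (the former is explicit in \cref{alg-almd}, the latter is harmless since $\widetilde x^{(T)}$ is never used for a gradient evaluation).
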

This can be derived the same way as Theorem $4.2$ in \cite{Bubeck}, by incorporating the error in iterate, just as we did in the proof of Theorem~\ref{thm-almd}.
\newdistbndfromopt*
\begin{proof}\label{proof-newdistbndfromopt} We prove this by induction on the iteration count.

\textbf{Induction Hypothesis.} We assume that for any iteration $t$, the primal iterate is not too far from the optimal point, satisfying $\dnorm{\widetilde{X}^{\lr{t}} - X^*} \leq 38 n \lr{\log n}^{10}$.

\textbf{Base Case.} Since $Y^{(1)} = 0$, the primal iterate $\widetilde{X}^{(1)} = I$. We also know that the optimal point satisfies $\Tr X^* = n$. Therefore, $\dnorm{\widetilde{X}^{\lr{1}} - X^*} \leq 2n \leq 38 n \lr{\log n}^{10}$. The hypothesis is thus true for the base case, $t=1$.

\textbf{Induction.} Suppose that the hypothesis is true for some $ t = t^{\prime}$. We prove that this would make it true for $t = t^{\prime}+1$ as well. Our technique is to first prove a weak bound for $\dnorm{\widetilde{X}^{(t)} - X^*}$ using triangle inequality of norms; then we boost our bound (and obtain the stronger guarantee of the induction hypothesis) by invoking strong convexity of Bregman Divergence. We now show the details. 
\begin{align*} 
	\dnorm{\widetilde{X}^{\lr{t^{\prime}+1}} - X^*} &\leq \dnorm{\widetilde{X}^{\lr{t^{\prime}+1}}  - \widetilde{X}^{\lr{t^{\prime}}}} + \dnorm{ \widetilde{X}^{\lr{t^{\prime}}} - X^*}\\
	&\leq \underbrace{\nnorm{\widetilde{X}^{\lr{t^{\prime}+1}}  - \widetilde{X}^{\lr{t^{\prime}}}}}_{\text{Inequality~\ref{almd-chain-conc}}} + \underbrace{\dnorm{ \widetilde{X}^{\lr{t^{\prime}}} - X^*}}_{\text{induction hypothesis}}.\\
									   &\leq \underbrace{\frac{2 \eta G}{\alpha}}_{\text{\mbox{\numcircledtikz{A}}}}  + 38 n \lr{\log n}^{10}. \numberthis\label[ineq]{dist-iter-opt-step1}
\end{align*} The first step here used the fact that $\dnorm{M} \leq \nnorm{M}$(We can show this by H\"older's Inequality, $\inner{X}{Y} \leq \opnorm{Y}\nnorm{X}$. Select $Y = \diag \lr{\mbox{sgn}\lr{\diag X}}$, that is, $Y$ is a  diagonal matrix with $Y_{ii} = \mbox{sgn}\lr{X_{ii}}$). We can plug in parameters of the mirror map and the step size, as displayed in Table~\ref{TableNewParams}, to obtain: 
\begin{align*}
\mbox{\numcircledtikz{A}}			&= 2 \cdot \frac{\varepsilon^2}{80000 (\log (n/\varepsilon))^{11}} \cdot 2 \cdot 4 (40 n (\log n)^{10}) \leq  \frac{n \varepsilon^2}{125}.
\end{align*} Plugging this back into Equation~\ref{dist-iter-opt-step1} while using $\varepsilon < 1/2$ and $K = 40 n (\log n)^{10}$ gives $\dnorm{\widetilde{X}^{\lr{t^{\prime}+1}} - X^*} \leq  \frac{n \varepsilon^2}{125} + 38 n \lr{\log n}^{10}$, which implies that $\Tr (\widetilde{X}^{(t^{\prime})}) < ( n( \varepsilon^2/125  + 38 (\log n)^{10} ) +n) < 40 n (\log n)^{10} = K$, which says that the trace constraint on the iterates is not active on the first $t^{\prime}$ iterations. 

Since the trace constraint is not active on the first $t^{\prime}$ iterations, the projection step does not require a normalization. This implies that Algorithm~\ref{alg-almd} now is identical to Approximate Mirror Descent with this mirror map and objective. We now recall Lemma~\ref{lem-md-errbnd-approx} for $T = t^{\prime}+1$:\[ f\left( \frac{1}{t^{\prime}}\sum_{t = 1}^{t^{\prime}} \widetilde{X}^{\lr{t}} \right) - f\lr{X^*} \leq \frac{\eta G^2 }{2\alpha} + \frac{1}{\eta t^{\prime}} ( D_{\Phi} (X^*, \widetilde{X}^{\lr{1}}) - D_{\Phi}(X^*, \widetilde{X}^{\lr{t^{\prime}+1}}) )  + \delta^{\prime} G. \]  Multiplying throughout by $\eta t^{\prime}$ and rearranging the terms gives
\begin{align*}
	D_{\Phi}( X^*, \widetilde{X}^{ \left( t^{\prime}+1 \right)} ) &\leq \frac{\eta^2 G^2 t^{\prime}}{2\alpha} + D_{\Phi}(X^*, \widetilde{X}^{\left( 1 \right)}) - \eta t^{\prime} \underbrace{\left( f \left(\frac{1}{t^{\prime}} \sum_{t = 1}^{t^{\prime}} \widetilde{X}^{\left( t \right)} \right) - f\left(X^* \right) \right)}_{\text{positive}} + \eta t^{\prime} \delta^{\prime} G \numberthis\label[ineq]{newalg-distbnd-phiineq} 
\end{align*} Since $\Phi$ is $\alpha$-strongly convex in the nuclear norm, we have $D_{\Phi}(X^*, \widetilde{X}) \geq \frac{\alpha}{2}\|X^* - \widetilde{X}\|_{\mathrm{nuc}}^2$. Since this is at least $\frac{\alpha}{2} \dnorm{X^* - \widetilde{X}}^2$. Chaining this with Inequality~\ref{newalg-distbnd-phiineq} gives
\begin{align*}
	\dnorm{\widetilde{X}^{\lr{t^{\prime} + 1}} - X^*}^2 &\leq \underbrace{\frac{\eta^2 G^2 t^{\prime}}{\alpha^2}}_{\text{$\mbox{\numcircledtikz{B}}$}} + \underbrace{\frac{2 D_{\Phi}(X^*, \widetilde{X}^{\left( 1 \right)}) }{\alpha}}_{\text{$\mbox{\numcircledtikz{C}}$}} + \underbrace{\frac{2}{\alpha}\eta t^{\prime} \delta^{\prime} G}_{\text{$\mbox{\numcircledtikz{D}}$}}, \numberthis\label{norm-sqrd-def-bnd}
\end{align*} We now bound each of the terms on the right-hand side. We remark that this is actually where we choose the appropriate value of $\Tout$.
\begin{align*}
	\mbox{\numcircledtikz{B}} &= \frac{\eta^2 G^2 T_{\mathrm{inner}} T_{\mathrm{outer}}}{\alpha^2} \\
			&= \frac{\varepsilon^4}{64\times 10^8 \lr{\log (n/\varepsilon)}^{22} } \cdot 4  \cdot \frac{1}{\varepsilon^2} \cdot \frac{1}{\varepsilon} 24\times 10^5 \lr{\log (n/\varepsilon)}^{11}\log n \cdot 16 \lr{40 n \lr{\log n}^{10}}^2\\
			&\leq 40\varepsilon n^2 \lr{\log n}^{10} 
\end{align*} To bound the second term $\mbox{\numcircledtikz{C}} = \frac{2D_{\Phi}(\widetilde{X}^{(1)}, X^*)}{\alpha}$, we need to compute $D_{\Phi}(\widetilde{X}^{(1)}, X^*)$. Recall that $\widetilde{X}^{(1)} = I$ by our algorithm. Therefore, $\Phi(\widetilde{X}^{(1)}) = -n$ and $\nabla \Phi(\widetilde{X}^{(1)}) = 0$. Applying H\"older's inequality gives $\Phi(X^*) \leq \Tr X^* \log \opnorm{X^*} \leq n \log n$. Therefore  $D_{\Phi}(X^*, \widetilde{X}^{(1)}) \leq n \log n$. Now we go  back to the quantity we were trying to bound:
\begin{align*}
	\mbox{\numcircledtikz{C}} &\leq 2 \cdot n \log n \cdot 4 (40 n (\log n)^{10}) \leq 320 n^2 \lr{\log n}^{11}.
\end{align*} Finally, the last term is: 
\begin{align*}
	\mbox{\numcircledtikz{D}} &= \frac{2}{\alpha} \eta \Tin \Tout \delta^{\prime} G \leq 2 \cdot 4K \cdot \frac{30 \log n}{\varepsilon} \cdot 1.132 n \varepsilon \cdot 2 = 21735 n^2 \lr{\log n}^{11}
\end{align*} Summing these terms and plugging back into Inequality~\ref{norm-sqrd-def-bnd} gives 
\begin{align*}
	\dnorm{\widetilde{X}^{(t^{\prime} + 1)} - X^*}^2 &\leq n^2 ( 40 \varepsilon (\log n)^{10} + 320 (\log n)^{11} + 21735 (\log n)^{11}). \\
	&< n^2 (0.77  (\log n)^{20} + 17 (\log n)^{20} + 1150 (\log n)^{20} )\\
	&\leq 1168 n^2 \lr{\log n}^{20} \leq 35 n \lr{\log n}^{10},
\end{align*}which completes the induction. Therefore we have $\dnorm{\widetilde{X}^{\lr{t}} - X^*} \leq 38 n \lr{\log n}^{10}$ for all $t$. Since $\Tr X^* = n$, this proves $\Tr \widetilde{X}^{\lr{t}} < 40 n \lr{\log n}^{10} = K$. 
\end{proof} 

\subsection{Error bound}\label{sec-app-errorbound}
Finally, we put together all the parameters derived above to obtain our claimed error bound. 
\begin{restatable}{lemma}{errbnd}\label{lem-err-bnd} Running Algorithm~\ref{alg-new} gives an output for (\ref{final-prob}) that has an error bound of $K\varepsilon$.
\end{restatable}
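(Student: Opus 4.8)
The plan is to obtain the claimed error bound directly from the convergence guarantee \cref{errbnd-almd} of approximate lazy mirror descent, once the parameters collected in \cref{TableNewParams} are plugged in. First I would verify that \cref{alg-new} is a genuine instance of \cref{alg-almd}: the mirror map is $\Phi(X) = X\bullet\log X - \Tr X$ on $\D = \{X : X\succeq 0,\ \Tr X \leq K\}$, the objective is $\widehat f$ from \cref{final-prob}, and the underlying norm is the nuclear norm. The key point here is \cref{lem-newalg-distbnd-opt}: the trace constraint is never active along the entire run, so by \cref{proj-new} the exact projection step is simply $X^{\lr{t}} = \exp Y^{\lr{t}}$ with no trace normalization (this is \cref{newalg-gradproj}), and the approximate projection performed by \cref{alg-new} perturbs only the diagonal of this matrix. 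Combining \cref{pf-algnew-distbndinalg}, which controls how the estimation error grows across the $\Tin$ inner iterations, with the observation that the high-accuracy Johnson--Lindenstrauss projection at the top of each outer iteration re-synchronizes $\widetilde X$ with $X = \exp Y$ to within a far smaller error, one gets the invariant $\E\|X^{\lr{t}} - \widetilde X^{\lr{t}}\|_{\mathrm{nuc}} \leq \delta$ with $\delta = 1.132\,n\varepsilon$ holding at \emph{every} iteration $t$. Together with $D = K\log K$ (\cref{lem_new_set_diam}), $\alpha = 1/(4K)$ (\cref{lem-new-alg-strong-conv}), and the Lipschitz constant $G = 2$ (proved exactly as \cref{lem_curr_alg_lips}, since the objective and its gradient are unchanged from \cref{alg-curr}), all hypotheses of \cref{thm-almd} are satisfied.

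Next I would substitute into \cref{errbnd-almd}. The total iteration count is $T = \Tout(1 + \Tin)$, and from \cref{TableNewParams} one computes $\Tout\,\Tin\,\eta = \tfrac{1}{\varepsilon}\cdot\tfrac{24\times 10^5}{8\times 10^4}\,\log n = \tfrac{30\log n}{\varepsilon}$, so $T\eta \geq 30\log n/\varepsilon$. Hence
\begin{align*}
\E\,\widehat f(\widetilde X^{\lr{t^*}}) - \widehat f(X^*)
&\leq \frac{D}{T\eta} + \frac{2\eta G^2}{\alpha} + \delta G \\
&\leq \frac{K\log K}{30\log n}\,\varepsilon \;+\; \frac{32\,K\,\varepsilon^2}{8\times 10^4\,(\log(n/\varepsilon))^{11}} \;+\; 2.264\,n\,\varepsilon .
\end{align*}
I would then bound each of the three terms by a fraction of $K\varepsilon$. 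Using $K = 40\,n(\log n)^{10}$ one has $\log K = O(\log n)$ with an explicitly small leading constant, so the first term is well below $K\varepsilon$; the second term is $O(K\varepsilon^2) = o(K\varepsilon)$; and since $n$ is a $(\log n)^{-10}$ fraction of $K$, the third term $2.264\,n\varepsilon$ is negligible compared to $K\varepsilon$. Adding the three pieces yields $\E\,\widehat f(\widetilde X^{\lr{t^*}}) - \widehat f(X^*) \leq K\varepsilon$, the asserted error bound for \cref{final-prob}; since $t^*$ is drawn uniformly and $\widehat f$ is convex, this bound applies to the single returned iterate $\widetilde X^{\lr{t^*}} = \exp Y^{\lr{t^*}}$.

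The main obstacle is arithmetic rather than conceptual: the polylogarithmic exponents and numerical constants in \cref{TableNewParams} were reverse-engineered --- in \cref{pf-algnew-distbndinalg}, \cref{lem-newalg-distbnd-opt}, and \cref{algcurr-cheby-cost-alg} --- precisely so that the three terms of \cref{errbnd-almd} sum to at most $K\varepsilon$ rather than merely $\widetilde{\Ord}(K\varepsilon)$, so the delicate part is carrying these constants through and verifying the $\log K$-versus-$\log n$ comparison. Two secondary points deserve a check: that the ``$+1$'' in $T = \Tout(1+\Tin)$ coming from the per-outer-iteration high-accuracy step is harmless (it only increases $T$, hence only shrinks the $D/(T\eta)$ term); and that the invariant of \cref{pf-algnew-distbndinalg}, stated for $t\leq\Tin$, indeed holds for all $t$ because each outer iteration restarts the error accumulation from the much smaller Johnson--Lindenstrauss reset error.
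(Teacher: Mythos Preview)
Your proposal is correct and follows essentially the same approach as the paper: plug the parameters from \cref{TableNewParams} into \cref{errbnd-almd} and bound each of the three terms $D/(T\eta)$, $2\eta G^2/\alpha$, and $\delta G$ by a fraction of $K\varepsilon$. The paper's proof is terser --- it simply writes out the three numerical bounds $0.29K\varepsilon$, $2\times 10^{-5}K\varepsilon$, and $11\times 10^{-4}K\varepsilon$ --- whereas you spend more care verifying that \cref{alg-new} really is an instance of \cref{alg-almd} (trace constraint inactive via \cref{lem-newalg-distbnd-opt}, distance invariant via \cref{pf-algnew-distbndinalg}, resets at each outer iteration), but the substance is the same.
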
 Our algorithm is in the framework of approximate lazy mirror descent, with error bound given by Theorem~\ref{thm-almd}, restated below. \thmalmd* 
\begin{proof} Our proof involves plugging in the values of the parameters (from Table~\ref{TableNewParams}) in the above bound. Since we assume $n\geq 4$, we use $\log n\leq \sqrt{n}$ in one of the calculations below. 
	\begin{align*}
		\frac{D}{T \eta} &= K\varepsilon \frac{\log K}{30 \log n}\leq K \varepsilon \frac{\log 40 + 6 \log n}{30 \log n} \leq 0.29 K \varepsilon. \\
	\frac{2\eta G^2}{\alpha} &= \frac{32 \varepsilon^2 K}{8 \times 10^4 \lr{\log n}^{11} }= \frac{K\varepsilon	}{2500 \lr{\log n}^{11}}\leq 2 \times 10^{-5} K\varepsilon \\
\delta G &= 1.132 n \varepsilon\leq \frac{K \varepsilon}{35 \lr{\log n}^{10}}\leq 11 \times 10^{-4} K\varepsilon 
	\end{align*} Summing these quantities gives the upper bound on the error to be $\varepsilon K$, as claimed.  
\end{proof}

\section{General Technical Results}\label{lin-al-results}
\begin{lemma}\label{lem_exp_bound-gaussian}
Given $a, b \in \reals^n$ , we have that $\E_{\zeta \sim \N(0, I)} \left( (\zeta^T a)^2 (\zeta^T b)^2 \right) \leq 3 \norm{a}_2^2 \norm{b}_2^2$.
\end{lemma}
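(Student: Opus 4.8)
The plan is to reduce the claim to a one-dimensional Gaussian fourth-moment computation. First I would introduce the scalars $X \defeq \zeta^T a$ and $Y \defeq \zeta^T b$. Since $\zeta \sim \N(0, I_n)$, linearity of the Gaussian implies that $X$ and $Y$ are each centered (univariate) Gaussians with $\E X^2 = \norm{a}_2^2$ and $\E Y^2 = \norm{b}_2^2$; we do not even need their joint law. The quantity to be bounded is then simply $\E\lr{X^2 Y^2}$.

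Next I would apply the Cauchy--Schwarz inequality to this expectation, viewing it as the inner product of $X^2$ and $Y^2$ in $L^2$: this gives $\E\lr{X^2 Y^2} \leq \sqrt{\E X^4}\,\sqrt{\E Y^4}$. It then remains only to invoke the standard formula for the fourth moment of a univariate Gaussian, namely $\E Z^4 = 3\sigma^4$ for $Z \sim \N(0,\sigma^2)$, applied with $\sigma^2 = \norm{a}_2^2$ and with $\sigma^2 = \norm{b}_2^2$. Chaining these yields $\E\lr{X^2 Y^2} \leq \sqrt{3\norm{a}_2^4}\,\sqrt{3\norm{b}_2^4} = 3\norm{a}_2^2 \norm{b}_2^2$, which is the claimed bound.

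The argument is entirely routine and I do not anticipate any real obstacle; the only minor care needed is the degenerate case $a = 0$ or $b = 0$, where both sides vanish and the inequality is trivial. As an alternative one could compute $\E\lr{X^2 Y^2}$ exactly via Isserlis' (Wick's) theorem as $\norm{a}_2^2\norm{b}_2^2 + 2\lr{a^T b}^2$ and then bound $\lr{a^T b}^2 \leq \norm{a}_2^2 \norm{b}_2^2$ by Cauchy--Schwarz; this route also shows that the constant $3$ is tight, attained when $a$ and $b$ are parallel.
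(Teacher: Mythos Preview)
Your proposal is correct and follows essentially the same route as the paper: apply Cauchy--Schwarz to $\E\lr{(\zeta^T a)^2(\zeta^T b)^2}$, then use that $\zeta^T a$ and $\zeta^T b$ are centered Gaussians together with the fourth-moment formula $\E Z^4 = 3\sigma^4$. Your additional remark about the exact value via Isserlis' theorem and the tightness of the constant is a nice bonus not present in the paper.
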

\begin{proof}
By Cauchy-Schwarz inequality, the functions $f_1$ and $f_2$ satisfy $\E_{\zeta \sim \N(0, I)} ( f_1\lr{\zeta} f_2\lr{\zeta} ) \leq \sqrt{\E_{\zeta} (f_1\lr{\zeta})^2 \E_{\zeta} (f_2\lr{\zeta})^2}$. Choose $f_1\lr{\zeta} = (\zeta^T a)^2$ and $f_2\lr{\zeta} = (\zeta^T b)^2$. Since $\zeta \sim \N(0, I)$ and all the coordinates of $\zeta$ are independent, $\Var(\zeta^T a) = \sum_{i = 1}^n \Var(\zeta_i a_i) = \sum_{i = 1}^n a_i^2 = \norm{a}_2^2$. Therefore $\zeta^T a \sim \N(0, \norm{a}_2^2)$. For $X \sim \N(0, \sigma^2)$, we have $\E X^4 = 3 \sigma^4$. Applying this to $\zeta^T a$ and $\zeta^T b$ proves the desired inequality.
\end{proof}

\end{appendices}

\end{document}